\documentclass{article}

\usepackage{arxiv}

\usepackage[utf8]{inputenc} 
\usepackage[T1]{fontenc}    
\usepackage{hyperref}       
\usepackage{url}            
\usepackage{booktabs}       
\usepackage{amsfonts}       
\usepackage{nicefrac}       
\usepackage{microtype}      
\usepackage{lipsum}		
\usepackage{graphicx}
\usepackage{natbib}
\usepackage{doi}

\usepackage{algorithm}
\usepackage{algorithmic}
\usepackage{amsmath}
\usepackage{amsthm}
\usepackage{amssymb}
\usepackage{booktabs}
\usepackage{enumerate}
\usepackage{xcolor}
\usepackage{tikz}
\usepackage{natbib}

\usepackage{tikz} 
\usetikzlibrary{arrows.meta}
\usetikzlibrary{positioning, arrows.meta, calc}

\usepackage{adjustbox}

\newtheorem{definition}{Definition}
\newtheorem{lemma}{Lemma}
\newtheorem*{lemma*}{Lemma}
\newtheorem{theorem}{Theorem}
\newtheorem{proposition}{Proposition}
\newtheorem{corollary}{Corollary}

\title{Incentive Effects of a Cut-Off Score: \\ Optimal Contest Design with Transparent Pre-Selection}

\author{ Hanbing Liu\\
	Gaoling School of Artificial Intelligence\\
	Renmin University of China\\
	Beijing, China\\
	\texttt{liuhanbing@ruc.edu.cn} \\
	\And
	Ningyuan Li \\
	CFCS, School of Computer Science\\
	Peking University\\
	Beijing, China \\
	\texttt{liningyuan@pku.edu.cn}
    \AND
    Weian Li \\
	School of Software \\
	Shandong University \\
	Shandong, China \\
	\texttt{weian.li@sdu.edu.cn}
    \And
    Qi Qi\thanks{Corresponding author.} \\
	Gaoling School of Artificial Intelligence \\
	Renmin University of China\\
	Beijing, China \\
	\texttt{qi.qi@ruc.edu.cn} \\
    \And
    Changyuan Yu \\
	Baidu Inc. \\
	Beijing, China \\
	\texttt{yuchangyuan@baidu.com}
}

\date{}

\renewcommand{\shorttitle}{Optimal Contest Design with Transparent Pre-Selection}

\begin{document}
\maketitle

\begin{abstract}
	Shortlisting is a common and effective method for pre-selecting participants in competitive settings. To ensure fairness, a cut-off score is typically announced, allowing only contestants who exceed it to enter the contest, while others are eliminated. In this paper, we study rank-order contests with shortlisting and cut-off score disclosure. We fully characterize the equilibrium behavior of shortlisted contestants for any given prize structure and shortlist size. We examine two objective functions: the highest individual performance and total performance. For both objectives, the optimal contest is in a winner-take-all format. For the highest individual performance, the optimal shortlist size is exactly two contestants, but, in contrast, for total performance, the shortlist size does not affect the outcome, i.e., any size yields the same total performance. Furthermore, we compare the highest individual performance achieved with and without shortlisting, and show that the former is \(4/3\) times greater than the latter.
\end{abstract}

\section{Introduction}
\label{sec:intro}

Contests are a fundamental model in mechanism design. In economics, they capture competitive environments in which a designer specifies a prize structure and contest rules to achieve particular objectives, while contestants respond by exerting irreversible and costly effort. Owing to their ability to model strategic interactions, contests are widely used to study real-world settings such as sports competitions, coding tournaments, and crowdsourcing platforms. Among the various formats, rank-order contests are especially prevalent: rewards are allocated purely based on relative performance, with the highest performer receiving the top prize, the second-highest receiving the next, and so on.

In many practical settings, however, resource or capacity constraints prevent organizers from admitting all registered participants to the final competition. Instead, only a subset of participants, referred to as a \emph{shortlist}, is selected, while the remainder are excluded. For example, in hiring processes, only candidates who pass an initial screening are invited to interview. Similarly, in the International Collegiate Programming Contest, limited venue capacity allows only qualifying teams to participate in the final round. Shortlisting serves both to concentrate competition and to economize on organizational resources. To promote fairness and transparency, organizers typically announce a cut-off score once the shortlist size is determined, allowing excluded participants to understand the basis for their elimination.

The introduction of a shortlist and a publicly disclosed cut-off raises several theoretical questions from a contest design perspective. First, how do the shortlist and cut-off disclosure affect contestants' performance incentives? Second, how should the designer optimally choose the prize structure and shortlist size to best achieve the intended objective?

Regarding the first question, the effects are ambiguous. On the one hand, shortlisting and cut-off disclosure may intensify competition among shortlisted contestants, thereby increasing effort. On the other hand, excluding too many participants may weaken incentives, as a smaller pool of competitors makes prizes easier to obtain. Understanding these opposing forces requires a precise characterization of equilibrium performance given the shortlist size, prize structure, and information disclosure policy.

From the designer’s perspective, shortlisting substantially complicates the optimization problem relative to standard settings in which all contestants compete. Consider, for example, a designer who seeks to maximize total performance (e.g., \cite{MS01}). While reducing the number of finalists may raise individual effort through intensified competition, total performance need not increase monotonically, since fewer contestants contribute effort overall. Consequently, existing analytical methods do not directly extend to environments with shortlists and cut-off disclosure. New theoretical tools are therefore required to study optimal contest design in this more complex setting.

\subsection{Main Contributions}

\textbf{Model.} To address these challenges, we analyze a rank-order contest incorporating a shortlist. Suppose there are $n \geq 2$ initial registrants, each with a private ability $x_i$ drawn from a known distribution $F(x)$. The contest designer is assumed to observe all abilities at the time of registration. This assumption is reasonable and justified in many real scenarios. For instance, on Kaggle, a leading machine learning competition platform, designers can evaluate contestants based on publicly visible profiles and past performance. User ratings are distributed publicly, and during registration, organizers can view the ratings of all registrants, whereas individual contestants only observe the total number of registrants.

The contest executes in two stages. In the first stage, the designer selects the number of shortlisted contestants $m$ (where $2 \leq m \leq n$) and determines a prize structure $\vec{V} = (V_1, V_2, \ldots, V_m) \in \mathbb{R}^{m}$, subject to a total budget $B$. All $n$ contestants are then ranked by ability in descending order, and the top $m$ are shortlisted. The cut-off score, i.e., the ($m+1$)-th ability score, is disclosed to the admitted contestants.
In the second stage, shortlisted contestants, now aware of the prize structure, their qualification status and the borderline ability, choose how much effort to deliver. Their chosen performance levels, which incur costs, depend on their private abilities. Each contestant aims to maximize utility, defined as the difference between the prize received and the cost of performance.
The designer’s decision problem consists of selecting both the number of shortlist $m$ and the prize structure $\vec{V}$ to optimize one of two objectives: either the highest individual performance or the total performance.

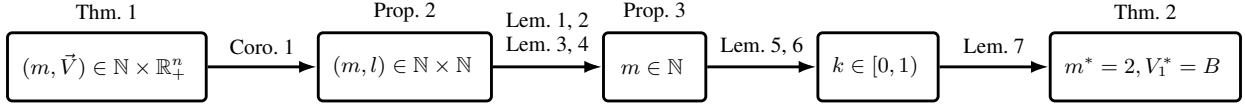
\begin{figure}[t]
    \centering
    \begin{adjustbox}{width=\linewidth}








\begin{tikzpicture}[
  font=\small,
  boxw/.store in=\boxw, boxw=1.5cm,       
  boxh/.store in=\boxh, boxh=1.0cm,       
  gap/.store in=\gap,   gap=1.6cm,        
  theorembox/.style={
    draw, rounded corners=2pt, very thick,
    minimum width=\boxw, minimum height=\boxh,
    align=center, inner sep=6pt
  },
  thmlbl/.style={
    font=\footnotesize, inner sep=2pt
  },
  lemmalbl/.style={
    font=\footnotesize, inner sep=1pt
  },
  arr/.style={
    very thick, -{Latex[length=2.6mm, width=1.8mm]}
  }
]

\node[theorembox] (b1) {
  \((m,\vec{V}) \in \mathbb{N} \times \mathbb{R}_{+}^n\)
};
\node[theorembox, right=\gap of b1] (b2) {
  $(m,l) \in \mathbb{N} \times \mathbb{N}$
};
\node[theorembox, right=\gap of b2] (b3) {
  $m \in \mathbb{N}$
};
\node[theorembox, right=\gap of b3] (b4) {
  $k \in [0,1)$
};
\node[theorembox, right=\gap of b4] (b5) {
  $m^* = 2, V^*_1=B$ 
};

\node[thmlbl, above=1mm of b1] {Thm.~\ref{thm:sbne}};
\node[thmlbl, above=0.6mm of b2] {Prop.~\ref{prop:OptSimple}};
\node[thmlbl, above=0.6mm of b3] {Prop.~\ref{prop:HighestWTA}};
\node[thmlbl, above=1mm of b4] {};
\node[thmlbl, above=1mm of b5] {Thm.~\ref{thm:AsmHP}};

\draw[arr] (b1.east) -- (b2.west);
\node[lemmalbl] at ($(b1.east)!0.5!(b2.west) + (0,3mm)$) {Coro.~\ref{coro:ZeroPrize}};

\draw[arr] (b2.east) -- (b3.west);
\node[lemmalbl, align=center] at ($(b2.east)!0.5!(b3.west) + (0,5mm)$) {Lem.~\ref{lem:Truncate}, \ref{lem:iso1}\\Lem.~\ref{lem:IntDerMon}, \ref{lem:betaInt}};

\draw[arr] (b3.east) -- (b4.west);
\node[lemmalbl] at ($(b3.east)!0.5!(b4.west) + (0,3mm)$) {Lem.~\ref{lem:iso2},~\ref{lem:betarep}};

\draw[arr] (b4.east) -- (b5.west);
\node[lemmalbl] at ($(b4.east)!0.5!(b5.west) + (0,3mm)$) {Lem.~\ref{lem:AsmStep}};

\end{tikzpicture}
    \end{adjustbox}
    \caption{\textbf{Technical Overview} for solving the highest individual performance contest design problem. Each box denotes the decision variables of the current optimization problem. Starting from a discrete–high-dimensional continuous program over shortlist size and prize vectors, we progressively simplify the problem through a sequence of technical lemmas into an asymptotic single-variable continuous optimization. This reduction enables a distribution-independent characterization of the optimal contest, yielding a 2-contestant winner-take-all design.}
    \label{fig:Overview}
\end{figure}

\textbf{Contest Design.} Based on this contest-theoretic model, we first show that, for any given shortlist size and prize structure, contestants’ equilibrium behavior can be uniquely and tractably characterized.
We then study two common design objectives and show that, in both cases, the optimal prize structure is winner-take-all. However, the role of shortlisting differs sharply across objectives: when maximizing the highest individual performance, the optimal shortlist consists of exactly two contestants, whereas for total performance, the outcome is invariant to the shortlist size. Moreover, the highest individual performance achieved under shortlisting is \(4/3\) times that attainable without shortlisting. Finally, we complement our theoretical findings with numerical experiments.


To illustrate the non-trivial technical tools underlying our results, we provide an overview of our approach to derive the highest individual performance. The procedures are shown in Figure~\ref{fig:Overview}.



\noindent\textbf{1. Prize Structure Guideline.}
A key technical insight is that, under linear cost, optimal prize structures can be significantly simplified. For any objective that is linear in contestants’ performances, the optimal contest must be a non-trivial \emph{simple contest}, in which all non-zero prizes are equal (Proposition~\ref{prop:OptSimple}). This reduces the designer’s problem to choosing only two variables: the shortlist size and the number of prizes. 

\noindent\textbf{2. Winner-take-all Format.}
Fix a shortlist size. The designer’s objective equals the expected equilibrium performance of the strongest admitted contestant, which initially depends on the full prize vector and conditional order statistics. Using a probabilistic identity, the conditional structure is replaced by order statistics drawn from an i.i.d. truncated distribution, eliminating dependence on the original high-dimensional conditioning event (Lemma~\ref{lem:Truncate}).
Equilibrium performance then becomes additively separable across prize differences, reducing the continuous prize vector to independent rank-wise marginal contributions. This collapses the designer’s optimization to selecting the single rank with the largest impact. After transforming into quantile space (Lemma~\ref{lem:iso1}), these contributions become distribution-free and strictly ordered across ranks (Lemma~\ref{lem:IntDerMon},~\ref{lem:betaInt}), showing that only the top prize matters. Hence the optimal structure degenerates to winner-take-all (Proposition~\ref{prop:HighestWTA}).

\noindent\textbf{3. Asymptotic Analysis via Beta Representation.} 
With the prize dimension eliminated, the remaining variable is the shortlist size. The expected highest performance still involves nested integrals over truncated order statistics. By rewriting the objective to separate distribution-dependent terms from a universal kernel, all dependence on the ability distribution is pushed into a single outer integral (Lemma~\ref{lem:iso2}).
The kernel is then represented using regularized beta functions that depend only on the shortlist size and the cut-off position (Lemma~\ref{lem:betarep}). As the number of initial contestants grows, the beta term concentrates at the boundary, causing the entire stochastic structure to collapse to a closed-form limit depending solely on the shortlist size (Lemma~\ref{lem:AsmStep}).
This reduces the original joint optimization to a one-dimensional comparison across shortlist ratio, whose maximizer is $m^*=2$ (Theorem~\ref{thm:AsmHP}).

Due to space limitation, our results are presented without detailed proofs in the main content, and all missing proofs can be found in the appendix.
\subsection{Related Literature}
Our study falls within the domain of single contest design, specifically focusing on rank-order contests, also known as all-pay contests. The existing literature primarily examines contestants’ equilibrium behavior and the design of optimal prize structures for various designer objectives. Foundational research has analyzed equilibria in all-pay auctions \citep{BKV96, BK98, BKV12}.
Regarding contest design, studies have explored different objective functions, including total performance \citep{GH88, MS01, KG03}, performance of the top-\(k\) contestants \citep{AS09}, highest individual performance \citep{CHS19}, and threshold-based objectives \citep{EGG21}. Negative prizes are examined by \citet{LLWZ18} and \citet{LL23}, who extend optimal rank-order design to broader frameworks. \citet{G23} analyzes how prize structures and ability distributions shape equilibrium efforts.
Closer to our setting, \citet{SSYJ24} and \citet{LLLQY25} investigate rank-order contests with shortlists, aiming to optimize total or highest individual effort. However, their models often lack guarantees of equilibrium existence.
In contrast, we propose a rank-order contest that incorporates both shortlisting and cut-off score disclosure. We ensure the existence of equilibrium in all instances and study how to jointly optimize the shortlist size and prize structure to maximize either total performance or the highest individual performance.

In addition, the disclosure of borderline ability links our work to the literature on elimination and signaling in contests. The seminal paper \cite{MS06} analyzes two-stage contests where winners of preliminary sub-contests advance to a final round. \citet{FL12} study optimal multi-stage Tullock contest designs \cite{T08}, while \citet{LMZ18} and \citet{LSA19} examine how information disclosure influences contestant behavior. \citet{MPS21} investigate sequential all-pay elimination contests under complete information, and \citet{FW22} incorporate bias and information disclosure into two-stage settings. \citet{R24} explores a multi-stage format in which the lowest-effort contestant is eliminated at each step.
In parallel, research on Bayesian persuasion in contest design has emerged \cite{CKZ17, C24, KZZ24}, providing insights into optimal information disclosure strategies.
Distinct from these lines of work, our model introduces a single elimination stage through shortlisting, followed by a simple signal indicating whether a contestant qualifies and the corresponding borderline ability. This limited feedback updates contestants' beliefs about their rivals' abilities. Unlike complex multi-stage or information-rich mechanisms, our design is notably simple—yet it achieves a significant improvement in highest individual effort compared to traditional optimal contests.
\newpage
\section{Model and Preliminaries}
\label{sec:pre}

\subsection{Contest with Transparent Pre-Selection}
In this section, we formally introduce the contest model with transparent shortlisting.
Consider a single-contest setting with \(n\) registered contestants. Each contestant $i \in [n]=\{1,2,\cdots, n\}$ has a private ability level $x_i\in (0,+\infty)$, independently and identically drawn from a publicly known distribution $F(\cdot)$, which has a continuous and strictly positive density function $f(\cdot)>0$. For convenience, we also represent ability in terms of the quantile $q_i := 1-F(x_i)$ which follows a uniform distribution on $[0, 1]$, i.e., $q_i \sim U[0,1]$. The inverse function, $x(q_i):=F^{-1}(1-q_i)=x_i$ is strictly decreasing, implying that lower quantiles correspond to higher ability levels.

As outlined in Section \ref{sec:intro}, we consider a contest that incorporates a pre-selection stage. The contest designer first determines the shortlist capacity, i.e., the number of admitted contestants, denoted by \( m\), where $2\leq m \leq n$, and specifies a prize structure $\vec{V}=(V_1, V_2, \cdots, V_m)$ within a total budget $B$. The prizes satisfy $V_1 \geq V_2\geq  \cdots \geq V_m$ and $\sum_{i=1}^m V_i \leq B$. We denote the designer’s decision, i.e., the contest configuration,  by $\mathcal{C}=(m,\vec{V})$. 
Then, the designer evaluates the abilities of all registered contestants,\footnote{In practice, a contestant’s ability may be inferred from resumes, historical performance records, or other relevant indicators. We assume the designer can accurately assess these abilities prior to the contest. In Section~\ref{sec:exp}, we show that our results are robust to noisy assessment through numerical experiments.} selects the top $m$ contestants based on ability, and admits them to the shortlist, eliminating the rest. The designer also publicly discloses the borderline ability, which is the cut-off score, defined as the ability of the ($m+1$)-st ranked contestant, denoted by $x_{(m+1)}$.\footnote{To balance readability and formality, we use both $x_{(i)}$ and $x^{(n-i+1)}$ to denote the ability of the \(i\)-th ranked contestant. The subscript notation is more intuitive, while the superscript notation aligns with the formal definition of order statistics.} 
Importantly, the shortlist size \( m \) and prize structure \( \vec{V} \) are chosen before the designer observes the actual abilities. These decisions depend only on the total number of registered contestants $n$ and the distribution $F$.  

Next, each admitted contestant $i$, i.e., one whose ability $x_i$ ranks among the top $m$, is notified and then strategically chooses a performance level $e_i$ to compete for prizes. The cost of performance is given by $g(e_i)/x_i$, where $g(\cdot): \mathbb{R}_{\geq 0} \rightarrow \mathbb{R}_{\geq 0}$ is a strictly increasing, continuous, and differentiable function with $g(0)=0$.\footnote{Linear cost functions are widely adopted in both theory and practice to model direct outputs in contests. For instance, in all-pay auctions, often viewed as a special case of contests, payments follow a linear cost structure. In political settings, cost may represent campaign investment. Additionally, prior studies such as \cite{MS01, AS09, DV09} have explored contests under linear cost assumptions.} Intuitively, for a given performance $e$, a contestant with higher ability incurs a lower cost than one with lower ability. After all performance levels are chosen, prizes are awarded according to a rank-order rule: the contestant executing the highest performance receives prize $V_1$, the second-highest performance earns $V_2$, and so on. 

In summary, our model proceeds through the following sequential steps:
\begin{enumerate}
    \item The contest designer selects the number of admitted contestants $m$ and specifies the prize structure $\vec{V}$ within the total budget $B$ in advance.
    \item Based on the realized ability, the top $m$ contestants are shortlisted, and the designer publicly announces the borderline ability $x_{(m+1)}$.
    \item The shortlisted $m$ contestants choose their performance levels strategically to compete for prizes.
    \item Prizes are awarded according to a rank-order rule, based on the performance levels.
\end{enumerate}

Given our contest model, we now define each contestant’s utility. Eliminated contestants do not perform and receive no prize, so their utility is trivially zero.
Let $\mathbf{e}= (e_1, e_2,\cdots, e_m)$ denote the performance profile of all admitted contestants. For an admitted contestant $i$, utility is defined as the prize received minus the cost of performance:
\[
    u_i(\mathbf{e}) = V_{Rank(i,\mathbf{e})}-\frac{g(e_i)}{x_i},
\]
where $Rank(i,\mathbf{e})$ denotes contestant $i$'s rank in the performance profile $e$. Each contestant chooses their performance level to maximize this utility.

For each admitted contestant $i$, the performance $e_i$ depends solely on their ability $x_i$ and a strategy function $b_i(x): \mathbb{R}_{\geq 0} \rightarrow \mathbb{R}_{\geq 0}$,  such that $e_i=b_i(x_i)$. We assume that $b_i(x)$ is monotone non-decreasing, i.e., contestants with higher ability do not execute less performance.
In this paper, we focus on symmetric strategies, where all admitted contestants use the same strategy function: $b_i(x)=b_j(x)$ for any $x>0$ and any pair of contestants $i$ and $j$. Given the incomplete information setting, we adopt the symmetric Bayesian Nash equilibrium (sBNE) as our solution concept.
\begin{definition}\label{def:sBNE}
    A strategy function constitutes a symmetric Bayesian Nash equilibrium if and only if, for any admitted contestant, following this strategy maximizes the expected utility given that all other contestants follow the same strategy. Specifically, a function $b^*(x)$ is an sBNE if and only if, for any admitted contestant $i$,
\[
    b^*(x_i) \in \arg\max_{e_i} \sum_{j=1}^m V_j \cdot \text{Pr}_{ij}(e_i) -\frac{g(e_i)}{x_i},
\]
    where $\text{Pr}_{ij}(e_i)$ denotes the probability that contestant $i$ believes that she ranks $j$-th in the contest by performing \( e_i \), assuming all other contestants follow the strategy $b^*(x)$. 
\end{definition}

Finally, the contest designer seeks to optimize an objective function by choosing the contest configuration $\mathcal{C}=(m, \vec{V})$, which specifies the shortlist size and the prize structure. In this work, we focus on two widely studied and practically useful objectives: the highest individual performance and the total performance. Specifically, given a contest configuration $\mathcal{C}$, the highest individual performance objective is defined as
\[
    \textstyle \text{HP}(\mathcal{C},F)= \mathbb{E}_{x_1, x_2,\cdots, x_n \sim F}\big[e_{(1)}\big],
\]
where $e_{(1)}$ denotes the highest performance among all contestants under the sBNE. The total performance objective is defined as
\[
    \textstyle \text{TP}(\mathcal{C},F)= \mathbb{E}_{x_1, x_2,\cdots, x_n \sim F}\big[\sum_{i=1}^n e_{i}\big],
\]
where $e_i$ is contestant $i$'s performance under the sBNE, with that eliminated contestants execute no performance.


\subsection{Simple Contests}
We introduce a special class of contests that plays a central role in the optimal contest design in our model.
\begin{definition}[Simple Contest]
    A contest with $n$ contestants and shortlist size $m$ is a simple contest if all non-zero prizes are equal. If all prizes are equal, i.e., either $V_i = 0$ for all $i$, or $V_1 = \ldots = V_m \neq 0$, the contest is a trivial simple contest.
\end{definition}
A simple contest with exactly one positive prize is referred to as a \emph{winner-take-all} contest.
\begin{definition}[Winner-Take-All Contest]
    A contest with $n$ contestants and shortlist size $m$ is called a winner-take-all (WTA) contest if there is only one non-zero prize, i.e.,  $V_1 \neq 0$ and $V_i =  0$ for \(i\in\{2,\cdots,m\}\).
\end{definition}





\subsection{Contestant Equilibrium}

In this subsection, we characterize contestants’ equilibrium performance behavior under a given contest configuration $\mathcal{C} = (m, \vec{V})$. Eliminated contestants do not participate and therefore exert zero equilibrium performance. For admitted contestants, however, the announcement of the cut-off score invalidates prior beliefs about opponents’ abilities, requiring each contestant to update her posterior beliefs accordingly.

\begin{proposition}[Posterior Beliefs]
For any admitted contestant $i$, conditional on the highest ability among the eliminated contestants, denoted by $x^{(n-m)}$, the posterior beliefs about the abilities of the remaining contestants are mutually independent and do not depend on contestant $i$’s own ability. Specifically, for any other admitted contestant $j \in [m] \setminus \{i\}$, the posterior cumulative distribution function (CDF) and probability density function (PDF) of ability are given by the truncated distribution
\[
    \textstyle P(x) = \frac{F(x) - F(x^{(n-m)})}{1 - F(x^{(n-m)})}, 
    \quad p(x) = \frac{f(x)}{1 - F(x^{(n-m)})}.
\]
\end{proposition}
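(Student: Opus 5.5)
We compute the joint density of the order statistics directly and then condition on the cut-off. Write $c := x^{(n-m)} = x_{(m+1)}$ for the disclosed cut-off score. The information available to admitted contestant $i$ consists of her own ability $x_i$, the fact that she is admitted (i.e., $x_i > c$), and the value of $c$. The plan has three steps. First, write the joint density of all $n$ abilities. Second, restrict it to the event that $i$ is among the top $m$ and that the $(m+1)$-st largest ability equals $c$. Third, divide by the marginal density of $(x_i, c)$ on that event, and read off the conditional law of the other $m-1$ admitted abilities.

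\textbf{Step 1: the joint density on the event.} By exchangeability, fix which $m-1$ indices (besides $i$) are admitted and which index attains $c$; every such labelling contributes the same density up to a combinatorial constant that does not depend on the abilities. For one labelling, with admitted opponents $j \in S$ where $|S| = m-1$, one contestant $k$ at ability $c$, and the remaining $n-m-1$ contestants below $c$, the density is
\[
f(x_i)\prod_{j\in S} f(x_j)\,\mathbf{1}\{x_j>c\}\cdot f(c)\cdot F(c)^{\,n-m-1},
\qquad x_i>c,
\]
after integrating out the eliminated contestants other than $k$. Since $F$ is continuous, ties occur with probability zero, so we may ignore them.

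\textbf{Step 2: conditioning.} To condition on $(x_i, c)$ and on $i$ being admitted, integrate this expression over $x_j \in (c, \infty)$ for $j \in S$. This gives the normalizer
\[
f(x_i)\, f(c)\, F(c)^{n-m-1}\,(1-F(c))^{m-1}.
\]
Dividing, the conditional density of $(x_j)_{j\in S}$ is
\[
\prod_{j\in S}\frac{f(x_j)}{1-F(c)}\mathbf{1}\{x_j>c\}.
\]
This expression factorizes over $j$, which gives mutual independence. It also contains no $x_i$, so the posterior does not depend on contestant $i$'s own ability. Integrating one factor from $c$ to $x$ gives $P(x) = \frac{F(x)-F(c)}{1-F(c)}$ and the density $p(x)$ as claimed.

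\textbf{Main obstacle.} The only delicate point is justifying that conditioning on the order-statistic event with a fixed labelling gives the same answer as conditioning on the unlabelled event. This holds because the combinatorial factor (the number of labellings) cancels between the numerator and the normalizer. Alternatively, one can argue via the standard Markov property of order statistics: given the $(m+1)$-st largest value equals $c$, the top $m$ values are distributed as i.i.d.\ draws from $F$ truncated to $(c,\infty)$. Conditioning additionally on the value $x_i$ of one of these i.i.d.\ draws leaves the other $m-1$ i.i.d.\ with the same truncated law. I would present the direct density computation, with this i.i.d.\ argument as the intuition behind it.
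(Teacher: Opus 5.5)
Your proposal is correct and takes essentially the same route as the paper. Both arguments compute the joint density on the event $\{\text{admitted set}, X^{(n-m)}=c\}$ with the eliminated contestants other than the cut-off one integrated out to $F(c)^{n-m-1}$, divide by the density of $(x_i,c)$ obtained by integrating the opponents' abilities over $(c,\infty)$, and read off the quotient $\prod_{j} f(x_j)/(1-F(c))$, which factorizes and is free of $x_i$. The combinatorial factor $(n-m)$ for the identity of the cut-off contestant appears in both numerator and denominator and cancels, exactly as you anticipate.
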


Given the posterior belief, the stage after pre-selection can be viewed as a contest with \(m\) contestants, each drawing ability from the posterior \( P(x) \). As a result, we establish the existence and uniqueness of the sBNE, given as follows.

\begin{theorem}[Unique sBNE]\label{thm:sbne}
    For any contest $\mathcal{C}=(m, \vec{V})$,  any ability distribution \( F \), and any strictly increasing cost function \( g \), there exists a unique symmetric Bayesian Nash equilibrium among the admitted contestants, given the cut-off score \( x^{(n-m)} \), which is characterized by the following:
\[
    b(x;x^{(n-m)}) =  g^{-1} \Big ( \textstyle \int_{x^{(n-m)}}^x \sum_{l=1}^{m-1}(V_l-V_{l+1})\binom{m-1}{l-1}
    \textstyle (m-l)(1-P(t))^{l-1}P(t)^{m-l-1}p(t)t\, dt \Big ).
\]
\end{theorem}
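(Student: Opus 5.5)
Given the cut-off score $c := x^{(n-m)}$, the Posterior Beliefs proposition reduces the post-selection stage to a standard symmetric all-pay contest with $m$ bidders whose abilities are i.i.d.\ with CDF $P$ and density $p$ on $(c,\infty)$, and whose cost is $g(e)/x$. The plan is to work in the transformed variable $s := g(e)$. Since $g$ is strictly increasing, continuous and $g(0)=0$, ranking by $e$ is the same as ranking by $s$. Each contestant then faces linear cost $s/x$, so it suffices to show that $\beta(x) := g(b(x;c))$ is the unique symmetric equilibrium of the linear-cost contest and then apply $g^{-1}$.

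\textbf{Necessity.} Assume a symmetric monotone equilibrium $\beta$.
\begin{enumerate}
\item Show $\beta$ is strictly increasing and continuous on $(c,\infty)$. A flat segment creates ties with positive probability, and a contestant can profitably deviate upward slightly, since there is a strictly positive gain $V_l - V_{l+1}$ at some rank (in the nontrivial case). A jump leaves a gap that the bidder at the upper end can profitably undercut.
\item Show $\beta(c^+) = 0$: the lowest type almost surely ranks last, so any positive bid is wasted.
\item With $\beta$ strictly increasing, a contestant of type $x$ who mimics type $z$ ranks $l$-th with probability
\[
\pi_l(z) = \binom{m-1}{l-1}(1-P(z))^{l-1}P(z)^{m-l}.
\]
Her payoff is $U(x,z) = \sum_l V_l \pi_l(z) - \beta(z)/x$.
\item By the envelope/first-order condition at $z = x$, $\beta'(x) = x\sum_l V_l \pi_l'(x)$.
\item Compute $\sum_l V_l \pi_l'$ by summation by parts. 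Writing $\Pi_l(z) = \sum_{j\le l}\pi_j(z)$, the probability of ranking in the top $l$, we get $\sum_l V_l\pi_l = \sum_{l} (V_l - V_{l+1})\Pi_l + V_m$. The derivative $\Pi_l'$ is the density of the $l$-th order statistic boundary, namely $\binom{m-1}{l-1}(m-l)(1-P)^{l-1}P^{m-l-1}p$.
\item Integrating from $c$ with $\beta(c)=0$ yields exactly the displayed formula.
\end{enumerate}

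\textbf{Sufficiency.} Show that no type $x$ gains by mimicking $z$, or by bidding outside the range of $\beta$. Bidding above $\sup\beta$ is dominated, and zero is equivalent to mimicking $c$. Then
\[
U(x,x) - U(x,z) = \int_z^x \Big(1 - \frac{t}{x}\Big)\, \sum_l V_l \pi_l'(t)\, dt \ge 0,
\]
since $\sum_l V_l\pi_l' = \sum_l (V_l - V_{l+1})\Pi_l' \ge 0$ (prizes are ordered) and the factor $1 - t/x$ has the correct sign on both sides of $x$. This is the standard single-crossing argument. The trivial case of all prizes equal gives $b \equiv 0$, which the formula also yields.

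The main obstacle is the regularity step 1: ruling out non-strict monotonicity, atoms and gaps. This is needed both for uniqueness and to justify differentiating $\beta$. I would handle it by first proving strict monotonicity via a deviation argument. Then I would get continuity and almost-everywhere differentiability from the incentive constraints $\beta(z) - \beta(x) \le x(\Pi(z) - \Pi(x))$ and its counterpart with $x$ and $z$ swapped. These constraints sandwich the difference quotients and force $\beta$ to be absolutely continuous with the derivative above.
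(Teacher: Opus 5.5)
Your proposal is correct and follows essentially the same route as the paper. Both reduce the choice to a mimicked type, take the first-order condition, integrate from the cut-off with zero boundary value, telescope into prize gaps (your summation by parts), and verify global optimality from the sign of $1-t/x$. The regularity steps you add (strict monotonicity, ruling out ties and jumps, $\beta(c^+)=0$) are ones the paper simply assumes.

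One small correction: with $\Pi:=\sum_l V_l\pi_l$, incentive compatibility gives $x\,(\Pi(z)-\Pi(x)) \le \beta(z)-\beta(x) \le z\,(\Pi(z)-\Pi(x))$. The inequality you wrote for type $x$ is therefore reversed, but the sandwich argument goes through once the directions are fixed.
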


\section{Optimal Contest Design}
\label{sec:opt}
In contest design problems, deriving equilibrium is merely the first step. Different from traditional rank-order contests, with a pre-selection stage, the designer faces a much harder high-dimensional optimization problem: the joint choice of discrete shortlist size $m$ and continuous prize structure $\vec{V}$. This added decision variable increases both the dimensionality and complexity of the optimization problem. In this section, we first introduce a prize structure design guideline, then present the optimal contest design for maximizing the highest individual performance and total performance.
\subsection{Prize Structure Design Guideline}

To simplify the contest design process, 
we establish two key properties of equilibrium performance from the designer’s perspective, which help reduce the dimensionality of decision space. First, offering a consolation prize (i.e., $V_m\neq0$) weakens the overall incentive by reducing the prize gap.
\begin{corollary}[No Consolation Prize Should be Set]\label{coro:ZeroPrize}
    If a contest \( (m, \vec{V}) \) awards a positive prize to the last-place contestant, i.e., \( V_m > 0 \), then setting \( V_m = 0 \) increases the equilibrium performance of all contestants.
\end{corollary}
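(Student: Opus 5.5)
The plan is to read the statement directly off the closed form in Theorem~\ref{thm:sbne}. For a fixed shortlist size $m$ and a realized cut-off $x^{(n-m)}$, the equilibrium strategy is $b(x;x^{(n-m)}) = g^{-1}(I(x))$. Here $I(x)$ is the integral of the weighted sum
\[
\textstyle \sum_{l=1}^{m-1}(V_l-V_{l+1})\binom{m-1}{l-1}(m-l)(1-P(t))^{l-1}P(t)^{m-l-1}p(t)t
\]
taken over $[x^{(n-m)},x]$. The key observation is that $V_m$ enters this expression only through the last difference $V_{m-1}-V_m$, which is the term with $l=m-1$. Its coefficient is
\[
\textstyle \binom{m-1}{m-2}(1-P(t))^{m-2}p(t)t = (m-1)(1-P(t))^{m-2}p(t)t,
\]
which is nonnegative on the integration range, and strictly positive there. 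Positivity holds because $p>0$ (since $f>0$), $t>0$, and $P(t)<1$ for every finite $t$.

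First, compare the original configuration $\vec{V}$ with the modified configuration $\vec{V}'$, which agrees with $\vec{V}$ except that $V'_m=0$. The modified vector is still feasible: monotonicity $V'_{m-1}\ge 0=V'_m$ is preserved, and the budget only loosens. The differences $V_l-V_{l+1}$ for $l\le m-2$ are unchanged. The last difference rises from $V_{m-1}-V_m$ to $V_{m-1}$, an increase of exactly $V_m>0$.

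Second, write the change in the integral explicitly:
\[
I'(x)-I(x) = V_m\int_{x^{(n-m)}}^x (m-1)(1-P(t))^{m-2}p(t)t\,dt .
\]
This is strictly positive for every $x>x^{(n-m)}$, because the integrand is strictly positive there. Since $g$ is strictly increasing and continuous, $g^{-1}$ is strictly increasing. It follows that $b'(x;x^{(n-m)})>b(x;x^{(n-m)})$ for every admitted ability and every cut-off, so every admitted contestant's performance increases pointwise. Taking expectations over abilities then gives the claim for any performance-based aggregate.

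The main subtlety is the boundary and the edge case. At $x=x^{(n-m)}$ both performances are $0$, but this event has probability zero for admitted contestants, so the increase is strict almost surely; equivalently, it is weak pointwise and strict for every admitted type above the cut-off. One must also check that $g^{-1}$ is defined on the larger argument, which requires the range of $g$ to be unbounded or at least large enough. I would handle this by appealing to the existence part of Theorem~\ref{thm:sbne} applied to $\vec{V}'$, which already guarantees the formula is well defined. Finally, the case $m=2$ deserves a separate check: there the only difference term is $V_1-V_2$ and the coefficient becomes $p(t)t$, so the same argument goes through.
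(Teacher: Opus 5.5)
Your proposal is correct and is essentially the paper's proof. The paper works in the untelescoped form \eqref{eq:EqRaw}: it isolates the term $V_m\int P_m'(t)\,t\,dt$ and notes $P_m'\le 0$. That is the same computation as your observation that $V_m$ enters only through the last gap $V_{m-1}-V_m$ with nonnegative weight $(m-1)(1-P(t))^{m-2}p(t)t$, and both arguments then use strict monotonicity of $g^{-1}$. Your checks on the feasibility of the modified prize vector and on the domain of $g^{-1}$ are harmless refinements that the paper leaves implicit.
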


We now formally state the second property.

\begin{proposition}[The Optimal Contest is a Simple Contest] \label{prop:OptSimple}
    If the designer's objective is a linear combination of the  admitted contestants' performances, and the cost function is linear \( g(e) = ke \) for some \( k > 0 \), then the optimal contest that maximizes the designer’s objective must be a non-trivial simple contest, regardless of whether the objective is evaluated ex ante (in expectation), \( u_d = \mathbb{E}[ \vec{c} \cdot e(\vec{x})] \), or ex post (given a realized ability profile), \( u_d(\vec{x}) = \vec{c} \cdot e(\vec{x}) \).
\end{proposition}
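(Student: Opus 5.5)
Under the linear cost $g(e)=ke$, Theorem~\ref{thm:sbne} makes each admitted contestant's equilibrium performance \emph{linear} in the prize differences. Writing $D_l := V_l - V_{l+1}$ for $l=1,\dots,m-1$ and $D_m := V_m$, we get
\[
b(x;x^{(n-m)}) = \frac{1}{k}\sum_{l=1}^{m-1} D_l\, \phi_l(x;x^{(n-m)}),
\qquad
\phi_l(x;x^{(n-m)}) := \int_{x^{(n-m)}}^x \binom{m-1}{l-1}(m-l)(1-P(t))^{l-1}P(t)^{m-l-1}p(t)\,t\,dt \ge 0 .
\]
Here $\phi_l$ does not depend on $\vec V$. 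Consequently any objective of the form $\vec c\cdot e(\vec x)$ is a linear function $\sum_{l=1}^{m-1} D_l\,\Phi_l$ of the differences, where $\Phi_l$ is the corresponding linear combination of the $\phi_l$'s. This holds ex post for fixed $\vec x$, and ex ante after taking expectations, since expectation preserves linearity. I will treat both cases at once by letting $\Phi_l$ stand for either the realized or the expected coefficient.

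Next I will change variables from $\vec V$ to $\vec D$. The constraints $V_1\ge\dots\ge V_m\ge 0$ become $D_l\ge 0$ for all $l$. The budget becomes $\sum_{i} V_i = \sum_{l=1}^{m} l\,D_l \le B$. The objective does not involve $D_m=V_m$, so by Corollary~\ref{coro:ZeroPrize}, or simply by reallocating budget, we may take $D_m=0$. The design problem is then the linear program
\[
\max \sum_{l=1}^{m-1}\Phi_l D_l \quad\text{s.t.}\quad \sum_{l=1}^{m-1} l D_l \le B,\ D_l\ge 0 .
\]
Its feasible region is a simplex with vertices $0$ and $D_l = B/l\cdot e_l$, so an optimum is attained at a vertex. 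At the vertex $D_{l^*} = B/l^*$ with all other differences zero, we have $V_1=\dots=V_{l^*}=B/l^*$ and the remaining prizes equal to zero, which is exactly a simple contest with $l^*$ equal positive prizes and $l^*\le m-1$. Such a contest is non-trivial because $V_m = 0 \neq V_1$. The right vertex is $l^*\in\arg\max_l \Phi_l/l$.

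The main obstacle is the word ``must'' together with non-triviality. First I need $\max_l \Phi_l/l>0$, which rules out the zero vertex, i.e., the trivial contest. I will argue that $\phi_l(x)>0$ for $x>x^{(n-m)}$ because the integrand is strictly positive on the interior, where $P\in(0,1)$ and $p>0$. For nonnegative, nonzero $\vec c$, which covers the highest and total performance objectives, this gives $\Phi_l>0$. Second, a trivial simple contest with all prizes equal and positive has every $D_l=0$, so it yields zero performance and is strictly dominated. Third, if the maximizer of the LP is not unique, I will state the conclusion as: there is an optimal contest that is a non-trivial simple contest, and every optimum is a convex combination of optimal vertices. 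If strict uniqueness is needed, it would require ties $\Phi_l/l=\Phi_{l'}/l'$ to be excluded. I expect to phrase the result as the existence of an optimal non-trivial simple contest, which is what the reduction to choosing $(m,l)$ in Figure~\ref{fig:Overview} uses.
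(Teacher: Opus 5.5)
Your proposal is correct and follows essentially the paper's argument. The paper likewise writes the objective as $\sum_{l=1}^{m-1}(V_l-V_{l+1})T_l$ and sets $V_m=0$ via Corollary~\ref{coro:ZeroPrize}. It then substitutes $Z_l=l(V_l-V_{l+1})$ (your $lD_l$) to obtain a linear program over the simplex $\sum_l Z_l\le B$, whose optimum puts the whole budget on a single $l^*$, and enumerates over $m$ (a step you leave implicit but which is immediate). Your caveats are legitimate refinements that the paper glosses over: non-triviality requires $\vec c\ge 0$, $\vec c\neq 0$ (the paper simply asserts $T_l>0$), and ``must'' holds literally only when $\arg\max_l \Phi_l/l$ is unique.
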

The intuition behind Proposition \ref{prop:OptSimple} is that, under the stated conditions, the designer's objective becomes linear in the gaps between adjacent prizes. As a result, the optimal strategy is to allocate the budget to maximize the gap between the most influential adjacent prize levels. Given the rank-order constraint, the most efficient approach is to equalize all higher-tier prizes, yielding a simple contest.

Proposition \ref{prop:OptSimple} significantly simplifies the contest design problem: the designer only needs to determine two variables: (1) the shortlist size and (2) the number of prize recipients, after which the total prize budget is evenly divided among winners at each selected rank.
In the remainder of this section, we analyze the optimal contest design for two commonly used linear objectives under a linear cost function: the highest individual performance and total performance.
For simplicity, we assume unit cost \( g(e) = x \) (i.e., \( k = 1 \)) and normalize the prize budget to \( B = 1 \). Since the designer’s utility is linear in both parameters, the results generalize to arbitrary \( k \) and \( B \) through a simple rescaling by \( k^{-1}B \).   


\subsection{Highest Individual Performance}
Next, we focus on maximizing the highest individual performance. By the monotonicity of equilibrium performance, this is achieved by the contestant with the highest ability. Fixing the shortlist size, we first determine the optimal prize structure and show that it is also winner-take-all.
\begin{proposition}[Optimal Prize Structure for the Highest Individual Performance] \label{prop:HighestWTA}
    Given the shortlist size \( m \), the optimal prize structure for maximizing the highest individual performance is winner-take-all, i.e., \( V_1 = 1 \) and \( V_l=0\) for all \( l>1\).
\end{proposition}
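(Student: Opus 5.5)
Fix the shortlist size $m$ and linear unit cost, so that by Theorem~\ref{thm:sbne} the equilibrium performance of the top-ranked admitted contestant is
\[
b(x_{(1)};x_{(m+1)})=\sum_{l=1}^{m-1}(V_l-V_{l+1})\,\Phi_l,\qquad
\Phi_l:=\int_{x_{(m+1)}}^{x_{(1)}}\binom{m-1}{l-1}(m-l)(1-P(t))^{l-1}P(t)^{m-l-1}p(t)\,t\,dt .
\]
Taking expectations over the full ability profile gives $\text{HP}=\sum_{l=1}^{m-1}(V_l-V_{l+1})\,\mathbb{E}[\Phi_l]$, which is linear in the prize gaps $d_l:=V_l-V_{l+1}\ge 0$ (with $V_m=0$ by Corollary~\ref{coro:ZeroPrize}). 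The budget constraint becomes $\sum_l l\,d_l\le 1$. As in Proposition~\ref{prop:OptSimple}, the optimum is attained by spending the whole budget on a single gap $l^*$, namely the one maximizing $\mathbb{E}[\Phi_l]/l$. This corresponds to the simple contest with $l^*$ equal prizes $1/l^*$. The task therefore reduces to showing that
\[
\frac{\mathbb{E}[\Phi_1]}{1} > \frac{\mathbb{E}[\Phi_l]}{l}\quad\text{for all } l\ge 2,
\]
and in fact I would aim for the stronger statement that $\mathbb{E}[\Phi_l]/l$ is strictly decreasing in $l$.

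To make these quantities comparable, I would condition on the cut-off $c=x_{(m+1)}$. Given $c$, the top $m$ abilities are i.i.d.\ draws from the truncated distribution $P$, and $x_{(1)}$ is their maximum, with CDF $P^m$. Writing $\mathbb{E}[\Phi_l\mid c]=\int_c^\infty \phi_l(t)\,(1-P(t)^m)\,dt$, where $\phi_l$ denotes the integrand, I change variables to the truncated quantile $s=P(t)\in[0,1]$. This gives
\[
\mathbb{E}[\Phi_l\mid c]=\int_0^1 w_l(s)\,(1-s^m)\,x_c(s)\,ds,\qquad w_l(s)=\binom{m-1}{l-1}(m-l)(1-s)^{l-1}s^{m-l-1},
\]
where $x_c(s)=P^{-1}(s)$ is increasing in $s$. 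Each $w_l$ is a Beta density: $w_l(s)=(m-1)\,\mathrm{Beta}(m-l,\,l)(s)$, so every $w_l$ integrates to $1$.

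The comparison then splits into two effects, both favouring $l=1$. First, the weights $h_l(s)=w_l(s)(1-s^m)/l$ have decreasing total mass in $l$: the mass is $\bigl(1-\mathbb{E}_{\mathrm{Beta}(m-l,l)}[s^m]\bigr)/l$, and one checks it decreases because the factor $1/l$ dominates. Second, the family $\mathrm{Beta}(m-l,l)$ is decreasing in $l$ in the likelihood-ratio order, since $w_{l+1}/w_l\propto (1-s)/s$ is decreasing. Thus larger $l$ places weight on lower $s$, which corresponds to lower $x_c(s)$.

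The step I expect to be the main obstacle is combining these two effects rigorously, because the factor $(1-s^m)$ penalizes high $s$ and works against the likelihood-ratio argument. What I would try first is the following. Since $x_c$ is positive and increasing, it suffices to show that $\int_y^1 h_1\ge\int_y^1 h_l$ for every $y\in[0,1]$, i.e.\ tail dominance of the unnormalized measures. This is verified by writing $x_c(s)=x_c(0)+\int_0^s dx_c$ and integrating by parts.

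The tail inequality involves only incomplete Beta integrals that depend on $m$ and $l$ alone, so it is distribution-free. I would attempt to prove it by induction from $l$ to $l+1$ via the regularized incomplete Beta recurrence, using $w_1(s)=(m-1)s^{m-2}$. Explicitly, $\int_y^1 h_1 = (m-1)\int_y^1 s^{m-2}(1-s^m)\,ds$, and this should be compared against $\tfrac1l\int_y^1 w_l(1-s^m)\,ds$.

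Once the pointwise tail inequality holds for every $c$, taking expectations over $c$ gives $\mathbb{E}[\Phi_1]>\mathbb{E}[\Phi_l]/l$ for all $l\ge2$. Hence the winner-take-all prize structure $V_1=1$ is optimal.
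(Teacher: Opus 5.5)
Your reduction is the same as the paper's. You set up the linear program in the gaps $d_l$ with $V_m=0$, condition on the cut-off via the truncated i.i.d.\ representation (Lemma~\ref{lem:Truncate}), change to quantiles, and reduce to upper-tail dominance of the unnormalized weights. Under $q=1-s$ one has $h_l(s)=g_l(1-s)$, so your inequality $\int_y^1 h_1\ge\int_y^1 h_l$ is exactly the paper's $G_1(q)\ge G_l(q)$. Your integration by parts also keeps the boundary term $x_c(0)\,T_l(0)$ with $x_c(0)=c$, which the paper's version omits (harmlessly, since it proves the total-mass comparison anyway).

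\textbf{The gap.} The tail inequality is the whole content of the proposition, and you leave it as a plan: an induction through incomplete Beta recurrences that you do not carry out. As written, the proof is missing its crux. The obstacle you name is also not real. The factor $(1-s^m)$ multiplies every $h_l$ and cancels in every ratio, so your own likelihood-ratio computation applies directly to the unnormalized weights.

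\textbf{How to close it.} Write $T_l(y)=\int_y^1 h_l(s)\,ds$. Then
\[
\frac{h_1(s)}{h_l(s)}=\frac{m-1}{\binom{m-1}{l}}\left(\frac{s}{1-s}\right)^{l-1}
\]
is increasing in $s$. Hence $T_1(y)=\int_y^1 \frac{h_1}{h_l}h_l\ge \frac{h_1(y)}{h_l(y)}T_l(y)$, which is exactly $\frac{d}{dy}\bigl(T_1/T_l\bigr)\ge 0$. This is the paper's Lemma~\ref{lem:IntDerMon} read in the tail direction. So $T_1(y)/T_l(y)\ge T_1(0)/T_l(0)$ for all $y$, and only the total masses remain to be compared. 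These are immediate. First, $T_1(0)=1-\frac{m-1}{2m-1}=\frac{m}{2m-1}>\frac12$. Second, for $l\ge 2$, $T_l(0)=\frac1l\,\mathbb{E}_{\mathrm{Beta}(m-l,l)}[1-s^m]<\frac1l\le\frac12$. With $c>0$, this gives strict optimality of $l=1$, with no Beta recurrences.

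\textbf{The stronger statement.} The paper instead chains $l$ to $l+1$. That forces it to compute $G_l(1)=\frac1l\bigl(1-\frac{(m-1)!(2m-l-1)!}{(m-l-1)!(2m-1)!}\bigr)$ and run an induction to get $G_l(1)\ge G_{l+1}(1)$. That consecutive comparison is what your claim ``$\mathbb{E}[\Phi_l]/l$ strictly decreasing in $l$'' would need. Your justification that ``the factor $1/l$ dominates'' does not establish it, because $1-\mathbb{E}_{\mathrm{Beta}(m-l,l)}[s^m]$ increases with $l$. For the proposition as stated, comparing $l=1$ directly against each $l$ suffices.

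(A minor slip: $w_l$ is the $\mathrm{Beta}(m-l,l)$ density itself, not $(m-1)$ times it; your ``integrates to $1$'' is the correct statement.)
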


\begin{proof}[Proof Sketch]
    Given the number of admitted contestants \( m \) and the cut-off score \( y \), the highest performance can be expressed as \(\mathbb{E}_{x \sim X^{(n)} | X^{(n-m)}=y}[b(x;y)]\). The distribution conditional on order statistics values is hard to tackle, thus we first establish a probability identity to simplify that:

    \begin{lemma}[Truncated Distribution]\label{lem:Truncate}
Let \(Y_1, Y_2, \ldots, Y_m\) be i.i.d. random variables distributed according to \(F\) truncated on \((a, \infty)\), with density function 
\(f_Y(y) = \frac{f(y)}{1 - F(a)}, \quad y > a.\)
Set \(r = n-m\). When \(X^{(r)} = a\), the conditional distribution \(\left(X^{(n-m+1)}, X^{(n-m+2)}, \ldots, X^{(n)}\right) \mid X^{(n-m)} = a\) is identical to the distribution of \(\left(Y^{(1)}, Y^{(2)}, \ldots, Y^{(m)}\right)\). 
\end{lemma}

     By Lemma~\ref{lem:Truncate}, we can instead take the expectation over the order statistic \( P^{(m)} \) of the truncated distribution \( P(x) \), i.e., 
\[
    \mathbb{E}_{x \sim P_y^{(m)}}[b(x;y)]= \textstyle (V_l-V_{l+1})\int_0^\infty ( \int_0^x \sum_{l=1}^{m-1}  \textstyle (m-l)\binom{m-1}{l-1}
    \textstyle
    (1-P(t))^{l-1}P(t)^{m-l-1}p(t)t\, dt ) \,dP^{(m)}(x).
\]

Let \(Z_l = l(V_l - V_{l+1})\), after switching the integration order, we obtain an expression in terms of \( Z_l \). 
\[
    \mathbb{E}_{x \sim P_y^{(m)}}[b(x;y)]=\textstyle\sum_{l=1}^{m-1}Z_l \int_0^\infty\binom{m-1}{l}  [1-P(t)^{m}]
    (1-P(t))^{l-1}P(t)^{m-l-1}p(t)t \, dt.
\]
By Corollary~\ref{coro:ZeroPrize}, the optimality requires \(V_m = 0\), hence \(\sum_{l=1}^{m-1} Z_l = \sum_{l=1}^{m}V_l = 1\). Thus, the optimization problem reduces to determine the \(Z_{l}\) with the maximum coefficient. Therefore, we transform the expression into quantile space \( q = 1-P(t) \), and isolate the distribution-dependent part.
\begin{lemma}[Distribution Isolation]\label{lem:iso1}
The coefficient of \(Z_l\) is
\[
\textstyle \int_0^1 |x'(q)| \left( \int_0^q \binom{m-1}{l} [1-(1-t)^m]t^{l-1}(1-t)^{m-l-1} \,dt \right) \,dq,
\]
where \( x(q) \) represents the ability value of the contestant at quantile \( q \in [0,1] \), and the distribution-independent term is denoted as \( G_{l}(q) \) \(=\int_0^q g_l(t)\, dt \).
\end{lemma}



Since \( |x'(q)| \) is an arbitrary positive real-valued function, to prove that winner-take-all is the optimal prize structure for any distribution, it suffices to show that \( G_1(q) \geq G_l(q) \) for all \( q \in [0,1] \) and \( 1<l\leq m-1 \).
We show this statement holds by proving \( G_l(q) \geq G_{l+1}(q) \) through two steps.

\underline{Step 1:} We first show that \( G_{l}(q)/G_{l+1}(q) \) is decreasing in \( q \), which is from the following lemma:

\begin{lemma}\label{lem:IntDerMon}
If \( f,g > 0 \), \( (\frac{f(x)}{g(x)})' < 0 \) then \( ( \frac{\int_0^x f(t) dt}{ \int_0^x g(t) dt} )'<0 \).
\end{lemma}

Since \( g_l(t)/g_{l+1}(t) = \frac{l+1}{m-l-1} \cdot \frac{1-t}{t} \) is decreasing in \( t \), \( G_{l}(q)/G_{l+1}(q) \) is also decreasing.

\underline{Step 2:} Next, we only need to show the boundary condition that \( G_{l}(1)/G_{l+1}(1) \geq 1 \), by the following tool:
\begin{lemma}\label{lem:betaInt}
For any \( x > 0 \), \( a,b \in \mathbb{N} \), it holds that:
\[
\int_0^{x}t^{a-1}(x-t)^{b-1}\, dt = \frac{(a-1)!(b-1)!}{(a+b-1)!}x^{a+b-1}.
\]
\end{lemma}
We can transform the integral into a fractional form:
\[
G_l(1) = \frac{1}{l}\left( 1 - \frac{(m-1)!(2m-l-1)!}{(m-l-1)!(2m-1)!} \right).
\]
By induction, we can further prove \( G_l(1) - G_{l+1}(1) \geq 0 \).

In conclusion, we have shown that for all \( q \in [0,1] \), \( G_l(q) \geq G_{l+1}(q) \) holds. Therefore, for any distribution \( |x'(q)| > 0 \) and \( l\neq 1 \), we have:
\(
\int_0^1 |x'(q)| G_1(q) \, dq \geq \int_0^1 |x'(q)| G_l(q) \, dq,
\)
coefficient is maximized at $l=1$.
This proves that the winner-take-all contest is always optimal.
\end{proof}

Knowing that the optimal contest takes form of winner-take-all prize structure, we next examine the optimal shortlist size. 
We show that to maximize the highest individual performance, the shortlist size is two in the asymptotic setting.
\begin{theorem}[Asymptotic Optimal Contest for Highest Individual Performance]\label{thm:AsmHP}
For any ability distribution \( F \), as the number of contestants grows large (i.e., \( n \rightarrow \infty \)), the optimal contest that maximizes the highest individual performance is a two-contestant winner-take-all contest (\( m = 2 \) and \( V_1 = 1 \)) and the value of objective converges to \( \frac{2}{3} \bar{x} \), where \( \bar{x} \) is the supremum of the support of the ability distribution.
\end{theorem}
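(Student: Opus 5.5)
By Proposition~\ref{prop:HighestWTA}, for every shortlist size $m$ we may restrict to the winner-take-all prize vector $V_1=1$, $V_l=0$ for $l>1$. The plan is to bound $\text{HP}$ for each $m$ from above and below, with bounds that depend on the cut-off score only through one scalar. We then send $n\to\infty$. Throughout we assume $\bar x<\infty$; otherwise the limiting value is infinite and only the comparison across $m$ remains meaningful.

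\emph{Step 1 (exact value for a fixed cut-off).} Fix $m$ and the cut-off $y=x^{(n-m)}$. With WTA prizes, Theorem~\ref{thm:sbne} gives
\[
b(x;y)=\int_y^x (m-1)P(t)^{m-2}p(t)\,t\,dt .
\]
By Lemma~\ref{lem:Truncate}, the top ability is distributed as the maximum of $m$ i.i.d.\ draws from $P$, with CDF $P^m$. Exchanging the order of integration, as in the sketch of Proposition~\ref{prop:HighestWTA}, yields
\[
\mathbb{E}[e_{(1)}\mid y]=\int_y^{\bar x} t\,\bigl(1-P(t)^m\bigr)(m-1)P(t)^{m-2}p(t)\,dt .
\]

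\emph{Step 2 (sandwich).} On the range of integration we have $y\le t\le\bar x$, and the weight is nonnegative. Substituting $u=P(t)$, the weight integrates to the distribution-free constant
\[
c_m=\int_0^1(1-u^m)(m-1)u^{m-2}\,du=1-\frac{m-1}{2m-1}=\frac{m}{2m-1}.
\]
Hence
\[
c_m\,y\le \mathbb{E}[e_{(1)}\mid y]\le c_m\,\bar x .
\]
Taking the expectation over $y$ gives
\[
c_m\,\mathbb{E}[X^{(n-m)}]\le \text{HP}(m)\le c_m\bar x .
\]

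\emph{Step 3 (limit and comparison).} For fixed $m$, the order statistic $X^{(n-m)}$ is the $(m+1)$-th largest of $n$ i.i.d.\ draws. It converges to $\bar x$ in probability as $n\to\infty$, because the number of draws exceeding $\bar x-\varepsilon$ is Binomial$(n,1-F(\bar x-\varepsilon))$ and this tends to infinity. Since $0\le X^{(n-m)}\le \bar x$, bounded convergence gives $\mathbb{E}[X^{(n-m)}]\to\bar x$, and therefore $\text{HP}(m)\to c_m\bar x$. The map $m\mapsto m/(2m-1)$ is strictly decreasing, so $c_2=2/3$ is the largest value.

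The main obstacle is that $m$ ranges up to $n$, so the comparison must hold uniformly in $m$ rather than only for each fixed $m$. The upper bound in Step 2 handles this. For every $m\ge3$ and every $n$ we have $\text{HP}(m)\le c_3\bar x=\tfrac35\bar x$, while $\text{HP}(2)\to\tfrac23\bar x>\tfrac35\bar x$. Hence for all sufficiently large $n$, $m=2$ beats every $m\ge3$ simultaneously. Combined with Proposition~\ref{prop:HighestWTA}, the optimal contest is the two-contestant winner-take-all contest, and its value converges to $\tfrac23\bar x$.
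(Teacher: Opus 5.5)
Your proof is correct, and it takes a genuinely different and much more elementary route than the paper.

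The paper integrates out the cut-off first. It passes to quantile coordinates to write $S^{(n)}(m,n,1)=\int_0^1 x'(w)H_m(w)\,dw$ (Lemma~\ref{lem:iso2}). It then rewrites $H_m(w)=\frac{m}{2m-1}-(m-1)\int_0^w I_{\frac{w-z}{1-z}}(n-m,m+1)[1-z^m]z^{m-2}\,dz$ using regularized incomplete beta functions (Lemma~\ref{lem:betarep}). Finally, it uses concentration of the Beta$(n-m,m+1)$ law (Lemma~\ref{lem:AsmStep}) to make the subtracted term vanish, asserting convergence to $\frac{m}{2m-1}\bar x$ at a rate independent of $m$.

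You stay conditional on the cut-off, note that the weight multiplying $t$ integrates to the distribution-free constant $c_m=\frac{m}{2m-1}$, and sandwich $t$ between $y$ and $\bar x$. Your key structural observation is that uniformity in $m$ is only needed on the upper side. There your bound $\text{HP}(m)\le c_m\bar x$ holds exactly for every $n$; in the paper's representation this is the trivial half, since the subtracted term in $H_m$ is nonnegative. The lower side is needed only at $m=2$, where consistency of the third-largest order statistic suffices. This makes Lemmas~\ref{lem:iso2}--\ref{lem:AsmStep} unnecessary for the theorem.

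Your argument also needs only $\bar x<\infty$. The paper's estimates use a uniform bound on the quantile derivative $x'(w)$, which fails for instance for $F(x)=x^2$ on $[0,1]$, one of the paper's own test distributions. Neither argument settles the comparison when $\bar x=\infty$, so your caveat loses nothing relative to the paper.

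What the heavier route buys is an exact kernel for finite $n$. It also gives the limit $\frac{m}{2m-1}\bar x$ for shortlist sizes growing with $n$ (e.g.\ fixed-proportion advancement). In that regime your lower bound $c_m\mathbb{E}[X^{(n-m)}]$ only tends to $c_m$ times an interior quantile. That extra information is not needed for the theorem as stated.
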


\begin{proof}[Proof Sketch]
    When the shortlist size is \(m\), we denote the expected highest performance under a winner-take-all prize structure by \(S^{(n)}(m,n,1)\), which is,
\[
\begin{aligned}
    \mathbb{E}_{y \sim X^{(n-m)}} \big [ \mathbb{E}_{x^{(n)} \sim X^{(n)} \mid y} [ b(x^{(n)} ; y) ] \big ].
\end{aligned}
\]
Upon expansion, this yields a highly complex expression that is intractable in its raw form:
\[
    \textstyle \int_0^\infty \left ( \int_y^\infty x (m-1) [1-P(x)^m] P(x)^{m-2} p(x)  \, dx \right ) \,
    \textstyle n \binom{n-1}{m} F(y)^{n-m-1}(1-F(y))^m f(y) \, dy.
\]

Next, we reduce the original expression to a tractable form by special functions with favorable asymptotic properties, through a series of technical lemmas. Leveraging this form, we can ultimately determine the optimal shortlist size in the asymptotic regime. The detailed procedure is as follows:

\underline{Step 1:} To derive conclusions applicable to all distributions, 
we first separate distribution-independent components.
\begin{lemma}[Distribution Isolation]\label{lem:iso2}
    \(S^{(n)}(m,n,1)\) is equal to,
    \[
    \textstyle\int_0^1 x'(w) \left ( \int_w^1 \int_0^v n (m-1) \binom{n-1}{m} \left[1-\left(\frac{v-u}{1-u}\right)^m\right] (v-u)^{m-2} u^{n-m-1} (1-u) \, du \, dv \right ) \, dw,
    \]
where \( x'(w) \) represents the incremental increase in ability value per unit quantile change. The distribution-independent  component is denoted as \( H_m(w) \).
\end{lemma}

\underline{Step 2:} To facilitate further analysis, we simplify the expression for \( H_m(w) \) through variable substitutions.
\begin{lemma}[Beta Function Representation]\label{lem:betarep} In \(S^{(n)}(m,n,1)\), the distribution-independent part \( H_m(w) \) can be simplified using the regularized incomplete beta function \( I_x(a, b) = \frac{\int_0^x t^{a-1}(1-t)^{b-1}\,dt}{\int_0^1 t^{a-1}(1-t)^{b-1} \, dt} \), into:
    \[
    \textstyle \frac{m}{2m-1} - (m-1)\int_0^w I_{\frac{w-z}{1-z}}(n-m, m+1)[1-z^m]z^{m-2} \, dz.
    \]
\end{lemma}

\underline{Step 3:} We show that the regularized incomplete beta function exhibits favorable properties as \( n \to \infty \), allowing it to be upper bounded by a simple step function.
\begin{lemma}[Upper Bounded by Step Function]\label{lem:AsmStep}
    For the regularized incomplete beta function \( I_x(n-m,m+1) \) and any \( \epsilon > 0 \), \( \delta > 0 \), there exists an \( N \in \mathbb{N} \) such that when \( n > N \), for arbitrary \( m \geq 2 \), the following holds: \( I_x(n-m,m+1) \leq \epsilon \) for \( x \in [0, 1-\frac{m}{n-1}-\delta) \) and \( I_x(n-m,m+1) \leq 1 \) for \( x \in [1-\frac{m}{n-1}-\delta, 1] \).
\end{lemma}
Leveraging this property, we can conduct asymptotic analysis on \( S^{(n)}(m,n,1) = \int_0^1 x'(w) H_m(w) \, dw \) using the reformulated \( H_m(w) \) based on \( I_x\). Ultimately, for any given distribution \( F \), as \( n \to \infty \), and any shortlist size scheme (e.g., fixed proportion advancement or fixed number advancement) the highest performance uniformly converges to:
\[
\frac{m}{2m-1} \bar{x}.
\]
Therefore, we conclude that the maximum is attained when \( m = 2 \), yielding a highest individual performance of \( \frac{2}{3} \bar{x} \).
\end{proof}

Also, previous work has shown that, without pre-selection, \( n \)-contestant winner-take-all contest is optimal
\citep{AS09}. Building on this result, we make the following comparison, showing the power of pre-selection.

\begin{theorem}[Competitive Ratio of Pre-selection]
\label{thm:ratio-hp}
For any ability distribution \( F \), the highest performance achieved by the optimal contest with transparent pre-selection is asymptotically \( \frac{4}{3} \) times higher than that of the standard optimal contest without pre-selection. Formally,
\[
\lim_{n \rightarrow \infty} \frac{S^{(n)}(m^*,n,1)}{S^{(n)}(n,n,1)} = \frac{4}{3},
\]  
pre-selection can enhance the highest performance by \( 1/3 \).
\end{theorem}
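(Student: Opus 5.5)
Since $S^{(n)}(m^*,n,1)\to \frac{2}{3}\bar x$ by Theorem~\ref{thm:AsmHP}, the claimed ratio follows once we show $S^{(n)}(n,n,1)\to \frac{1}{2}\bar x$. Here $S^{(n)}(n,n,1)$ is the highest equilibrium performance in the standard winner-take-all contest with all $n$ contestants and no cut-off. I will compute this denominator directly and then take the ratio of the two limits.

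\textbf{Step 1: closed form for the denominator.} With $m=n$ there is no eliminated contestant, so the cut-off is effectively the bottom of the support and $P=F$. Theorem~\ref{thm:sbne} with $V_1=1$ and $V_l=0$ for $l>1$ gives
\[
b(x)=\int_0^x (n-1)F(t)^{n-2}f(t)\,t\,dt .
\]
The highest performance is $b(X^{(n)})$. Exchanging the order of integration as in the proof sketch of Proposition~\ref{prop:HighestWTA} (the $l=1$ term with $m=n$) yields
\[
S^{(n)}(n,n,1)=\int_0^\infty (n-1)\bigl[1-F(t)^n\bigr]F(t)^{n-2}f(t)\,t\,dt .
\]

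\textbf{Step 2: pass to quantiles and isolate the distribution.} Substitute $u=F(t)$, so that $t=x(1-u)$. Then integrate by parts, writing $t=\int x'$ as in Lemma~\ref{lem:iso1}. This gives
\[
S^{(n)}(n,n,1)=\int_0^1 |x'(q)|\,K_n(q)\,dq, \qquad K_n(q)=\int_{1-q}^1 (n-1)(1-u^n)u^{n-2}\,du .
\]
The kernel satisfies $K_n(1)=1-\frac{n-1}{2n-1}=\frac{n}{2n-1}\to\frac12$. For each fixed $q>0$, $K_n(q)\to \frac12$ as well, because the mass of $(n-1)u^{n-2}(1-u^n)$ concentrates near $u=1$. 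Moreover $0\le K_n\le 1$ uniformly.

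\textbf{Step 3: the limit.} Since $\int_0^1|x'(q)|\,dq=\bar x-\underline x$, dominated convergence gives $S^{(n)}(n,n,1)\to \frac12(\bar x-\underline x)+\frac12\underline x=\frac12\bar x$. The lower endpoint $\underline x$ enters through the boundary term of the integration by parts, in the same way as in the proof of Theorem~\ref{thm:AsmHP}; I would mirror that bookkeeping. Combined with Theorem~\ref{thm:AsmHP}, the ratio tends to $\frac{2/3}{1/2}=\frac43$.

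\textbf{Main obstacle.} The delicate point is the case $\bar x=\infty$, where both limits are infinite and the ratio must be handled directly. Here I would use the representations $S^{(n)}(m,n,1)=\int x' H_m$ and $S^{(n)}(n,n,1)=\int x' K_n$ and compare the kernels near $w=0$. Lemma~\ref{lem:AsmStep} gives $H_2(w)\to\frac23$ uniformly on $[\delta,1]$, and $K_n\to\frac12$ there too. The region $w<\delta$ requires showing $H_2(w)/K_n(w)\to\frac43$ in the relevant sense. I would try to establish a pointwise kernel ratio bound; failing that, I would restrict to distributions with $\bar x<\infty$, as Theorem~\ref{thm:AsmHP} implicitly does by stating its limit as $\frac{2}{3}\bar x$.
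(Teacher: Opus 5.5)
Your proof is correct in the regime the paper's argument actually covers ($\bar x<\infty$), and its skeleton matches the paper's. You take $\lim_n S^{(n)}(m^*,n,1)=\frac23\bar x$ from Theorem~\ref{thm:AsmHP} and divide by the limit of the denominator.

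The real difference is the denominator. The paper does not compute it; it cites \cite{AS09} for the value $\frac12\bar x$. You derive it from Theorem~\ref{thm:sbne} with $m=n$ and $P=F$, and the derivation is sound. The weight $(n-1)(1-u^n)u^{n-2}$ has total mass $\frac{n}{2n-1}$ and concentrates at $u=1$, so $S^{(n)}(n,n,1)=\frac{n}{2n-1}\,\underline x+\int_0^1|x'(q)|K_n(q)\,dq\to\frac12\bar x$ by dominated convergence. This makes the result self-contained, and the step needs only $\bar x<\infty$, not the bounded $x'$ that the paper's proof of Theorem~\ref{thm:AsmHP} uses.

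Put the boundary term $\frac{n}{2n-1}\underline x$ into your Step 2 display; as written, that identity holds only for $\underline x=0$. Also, the paper's proof of Theorem~\ref{thm:AsmHP} tacitly sets $x(0)=0$, so there is no bookkeeping there to mirror. Nothing breaks, though: the analogous numerator term is $\underline x\,H_2(0)=\frac23\underline x$ by Lemma~\ref{lem:betarep}, so the numerator's limit is still $\frac23\bar x$.

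On your main obstacle: do not try to extend to $\bar x=\infty$ via a kernel-ratio bound, because the limit is not $\frac43$ there in general. Take the Pareto tail $1-F(x)=x^{-2}$, $x\ge 1$; using $(1+x)^{-2}$ on $(0,\infty)$ changes nothing asymptotically.

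\emph{Without pre-selection.} Your formula gives $S^{(n)}(n,n,1)\sim\sqrt{\pi n}\,(1-2^{-1/2})\approx 0.519\sqrt n$.

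\emph{With pre-selection, $m=2$.} Conditional on the cut-off $y$, each admitted ability is $y$ times an independent copy of the same Pareto variable. So the conditional highest performance is exactly $\frac{14}{15}y$, and $S^{(n)}(2,n,1)=\frac{14}{15}\mathbb{E}[X_{(3)}]\sim\frac{14}{15}\cdot\frac{\Gamma(5/2)}{2}\sqrt n\approx 0.620\sqrt n$.

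\emph{Larger shortlists.} Taking $m=3$ and $m=4$ gives about $0.577\sqrt n$ and $0.559\sqrt n$, so $m=2$ remains best.

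The ratio therefore tends to about $1.195$, not $\frac43$. Light unbounded tails such as $\mathrm{Exp}(1)$ still give $\frac43$, since the location $\approx\ln n$ swamps the $O(1)$ spacings.

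So your fallback is not optional: some restriction excluding heavy tails is necessary. The condition $\bar x<\infty$ is exactly the implicit scope of the paper's proof, which divides the finite limits $\frac23\bar x$ and $\frac12\bar x$.
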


Complementing the asymptotic results above, we also show that under certain specific ability distributions, such as the uniform distribution, the two-contestant winner-take-all contest remains optimal for any number of participants.

\begin{proposition}[Non-asymptotic Optimal Contest for Uniform Distributions]
For the uniform distribution \(U[0,b]\), the two-contestant winner-take-all contest is optimal for all \( n \geq 2 \).  
\end{proposition}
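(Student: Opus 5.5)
Since Proposition~\ref{prop:HighestWTA} already shows that, for every fixed shortlist size $m$, the winner-take-all structure $V_1=1$ is optimal, the plan is to compare the values $S^{(n)}(m,n,1)$ for $m\in\{2,\dots,n\}$ directly. For the uniform distribution everything can be computed in closed form. I will therefore show that $S^{(n)}(m,n,1)$ is strictly decreasing in $m$ on $\{2,\dots,n\}$. This immediately gives optimality of $m=2$ with $V_1=1$.

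\textbf{Step 1 (conditional value given the cut-off).} Fix the cut-off $y=x^{(n-m)}$. Under $U[0,b]$, the truncated posterior $P$ is uniform on $[y,b]$. By Theorem~\ref{thm:sbne} with winner-take-all prizes and unit linear cost,
\[
b(x;y)=\int_y^x (m-1)P(t)^{m-2}p(t)\,t\,dt .
\]
By Lemma~\ref{lem:Truncate}, the strongest admitted contestant is distributed as the maximum of $m$ i.i.d.\ draws from $P$. Exchanging the order of integration, exactly as in the proof sketch of Proposition~\ref{prop:HighestWTA}, gives
\[
\mathbb{E}[b(X^{(n)};y)\mid y]=\int_y^b t\,(m-1)P(t)^{m-2}p(t)\,\bigl(1-P(t)^m\bigr)\,dt .
\]
Substituting $t=y+(b-y)s$ reduces this to elementary beta-type integrals:
\[
y\,(m-1)\Bigl(\tfrac{1}{m-1}-\tfrac{1}{2m-1}\Bigr)+(b-y)(m-1)\Bigl(\tfrac1m-\tfrac1{2m}\Bigr)
= y\,\frac{m}{2m-1}+(b-y)\,\frac{m-1}{2m}.
\]

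\textbf{Step 2 (averaging over the cut-off).} The expression from Step~1 is affine in $y$, so only $\mathbb{E}[y]$ is needed. Here $y$ is the $(m+1)$-th highest of $n$ i.i.d.\ uniform draws on $[0,b]$, so $\mathbb{E}[y]=b\,\frac{n-m}{n+1}$. Hence
\[
S^{(n)}(m,n,1)=\frac{b}{n+1}\,\varphi(m),\qquad \varphi(m)=\frac{(n-m)m}{2m-1}+\frac{m^2-1}{2m}.
\]
Using $\frac{m}{2m-1}=\frac12+\frac{1}{2(2m-1)}$ and $\frac{m^2}{2m-1}=\frac m2+\frac14+\frac1{4(2m-1)}$, this rewrites as
\[
\varphi(m)=\frac n2-\frac14+\frac{2n-1}{4(2m-1)}-\frac1{2m}.
\]

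\textbf{Step 3 (monotonicity).} For $2\le m\le n-1$, the consecutive difference is
\[
\varphi(m)-\varphi(m+1)=\frac{2n-1}{2(4m^2-1)}-\frac{1}{2m(m+1)}.
\]
This is positive if and only if $(2n-1)m(m+1)>4m^2-1$. Since $n\ge m+1$, we have $2n-1\ge 2m+1$, so it suffices to check $(2m+1)m(m+1)>4m^2-1$. Expanding, this is $2m^3-m^2+m+1>0$, which holds for all $m\ge1$.

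Thus $\varphi$, and hence $S^{(n)}(m,n,1)$, is strictly decreasing in $m$, and $m=2$ is the unique maximizer for every $n\ge2$; when $n=2$ the only choice is $m=2$. The main obstacle is the bookkeeping in Step~1, namely justifying the exchange of integration order and evaluating the integrals correctly. Once the affine-in-$y$ form is obtained, Steps~2 and~3 are routine algebra.
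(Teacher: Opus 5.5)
Your proof is correct and takes essentially the paper's route: both reduce to winner-take-all via Proposition~\ref{prop:HighestWTA}, compute $S^{(n)}(m,n,1)=b\bigl(\frac{m}{2m-1}\cdot\frac{n-m}{n+1}+\frac{m-1}{2m}\cdot\frac{m+1}{n+1}\bigr)$ in closed form, and show it is decreasing in $m$. Conditioning on the cut-off to get an expression affine in $y$ is the same computation as the paper's substitution $v=u+(1-u)\theta$ integrated against the density of $u$, and your consecutive-difference check does the same job as the paper's term-by-term decomposition.
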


\subsection{Total Performance}
For total performance objective, we first show that a winner-take-all contest is also optimal for any given shortlist size. 
\begin{proposition}[Optimal Prize Structure for Total Performance] \label{prop:TotalWTA}
    Given the shortlist size \( m \), the prize structure that maximizes total performance is a winner-take-all, i.e., \( V_1 = 1 \) and \( V_l=0\) for all \( l>1\).
\end{proposition}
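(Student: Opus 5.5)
We will mirror the proof of Proposition~\ref{prop:HighestWTA}, but now averaging over all admitted contestants instead of looking only at the strongest one. Eliminated contestants contribute nothing. By Proposition~\ref{prop:OptSimple} and Corollary~\ref{coro:ZeroPrize} we may assume $V_m=0$. Conditional on the cut-off score $y=x^{(n-m)}$, the posterior beliefs make the $m$ admitted abilities i.i.d.\ with CDF $P$ (Lemma~\ref{lem:Truncate}). Hence the conditional total performance is $m\,\mathbb{E}_{X\sim P}[b(X;y)]$, with $b$ given by Theorem~\ref{thm:sbne} for $g(e)=e$. It therefore suffices to show that, for every fixed $y$, winner-take-all maximizes $\mathbb{E}_{X\sim P}[b(X;y)]$; taking the outer expectation over $y$ then preserves the conclusion.

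The first step is to write $\mathbb{E}_{X\sim P}[b(X;y)]=\int_y^\infty b'(t;y)\,(1-P(t))\,dt$ by swapping the order of integration. Substituting the integrand of Theorem~\ref{thm:sbne} and setting $Z_l=l(V_l-V_{l+1})$, so that $\sum_{l=1}^{m-1}Z_l=\sum_l V_l=1$, gives a linear form in the $Z_l$. Using $\binom{m-1}{l-1}\frac{m-l}{l}=\binom{m-1}{l}$ and changing to the quantile $q=1-P(t)$, the coefficient of $Z_l$ becomes
\[
\int_0^1 x(q)\,k_l(q)\,dq,\qquad k_l(q)=\binom{m-1}{l}q^{l}(1-q)^{m-l-1},
\]
where $x(q)$ is the (decreasing) ability at posterior quantile $q$, with $x(1)=y$. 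Integrating by parts with $K_l(q)=\int_0^q k_l$ yields
\[
y\,K_l(1)+\int_0^1 |x'(q)|\,K_l(q)\,dq .
\]
This is the analogue of Lemma~\ref{lem:iso1}: the distribution enters only through $y$ and the arbitrary positive weight $|x'(q)|$. The problem thus reduces to showing $K_1(q)\ge K_l(q)$ for all $q\in[0,1]$ and all $l$.

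The second step is a two-part comparison, as before. First, the ratio $k_l(q)/k_{l+1}(q)=\frac{l+1}{m-l-1}\cdot\frac{1-q}{q}$ is strictly decreasing in $q$, so by Lemma~\ref{lem:IntDerMon} the ratio $K_l(q)/K_{l+1}(q)$ is also decreasing. Second, for the boundary values we apply Lemma~\ref{lem:betaInt} with $x=1$:
\[
K_l(1)=\binom{m-1}{l}\frac{l!\,(m-l-1)!}{m!}=\frac1m
\]
for every $l$. Thus $K_l(1)=K_{l+1}(1)$, and a decreasing ratio that ends at $1$ must be at least $1$ throughout, so $K_l(q)\ge K_{l+1}(q)$ on $[0,1]$. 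The inequality is strict on $(0,1)$ because the ratio is strictly decreasing there.

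Combining the two steps, the boundary term $y/m$ is the same for every $l$, while the integral term is strictly largest at $l=1$. All weight should therefore go on $Z_1$, which means $V_1=1$ and $V_l=0$ for $l>1$; taking expectation over $y$ completes the argument. The step I expect to need the most care is the integration by parts and change of variables. One must check that the boundary term at $q=0$ vanishes, which follows from $K_l(0)=0$ together with a finite mean, and that $x(1)$ is indeed the cut-off $y$. Notably, the boundary term is exactly what makes the total-performance case differ from Proposition~\ref{prop:HighestWTA}: there the $G_l(1)$ were strictly ordered, whereas here they tie, so the ordering comes entirely from the monotone-ratio step.
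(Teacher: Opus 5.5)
Your proof is correct. Up to the point where the conditional total performance becomes $\sum_{l} Z_l\,c_l(y)$ with $c_l(y)=\int_y^\infty \binom{m-1}{l}(1-P(t))^{l}P(t)^{m-l-1}p(t)\,t\,dt$, it follows the paper's proof. The paper also reduces to $m\,\mathbb{E}_{X\sim P}[b(X;y)]$ and then swaps the order of integration; it gets the reduction from an auxiliary lemma summing the conditional order-statistic densities, which is the same exchangeability you read off from Lemma~\ref{lem:Truncate}.

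The two routes differ only in how the $c_l$ are compared. The paper notes that $m\binom{m-1}{l}(1-P)^{l}P^{m-l-1}p$ is the density of the $(l+1)$-th highest of $m$ i.i.d.\ draws from $P$. Hence $c_l(y)=\frac1m\mathbb{E}[P^{(m-l)}]$, and it then cites first-order stochastic dominance among order statistics. You instead pass to quantile space, integrate by parts, and derive $K_l\ge K_{l+1}$ from Lemma~\ref{lem:IntDerMon} together with the tie $K_l(1)=1/m$. Since $mK_l$ is the CDF of a $\mathrm{Beta}(l+1,m-l)$ variable, this is exactly a monotone-likelihood-ratio proof of the same dominance fact.

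What each buys: the paper's version is shorter and yields the interpretable value $\frac1m\mathbb{E}[P^{(m-l)}]$, which it reuses directly for Proposition~\ref{prop:Allsame}. Yours stays within the paper's own lemmas and makes the contrast with Proposition~\ref{prop:HighestWTA} explicit. It also keeps the boundary term $yK_l(1)$ visible, showing it is independent of $l$.

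The $q=0$ boundary issue you flag can be avoided altogether. Write $x(q)=y+\int_q^1|x'(s)|\,ds$ and exchange the integrals by Tonelli instead of integrating by parts.
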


On the other hand, we analyze how the shortlist size affects total performance when the prize structure is fixed. 
Counterintuitively, if the contest follows a simple contest format, the total performance remains unchanged regardless of the shortlist size. 
More generally, this result extends naturally to arbitrary prize structures, as any such structure can be expressed as a linear combination of simple contests. Formally, 

\begin{proposition}[Total Performance is Independent of Shortlist Capacity] \label{prop:Allsame}
For any number of contestants \( n \) and ability distribution \( F \), the total performance under any feasible rank-order prize structure \( V_1 \geq \ldots \geq V_m \geq 0 \) with budget constraint \( \sum_{l=1}^{m} V_l \leq 1 \) remains unchanged regardless of the shortlist size \( m \). Specifically, the total performance is given by \( \sum_{l=1}^{m-1} Z_l\mathbb{E}_{X^{(n-l)}}[X] \), where \( Z_l = l(V_l - V_{l+1}) \).   
\end{proposition}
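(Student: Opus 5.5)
Because the objective is linear in the prizes, I will compute total performance for a fixed shortlist size $m$ and a general prize vector written in terms of the gaps $Z_l=l(V_l-V_{l+1})$, and then check that the result does not involve $m$.

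\textbf{Step 1: single equilibrium performance.} Fix the cut-off $y=x^{(n-m)}$. Theorem~\ref{thm:sbne} with $g(e)=e$ gives each admitted contestant's performance
\[
b(x;y)=\sum_{l=1}^{m-1}(V_l-V_{l+1})\int_y^x \tbinom{m-1}{l-1}(m-l)(1-P(t))^{l-1}P(t)^{m-l-1}p(t)\,t\,dt .
\]

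\textbf{Step 2: sum over admitted contestants.} By the posterior-beliefs proposition and Lemma~\ref{lem:Truncate}, the $m$ admitted abilities are, given $y$, i.i.d.\ from $P$. Their expected total performance is therefore $m\int b(x;y)\,dP(x)$. Exchanging the order of integration (Fubini) replaces the outer integral by the factor $1-P(t)$. Using $(m-l)\binom{m-1}{l-1}=l\binom{m-1}{l}$ and $m\binom{m-1}{l}=(m-l)\binom{m}{l}$, the $l$-th term becomes
\[
Z_l\int_y^\infty t\,(m-l)\tbinom{m}{l}(1-P(t))^{l}P(t)^{m-l-1}p(t)\,dt .
\]
This is exactly $Z_l\,\mathbb{E}[Y^{(m-l)}]$, the expected $(l+1)$-th largest of $m$ i.i.d.\ draws from the truncated distribution $P$.

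\textbf{Step 3: average over the cut-off.} Applying Lemma~\ref{lem:Truncate} in reverse, the $(l+1)$-th largest among the top $m$ conditional on $X^{(n-m)}=y$ is distributed as $X^{(n-l)}$ conditional on $X^{(n-m)}=y$. Averaging over $y$ gives total performance $\sum_{l=1}^{m-1}Z_l\,\mathbb{E}[X^{(n-l)}]$. The only remaining dependence on $m$ is the range of the sum.

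\textbf{Step 4: independence from $m$.} A prize vector with $V_{m'}=0$ for $m'\le m$ produces the same $Z_l$ sequence, with zeros beyond $m'-1$, whether the shortlist size is $m'$ or $m$. Hence the total performance is the same for every shortlist size. The budget enters only through $\sum_l Z_l=\sum_l V_l$.

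The main obstacle is the bookkeeping in Step 2: the binomial identities and the Fubini exchange must land exactly on the order-statistic density, with the indices shifted correctly so that rank $l$ corresponds to the $(l+1)$-th highest ability. The other steps follow directly from Theorem~\ref{thm:sbne} and Lemma~\ref{lem:Truncate}.
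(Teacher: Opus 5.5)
Your proof is correct, but its key averaging step takes a different route from the paper's. Your Step 2 is the same computation as in the paper's proof of Proposition~\ref{prop:TotalWTA}: treat the admitted contestants as $m$ i.i.d.\ draws from $P$, apply Fubini, and recognize the density of $P^{(m-l)}$.

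The two proofs diverge in Step 3. The paper (Lemma~\ref{lem:SimpleSame}) averages over the cut-off by direct computation. It writes out the densities of $X^{(n-m)}$ and $P^{(m-l)}$, swaps the order of integration, and evaluates $\int_0^{F(x)}(F(x)-t)^{m-l-1}t^{n-m-1}\,dt$ with Lemma~\ref{lem:betaInt}. It then checks that the prefactors collapse to $n\binom{n-1}{n-l-1}$, the density coefficient of $X^{(n-l)}$. You instead note that Lemma~\ref{lem:Truncate} already gives $\mathbb{E}[Y^{(m-l)}]=\mathbb{E}[X^{(n-l)}\mid X^{(n-m)}=y]$, so the tower property finishes the argument with no computation.

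Your route is shorter, avoids the coefficient bookkeeping, and explains why the invariance holds. Conditional on the cut-off, total equilibrium performance is $Z_l$ times the conditional mean of the $(l+1)$-th highest ability in the whole population, so $m$ drops out once the cut-off is integrated out. The paper's route is more mechanical but produces the density of $X^{(n-l)}$ explicitly.

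You also treat a general prize vector directly rather than decomposing it into simple contests; this is equivalent by linearity. Your Step 4 is more precise than the paper about what "independent of $m$" means. The comparison must be between shortlist sizes under which the last-place prize is zero, because if $V_m>0$, enlarging the shortlist to $m+1$ adds the term $mV_m\,\mathbb{E}[X^{(n-m)}]$.

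Both arguments tacitly use $n-m\ge 1$ when invoking Lemma~\ref{lem:Truncate}. The case $m=n$ follows directly from your Step 2 with $P=F$.
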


The intuition is that pre-selection increases perceived competition intensity among the admitted contestants, as they face stronger opponents. This compels them to exert more to stay competitive. Notably, under constant marginal costs, the incentive effect of the shortlist on high-ability contestants \emph{exactly offsets} the loss caused by reduced participation.

Having identified the optimal prize structure and the performance invariance across different shortlist sizes, we are now ready to present the optimal contest design.
\begin{theorem}[Optimal Contest for Total Performance]
\label{thm:opt-tp}
For any number of  contestants \( n \) and ability distribution \( F \), the winner-take-all contest with any shortlist size \( m \), (i.e., \(\mathcal{C} = (m, (1, 0, \ldots, 0)) \) where \( 2 \leq m \leq n \)) is optimal for maximizing total performance. The expected total performance achieved is \( \mathbb{E}_{X^{(n-1)}}[X] \).
\end{theorem}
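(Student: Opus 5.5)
The plan is to combine Proposition~\ref{prop:Allsame} with an elementary comparison of the coefficients $\mathbb{E}_{X^{(n-l)}}[X]$.

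\textbf{Step 1: reduce to a linear program in the $Z_l$.} Fix any feasible configuration $(m,\vec V)$. By Proposition~\ref{prop:Allsame}, its total performance equals
\[
\textstyle\sum_{l=1}^{m-1} Z_l\,\mathbb{E}_{X^{(n-l)}}[X], \qquad Z_l=l(V_l-V_{l+1})\ge 0,
\]
and this expression does not depend on $m$ beyond the range of $l$. Summation by parts gives $\sum_{l=1}^{m-1} Z_l=\sum_{l=1}^{m}V_l-mV_m\le 1$. (Corollary~\ref{coro:ZeroPrize} already says we may take $V_m=0$, but only the inequality is needed.) So the designer's problem becomes: maximize a nonnegative linear form over the simplex $\{Z\ge 0,\ \sum_l Z_l\le 1\}$.

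\textbf{Step 2: the top coefficient is largest.} The coefficient of $Z_l$ is the expected $l$-th highest ability, $\mathbb{E}[x_{(l)}]$. Since $x_{(1)}\ge x_{(2)}\ge\cdots$ pointwise, we have $\mathbb{E}_{X^{(n-1)}}[X]\ge \mathbb{E}_{X^{(n-l)}}[X]\ge 0$ for every $l$. Hence
\[
\textstyle\sum_l Z_l\,\mathbb{E}_{X^{(n-l)}}[X]\le \mathbb{E}_{X^{(n-1)}}[X],
\]
with equality at $Z_1=1$ and $Z_l=0$ for $l\ge 2$. These values correspond to $V_1-V_2=1$ and $V_2=\cdots=V_m=0$, which is exactly the winner-take-all structure and is consistent with Proposition~\ref{prop:TotalWTA}.

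\textbf{Step 3: any shortlist size attains the bound.} Plugging the winner-take-all structure into Proposition~\ref{prop:Allsame} gives the value $\mathbb{E}_{X^{(n-1)}}[X]$ for every $2\le m\le n$. Every configuration is bounded above by this value, so each $\mathcal{C}=(m,(1,0,\ldots,0))$ is optimal.

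\textbf{Main obstacle.} The work is concentrated in Proposition~\ref{prop:Allsame}, which I am taking as given. Once it is available, the only care needed is the budget accounting: $\sum_l Z_l\le 1$ must hold even if $V_m>0$, and the summation-by-parts identity above covers this case. Note that strict dominance between coefficients is not needed, because the theorem claims optimality but not uniqueness.
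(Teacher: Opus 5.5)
Your argument is correct and somewhat leaner than the paper's. The paper combines three inputs: Proposition~\ref{prop:TotalWTA}, which shows winner-take-all is optimal for each fixed $m<n$ by conditioning on the realized cut-off $y$ and using stochastic dominance among the truncated order statistics $P^{(m-l)}$; Proposition~\ref{prop:Allsame}, which gives invariance in $m$; and the result of \cite{MS01} for the case $m=n$ without pre-selection. You use only the closed form $\sum_l Z_l\,\mathbb{E}_{X^{(n-l)}}[X]$ from Proposition~\ref{prop:Allsame}, and you compare coefficients via the pointwise ordering of the unconditional order statistics. This makes Proposition~\ref{prop:TotalWTA} a consequence rather than an input. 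Your summation-by-parts identity also handles $V_m>0$ without Corollary~\ref{coro:ZeroPrize}. In exchange, the paper's route yields a stronger intermediate fact---winner-take-all is optimal for every realized cut-off, not only ex ante---which this theorem does not need.

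Two small points. First, the coefficient of $Z_l$ is the expected $(l+1)$-th highest ability, since $x^{(n-l)}=x_{(l+1)}$ in the paper's notation. So the optimal value $\mathbb{E}_{X^{(n-1)}}[X]$ is the expected second-highest ability. Your displayed inequalities are stated correctly, so nothing breaks.

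Second, for $m=n$ you rely on the statement of Proposition~\ref{prop:Allsame}. However, its proof via Lemma~\ref{lem:SimpleSame} integrates against the density of the cut-off $X^{(n-m)}$, so it only covers $m<n$; that is why the paper cites \cite{MS01} for this case. Either cite that result or add a one-line check: with no cut-off the posterior is $F$ itself, and the computation in the proof of Proposition~\ref{prop:TotalWTA} then gives total performance $\sum_l Z_l\,\mathbb{E}_{X^{(n-l)}}[X]$ directly.
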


\begin{figure*}[t!]
    \centering
    \includegraphics[width=\textwidth]{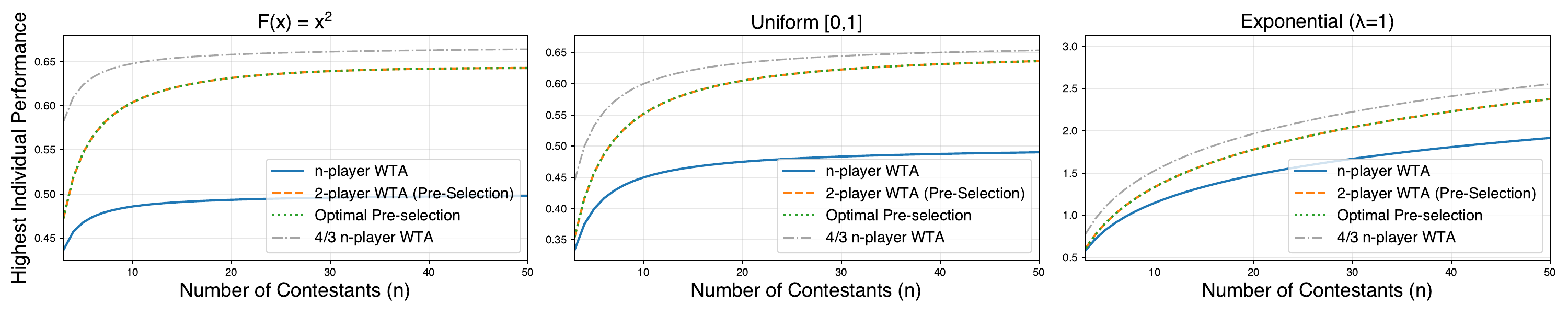}
    \caption{Highest individual performance under optimal pre-selection across three ability distributions. The optimal contest consistently reduces to a 2-contestant WTA contest and rapidly approaches the asymptotic \(4/3\) improvement over the optimal contest without pre-selection.}
    \label{fig:Asymp}
\end{figure*}

\begin{figure*}[t!]
    \centering
    \includegraphics[width=\textwidth]{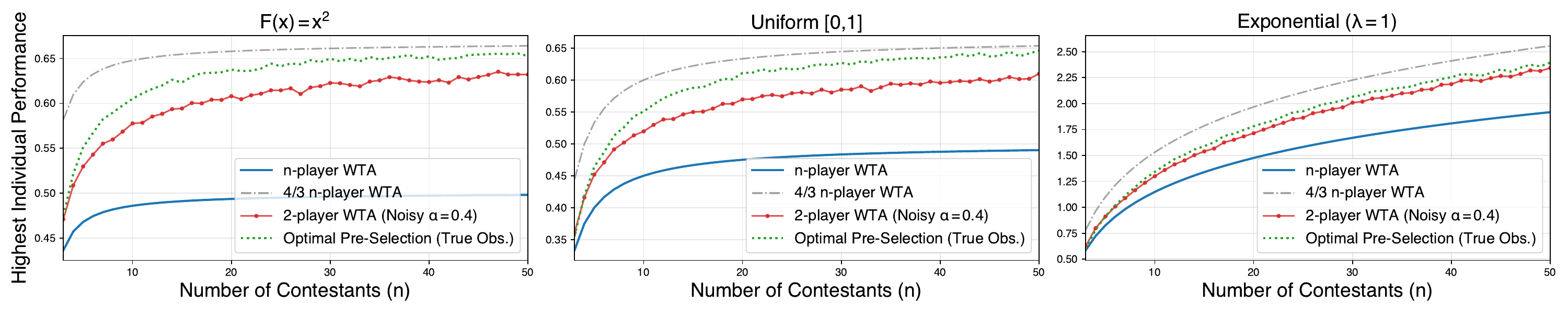}
    \caption{Robustness of pre-selection under noisy ability observations (\(\tilde x_i = (1-\alpha)x_i + \alpha \varepsilon_i\), \(\varepsilon_i \sim F\), \(\alpha=0.4\)). Even when the designer may eliminate high-ability agents due to noise, the induced 2-contestant WTA contest remains close to the ideal benchmark and strictly outperforms the optimal \(n\)-contestant WTA without pre-selection across all tested distributions.}
    \label{fig:Noisy}
\end{figure*}

\section{Experiments}\label{sec:exp}

Our theoretical results on the optimal contest design under the highest individual performance objective are derived in an asymptotic regime. In this section, we provide numerical evidence under three typical ability distributions (\(F(x)=x^2\), \(\text{U}[0,1]\), and \(\text{Exp}(1)\)), to demonstrate the robustness of these conclusions beyond the asymptotic setting. These distributions represent ability populations with convex, linear, and concave CDFs, covering environments where high-ability agents are dense, uniformly spread, or extremely sparse. The results are shown in Figure~\ref{fig:Asymp} and Figure~\ref{fig:Noisy}.

\textbf{Non-Asymptotic Optimality.}
Although the optimal shortlist size (\(m=2\)) is obtained asymptotically, the numerical results reveal a strikingly stronger phenomenon: for all tested contestant sizes \(n\), the optimal contest with pre-selection consistently reduces to a 2-contestant winner-take-all contest. Moreover, this mechanism approaches the asymptotic \(4/3\) performance improvement at very small \(n\) (\(n \le 8\)), which is stable across all three distributions (It is noteworthy that the apparent universal optimality of the 2-contestant WTA contest in Figure~\ref{fig:Asymp} is \emph{distribution-dependent}, Appendix~\ref{apx:non-optimal} provides a counterexample where larger shortlist sizes become optimal in finite populations.). This indicates that the advantage of shortlist-based contests is not merely a large-population artifact, but persists robustly in finite settings.

\textbf{Robustness to Noisy Observation.}
In practice, designers rarely learn contestants’ exact abilities and must rely on noisy estimates derived from historical ratings or machine learning algorithms. We model this by adding an independent noise term with the same scale as the ability distribution (\(\tilde x_i = (1-\alpha)x_i + \alpha \varepsilon_i\), \(\varepsilon_i \sim F\)) and set a relatively large noise level (\(\alpha=0.4\)), allowing the designer to mistakenly eliminate high-ability contestants. Even under such substantial noise, the resulting 2-contestant WTA contest still consistently outperforms the optimal \(n\)-contestant WTA without pre-selection, and remains very close to the ideal benchmark where abilities are perfectly observed. This demonstrates that even coarse and noisy ranking information is sufficient for the designer to leverage informational advantages via pre-selection, yielding robust improvements in highest individual performance.


\section{Conclusion and Future Directions}
In this work, we study a rank-order contest with pre-selection, in which a cut-off score must be announced to ensure transparency. We first derive a closed-form characterization of the unique contestant equilibrium performance. Then, we analyze two common designer's objective functions and determine the corresponding optimal contest designs. We also compare the performance obtained by the optimal contest with or without a pre-selection stage, showcasing the incentive effect of a cut-off score.

Several promising directions remain for future research. First, more general cost functions, such as convex or concave forms, could be incorporated. Second, alternative and more realistic objectives, such as winner performance or threshold-based goals could be explored within our framework. Thirdly, allowing for noisy ability observation is theoretically challenging but interesting. Finally, extending our model to a multi-contest setting presents an interesting avenue, where competition among contest designers offers rich opportunities for both academic inquiry and practical application.

\newpage
\bibliographystyle{unsrtnat}
\bibliography{aaai2026}

\newpage
\appendix
\setcounter{theorem}{0}
\newtheorem{apptheorem}{Theorem}[section]
\newtheorem{applemma}[apptheorem]{Lemma}
\section*{Appendix}
\section{Missing Proofs in Section~\ref{sec:pre}}
\subsection{Proof of Proposition~1}
\begin{proof}
    Let $\mathcal I$ be the set of admitted contestants, and reindex them as $[m]:=\{1,2,\ldots, m\}$. Without loss of generality, assume contestant $i$ is reindexed as contestant $1$. Her posterior belief over the ability values $(x_2, x_3, \ldots, x_m)$ of other contestants is:
    \[
    \begin{aligned}
        & \textstyle \Pr \left[ \bigwedge_{j=2}^m X_j = x_j \mid X_1 = x_1, \mathcal I , x^{(n-m)}\right] 
        \\ = &  \frac{\Pr \left[\bigwedge_{j=1}^m X_j = x_j, \mathcal I , x^{(n-m)}\right]}{\Pr[X_1=x_1, \mathcal I, x^{(n-m)}]}.
    \end{aligned}
    \]

    The denominator can be written as $\Pr[X_1=x_1] \Pr[\mathcal I, x^{(n-m)} \mid X_1 = x_1]$. Here, the event $\mathcal{I} \wedge x^{(n-m)}$ means that all contestants in $\mathcal{I}$ have ability values exceeding $x^{(n-m)}$, some contestant $k \in [n] \backslash \mathcal{I}$ has ability value equal to $x^{(n-m)}$, and all other contestants in $[n] \backslash \mathcal{I}$ have ability values below $x^{(n-m)}$. Since contestant $1$ is confirmed to be admitted, under the condition $X_1 = x_1 > x^{(n-m)}$, this is equivalent to all contestants in $\mathcal{I} \backslash \{1\}$ having ability values above $x^{(n-m)}$, some contestant $k \in [n] \backslash \mathcal{I}$ having ability value equal to $x^{(n-m)}$, and all other contestants in $[n] \backslash \mathcal{I}$ having ability values below $x^{(n-m)}$. Thus,
    \[
    \begin{aligned}
        & \Pr[X_1=x_1, \mathcal I, x^{(n-m)}] \\
        = & \Pr[X_1=x_1] \Pr[\mathcal I, x^{(n-m)} \mid X_1 = x_1] \\
        = & f(x_1) \textstyle \binom{n-m}{1} [X_{m+1}= x^{(n-m)}, \\
        & \textstyle \bigwedge_{j=2}^m X_j > x^{(n-m)}, \bigwedge_{k=m+2}^n X_k < x^{(n-m)} ] \\
        = & \textstyle (n-m) f(x_1) f(x^{(n-m)}) F(x^{(n-m)})^{n-m-1} 
        \\ & \cdot (1-F(x^{(n-m)}))^{m-1}.
    \end{aligned}
    \]

    Since the ability values of admitted contestants are no less than those of eliminated contestants, when $(x_2, x_3, \ldots , x_m) > x^{(n-m)} \mathbf{1}$, the numerator becomes:
    \[
    \begin{aligned}
        & \textstyle \Pr \left[\bigwedge_{j=1}^m X_j = x_j, \mathcal I , x^{(n-m)}\right] \\
        = & \textstyle \Pr\left[\bigwedge_{j=1}^m X_j = x_j\right] \Pr\left[\mathcal I , x^{(n-m)} \mid \bigwedge_{j=1}^m X_j = x_j\right] \\
        = & \textstyle \prod_{j=1}^{m} f(x_j) \binom{n-m}{1} \Pr [X_{m+1}=x^{(n-m)}, \\
         & \textstyle \bigwedge_{j=1}^m X_j = x_j > x^{(n-m)}, \bigwedge_{k=m+2}^n X_k <x^{(n-m)} ] \\
        = & \textstyle (n-m) \prod_{j=1}^{m} f(x_j) f(x^{(n-m)}) F(x^{(n-m)})^{n-m-1}.
    \end{aligned}
    \]

    Therefore, the joint posterior probability density can be expressed as:
    \[
    \begin{aligned}
    &p(x_2, \ldots,x_m \mid x_1, x^{(n-m)}) \\
    = &
    \begin{cases} 
    \frac{\prod_{j=2}^{m} f(x_j)}{(1-F(x^{(n-m)}))^{m-1}}, & \text{if } \forall x_j > x^{(n-m)}, \\
    0, & \text{otherwise},
    \end{cases}
    \end{aligned}
    \]
    which is independent of contestant $1$'s own ability value, and thus can be denoted as $p(x_{-1})$.

    For another admitted contestant $k \neq 1$, the marginal probability density function is:
    \[
    \begin{aligned}
         p(x_k) & = \textstyle \int_{x_{-\{1,k\}} > x^{(n-m)}\mathbf{1}} p(x_{-1}) \, dx_{-\{1,k\}} \\
        & = \textstyle \underbrace{\int_{x^{(n-m)}}^{1} \cdots \int_{x^{(n-m)}}^{1}}_{m-2} \frac{\prod_{j=2}^{m} f(x_j)}{(1-F(x^{(n-m)}))^{m-1}} \, \underbrace{dx_2 \dots dx_m}_{m-2} \\
        & = \textstyle \frac{f(x_k)}{1-F(x^{(n-m)})}.
    \end{aligned}
    \]

    Hence, $p(x_1) = \prod_{j=2}^m p(x_j)$, meaning the joint probability density is the product of marginal densities, and the observer's posterior beliefs over any other admitted contestant are mutually independent. Finally, the cumulative distribution function is computed as $P(x_j) = \int_{x^{(n-m)}}^{x_j} p(x) \,dx = \frac{F(x_j) - F(x^{(n-m)})}{1-F(x^{(n-m)})}$. 
\end{proof}

\subsection{Proof of Theorem~1}
\begin{proof}
    For analytical convenience, we first introduce some notation. Since the symmetric strategy function \( b: x_i \mapsto e_i \) is strictly increasing, its inverse exists and is denoted by \( \gamma_i(\cdot) \). Here, \( \gamma(e_i) \) represents the ability value \( x_{e_i} \) corresponding to performance \( e_i \) under strategy \( b(\cdot) \), abbreviated as \( \gamma_i \). Thus, for each contestant \( i \), choosing performance \( e_i \) is equivalent to reporting an ability value \( \gamma_i \).

    In the symmetric Bayesian Nash equilibrium, each contestant's performance \( b(x_i) \) is the best response given her own ability \( x_i \), i.e., for all \( i \in [m] \):
    \[
    b(x_i) \in \mathop{\arg \max}_{e_i} u_i := \sum_{l=1}^{m} V_l P_{i,l} - \frac{g(e_i)}{x_i},
    \]
    where \( P_{i,l} \) denotes the probability that contestant \( i \) ranks \( l \)-th when she chooses performance \( e_i \) while all other contestants follow strategy \( b(\cdot) \).

    Since \( e_i \) and \( \gamma_i \) are bijective, the above expression can be rewritten as:
    \[
    b(x_i) \in \left\{ b(\gamma_i) \mid \mathop{\arg \max}_{\gamma_i} \sum_{l=1}^{m} V_l P_{(i,l)} - \frac{g(b(\gamma_i))}{x_i} \right\}.
    \]

    When contestant \( i \) reports ability \( \gamma \), her posterior subjective probability of ranking \( l \)-th is:
    \[
    \begin{aligned}
        & P_{i,l}(\gamma| X_i =x_i) 
        \\= & \textstyle \binom{m-1}{l-1} \Pr\left[\bigwedge_{j=1}^{l-1} X_j > \gamma, \bigwedge_{k=l+1}^m X_j < \gamma \mid x_i, \mathcal{I}, x^{(n-m)}\right] 
        \\ = & \textstyle \binom{m-1}{l-1} \prod_{j=1}^{l-1} \Pr[ X_j > \gamma \mid x_i, \mathcal{I}, x^{(n-m)}] \cdot \\ & \textstyle \prod_{k=l+1}^{m} \Pr[X_j < \gamma \mid x_i, \mathcal{I}, x^{(n-m)}]
        \\ = & \textstyle \binom{m-1}{l-1} [1-P(\gamma)]^{l-1}P(\gamma)^{m-l}.
    \end{aligned}
    \]
    This expression is independent of the contestant's own ability \( x_i \), so we omit the subscripts and conditions and simply denote it as \( P_{l}(\gamma) \).

    Differentiating the utility \( u_i(\gamma) \) with respect to the reported ability \( \gamma \) yields the first-order condition:
    \[
    \textstyle \sum_{l=1}^m P'_l(\gamma)V_l x_i = g'(b(\gamma))b'(\gamma).
    \]

    In equilibrium, each contestant follows the strategy function \( b(\cdot) \), so \( \gamma_i = \gamma(e_i) = \gamma(b(x_i)) = x_i \). Substituting into the first-order condition gives:
    \[
    \textstyle \sum_{l=1}^m P'_l(x_i)V_l x_i = g'(b(x_i))b'(x_i).
    \]

    Since the equilibrium condition holds for any realized ability \( x_i > x^{(n-m)} \), we replace \( x_i \) with variable \( t \) and integrate both sides over \( (x^{(n-m)}, x_i] \) to obtain:
    \[
    \begin{aligned}
        \int_{x^{(n-m)}}^{x_i} \sum_{l=1}^m P'_l(t)V_l t \, dt & = \int_{x^{(n-m)}}^{x_i} g'(b(t))b'(t) \, dt\\
        & = g(b(x_i)).
    \end{aligned}
    \]

    Applying the inverse cost function \( g^{-1}(\cdot) \) to both sides yields the unique candidate expression for the symmetric equilibrium strategy:
    \begin{align}
        b(x ; x^{(n-m)}) = g^{-1}\left(\int_{x^{(n-m)}}^{x}\sum_{l=1}^{m}P'_{l}(t) V_l t \, dt\right). \label{eq:EqRaw} 
    \end{align}

    We now verify that this symmetric strategy indeed constitutes an equilibrium. Substituting the strategy into the utility function and differentiating with respect to \( \gamma \) gives:
    \[
    \begin{aligned}
        u'(\gamma;x_i) & = \sum_{l=1}^m P_l'(\gamma)V_l - \frac{\sum_{l=1}^m P_l'(\gamma)V_l\gamma}{x_i} \\
        & = \sum_{l=1}^m P_l'(\gamma)V_l \left( 1- \frac{\gamma}{x_i}\right).
    \end{aligned}
    \]
    When \( 0 < \gamma < x_i \), \( u' > 0 \); When \( \gamma > x_i \), \( u' < 0 \). Thus, the utility attains its global maximum at \( \gamma = x_i \), meaning that contestant \( i \)'s optimal response is to exhibit performance \( e_i = b(x_i) \) according to the symmetric strategy. This confirms the equilibrium. We now derive the complete expression for \( b(\cdot) \).

    Differentiating \( P_l(\gamma) \) gives:
    \[
        P_l'(\gamma) = \textstyle \binom{m-1}{l-1}(m-l)(1-P(\gamma))^{l-1}P(\gamma)^{m-l-1}p(\gamma) 
        \textstyle - \binom{m-1}{l-1}(l-1)(1-P(\gamma))^{l-2}P(\gamma)^{m-l}p(\gamma).
    \]
    
    Since \( \binom{m-1}{l-1} (m-l) = \frac{(m-1)!}{(l-1)!(m-l-1)!} = \binom{m-1}{l} l \), the first term in \( P'_l(\gamma) \) and the second term in \( P'_{l+1}(\gamma) \) can be combined. Note that the second term in \( P_{1}'(\gamma) \) and the first term in \( P_{m}'(\gamma) \) are zero. The remaining \( 2m -2 \) non-zero terms in \( \sum_{l=1}^{m}P'_{l}(t) V_l \) can be reorganized through telescoping to obtain:
    \[
        \textstyle \sum_{l=1}^{m}P'_{l}(t) V_l = \textstyle \sum_{l=1}^{m-1}(V_l-V_{l+1})\binom{m-1}{l-1}(m-l)
        \textstyle \cdot(1-P(t))^{l-1}P(t)^{m-l-1}p(t).
    \]

    Substituting into the equilibrium strategy expression yields the final form:
    \[
    \textstyle b(x;x^{(n-m)})
    = \textstyle g^{-1}(\int_{x^{(n-m)}}^x \sum_{l=1}^{m-1}(V_l-V_{l+1})\binom{m-1}{l-1}(m-l)
    \cdot(1-P(t))^{l-1}P(t)^{m-l-1}p(t)t\, dt).        
    \]

    Finally, we verify the strategy's validity. Since we consider a rank-order reward structure where \( V_l > V_{l+1} \), the integrand is positive almost everywhere, ensuring the non-negativity and monotonicity of \( b(x) \). Moreover, letting \( T \) denote the integrand in \( b(\cdot) \), since \( T \) is bounded, we have:
    \[
    \lim_{x \to x^{(n-m)}} g(b(x;x^{(n-m)})) \to 0.
    \]
    Thus, we have established the existence of a unique Bayesian Nash equilibrium, completing the proof.

    The equilibrium performance expression shows that contestants are incentivized by the differences between consecutive prizes, $(V_{l} - V_{l+1})$, rather than by their absolute values. Furthermore, when the cost function $g(\cdot)$ is linear, the marginal impact of each prize gap on performance is additive, indicating these effects are independent of one another.
\end{proof}

\section{Missing Proofs in Section~\ref{sec:opt}}

\subsection{Proof of Corollary~1}
\begin{proof}
    From Equation~\eqref{eq:EqRaw}, the contribution of the prize \( V_m \) to the performance of a participant with ability \( x \) is \( \int_0^x P'_m(t)V_m t \, dt \), where:
    \[
    \begin{aligned}
        P_m'(t) = & \textstyle \binom{m-1}{m-1}(m-m)(1-P(t))^{m-1}P(t)^{m-m-1} 
        \textstyle - \binom{m-1}{m-1}(m-1)(1-P(t))^{m-2}P(t)^{m-m} 
        \\ = & - (m-1)(1-P(t))^{m-2} \leq 0,
    \end{aligned}
    \]
    holds for all \( t > 0 \), and \( P'_m(t) < 0 \) on a subset of \( (0,x] \) with non-zero measure.
    
    Since \( g^{-1}(\cdot) \) is strictly monotonically increasing, for any realization of the eliminated ability \( x^{(n-m)} < x \), we have:
    \[
    \begin{aligned}
        b(x ; x^{(n-m)}) & = g^{-1}\left(\int_{x^{(n-m)}}^{x}\sum_{l=1}^{m}P'_{l}(t) V_l t \, dt\right) \\
        & < g^{-1}\left(\int_{x^{(n-m)}}^{x}\sum_{l=1}^{m-1}P'_{l}(t) V_l t \, dt\right),
    \end{aligned}
    \]
    which means setting \( V_m = 0 \) improves the performance of every admitted participant with any ability level. This completes the proof.
\end{proof}

\subsection{Proof of Proposition~2}
\begin{proof}
    We first consider the ex-post utility. The designer's utility is given by:
    \[
    \begin{aligned}
        u_d(\vec{x}) = & \textstyle \vec{c} \cdot e(\vec{x}) 
        \\ = & \textstyle\sum_{i=1}^m c_i b(x^{(n-i+1)};x^{(n-m)})
        \\ = & \textstyle\sum_{i=1}^m c_i \left ( k^{-1} \int_{x^{(n-m)}}^{x^{(n-i+1)}}\sum_{l=1}^{m}P'_{l}(t) V_l t \, dt \right ) 
        \\ = & \textstyle \sum_{i=1}^m c_i \bigg ( k^{-1} \int_{x^{(n-m)}}^{x^{(n-i-1)}} \sum_{l=1}^{m-1}(V_l-V_{l+1})\binom{m-1}{l-1}
         (m-l)(1-P(t))^{l-1}P(t)^{m-l-1}p(t)t\, dt \bigg) 
        \\ = & \sum_{l=1}^{m-1} (V_l-V_{l+1}) \bigg ( k^{-1}\sum_{i=1}^{m}c_i \int_{x^{(n-m)}}^{x^{(n-i-1)}} \binom{m-1}{l-1}
         (m-l)(1-P(t))^{l-1}P(t)^{m-l-1}p(t)t\, dt \bigg ),
    \end{aligned}
    \]
    For convenience, we define the term independent of the prize structure \( \vec{V} \) as \( T_l > 0 \). The designer's optimization problem becomes:
    \[
    \begin{aligned}
        \mathop{\arg \max}_{m,\vec{V}} \quad & \sum_{l=1}^{m-1}(V_l-V_{l+1}) \, T_l \\
        \text{s.t.} \quad & \sum_{l=1}^{m} V_l \leq B \\
            & V_l  \geq V_{l+1} \geq 0.
    \end{aligned}
    \]

    Let \( Z_l = l(V_l-V_{l+1}) \). From Corollary~\ref{coro:ZeroPrize}, we know that the optimal solution must satisfy \( V_m = 0 \). Thus, the optimization problem can be rewritten as:
    \[
    \begin{aligned}
        \mathop{\arg \max}_{m,\vec{V}} \quad & \sum_{l=1}^{m-1}Z_l \, \frac{T_l}{l}\\
        \text{s.t.} \quad & \sum_{l=1}^{m} Z_l \leq B \\
            & Z_l \geq 0,
    \end{aligned}
    \]
    where it is straightforward to verify that \( \sum_{l=1}^{m-1} Z_l = \sum_{l=1}^{m} V_l -mV_m\leq B-0= B \).

    For a fixed number of winners \( m \), the optimal solution corresponds to setting \( Z_{l^*} = B \) for some \( l^* \) and \( Z_l = 0 \) for all others. This implies that the first \( l^* \) prizes are equal to \( \frac{B}{l^*} \), while all subsequent prizes are \( 0 \), i.e., \( V_1 = \ldots=V_{l^*} = \frac{B}{l^*} > V_{l^*+1} = \ldots = V_m = 0 \), with \( l^* < m \). This corresponds to a non-trivial simple contest. By enumerating all possible \( m \in [n-1] \), the global optimum must also be a non-trivial simple contest, as desired.

    For the ex-ante utility \( u_d = \mathbb{E}[ \vec{c} \cdot e(\vec{x})] \), due to the linearity of expectation, the linearity of the cost function, and the linearity of the performance function in \( (V_l - V_{l+1}) \), the original optimization objective can still be transformed into a positive linear combination of \( Z_l \). Therefore, the same analysis applies, and the optimal contest design must be a non-trivial simple contest. This completes the proof. 
\end{proof}

\subsection{Proof of Lemma~1}
\begin{proof}
We first derive the joint probability density function of the conditional distribution. The joint density of the order statistics \(X^{(1)}, X^{(2)}, \ldots, X^{(n)}\) is:
\[
\begin{aligned}
& f_{X^{(1)}, \ldots, X^{(n)}}(x_1, \ldots, x_n) 
\\ = & n! \prod_{i=1}^n f(x_i), \quad x_1 < x_2 < \cdots < x_n.
\end{aligned}
\]

Consider the joint density of the subset \((X^{(r)}, X^{(r+1)}, \ldots, X^{(n)})\). To obtain the joint density of \(X^{(r)}, \ldots, X^{(n)}\), we marginalize over the first \(r-1\) order statistics \(X^{(1)}, \ldots, X^{(r-1)}\) by integrating:
\[
\begin{aligned}
    & f_{X^{(r)}, \ldots, X^{(n)}}(x_r, \ldots, x_n)
    \\ = & \textstyle \int_{x_1 < \cdots < x_{r-1} < x_r} f_{X^{(1)}, \ldots, X^{(n)}}(x_1, \ldots, x_n) \, dx_1 \cdots dx_{r-1}.
\end{aligned}
\]

Substituting the joint density of the full order statistics:
\[
\textstyle = n! \prod_{i=r}^n f(x_i) \cdot \int_{x_1 < \cdots < x_{r-1} < x_r} \prod_{i=1}^{r-1} f(x_i) \, dx_1 \cdots dx_{r-1}.
\]

We need to compute:
\[
\int \cdots \int_{x_1 < \cdots < x_{r-1} < x_r} \prod_{i=1}^{r-1} f(x_i) \, dx_1 \cdots dx_{r-1}.
\]

This integral represents the joint density of \(x_1, \ldots, x_{r-1}\) under the constraint \(x_1 < \cdots < x_{r-1} < x_r\). It is equivalent to integrating the joint density of the order statistics of \(r-1\) iid random variables over \((-\infty, x_r)\).

For \(r-1\) i.i.d. random variables \(Z_1, \ldots, Z_{r-1}\), the joint density of their order statistics \(Z^{(1)} < \cdots < Z^{(r-1)}\) is:
\[
\begin{aligned}
    & f_{Z^{(1)}, \ldots, Z^{(r-1)}}(z_1, \ldots, z_{r-1})  \\
    = & (r-1)! \prod_{i=1}^{r-1} f(z_i), \quad z_1 < \cdots < z_{r-1}.
\end{aligned}
\]

Thus, the marginal density of \(Z^{(r-1)}\) can be obtained by integration:
\[
\begin{aligned}
    & f_{Z^{(r-1)}}(z_{r-1}) \\ 
    = & \int_{z_1 < \cdots < z_{r-2} < z_{r-1}} (r-1)! \prod_{i=1}^{r-1} f(z_i) \, dz_1 \cdots dz_{r-2}.
\end{aligned}
\]

However, we need the integral over all \(z_1, \ldots, z_{r-1}\) in \((-\infty, x_r)\):
\[
\begin{aligned}
    & \int \cdots \int_{z_1 < \cdots < z_{r-1} < x_r} \prod_{i=1}^{r-1} f(z_i) \, dz_1 \cdots dz_{r-1} 
    \\ = & \frac{1}{(r-1)!} P(Z^{(r-1)} < x_r).
\end{aligned}
\]

Since \(Z^{(r-1)}\) is the maximum of \(r-1\) i.i.d. random variables, its distribution function is:
\[
P(Z^{(r-1)} < x_r) = [F(x_r)]^{r-1}.
\]

Therefore,
\[
\int_{x_1 < \cdots < x_{r-1} < x_r} \prod_{i=1}^{r-1} f(x_i) \, dx_1 \cdots dx_{r-1} = \frac{[F(x_r)]^{r-1}}{(r-1)!}.
\]

Substituting this result back, we obtain the joint density of the subset:
\[
\begin{aligned}
    & f_{X^{(r)}, \ldots, X^{(n)}}(x_r, \ldots, x_n) \\
    = & n! \prod_{i=r}^n f(x_i) \cdot \frac{[F(x_r)]^{r-1}}{(r-1)!} 
    \\ = & \frac{n!}{(r-1)!} [F(x_r)]^{r-1} \prod_{i=r}^n f(x_i).
\end{aligned}
\]

The marginal density of the \(r\)-th order statistic \(X^{(r)}\) is:
\[
f_{X^{(r)}}(x) = \frac{n!}{(r-1)!(n-r)!} [F(x)]^{r-1} [1 - F(x)]^{n-r} f(x).
\]

Given \(X^{(r)} = a\), the conditional density is:
\[
\begin{aligned}
    & f_{X^{(r+1)}, \ldots, X^{(n)} \mid X^{(r)}}(y_{r+1}, \ldots, y_n \mid a) 
    \\ = & \frac{f_{X^{(r)}, X^{(r+1)}, \ldots, X^{(n)}}(a, y_{r+1}, \ldots, y_n)}{f_{X^{(r)}}(a)}.
\end{aligned}
\]

Substituting the joint and marginal densities, we compute the ratio:
\[
\begin{aligned}
&f_{X^{(r+1)}, \ldots, X^{(n)} \mid X^{(r)}}(y_{r+1}, \ldots, y_n \mid a) \\ = &  \frac{\frac{n!}{(r-1)!} [F(a)]^{r-1} f(a) \prod_{i=r+1}^n f(y_i)}{\frac{n!}{(r-1)!(n-r)!} [F(a)]^{r-1} [1 - F(a)]^{n-r} f(a)} \\
= & \frac{\prod_{i=r+1}^n f(y_i)}{\frac{1}{(n-r)!} [1 - F(a)]^{n-r}} \\
= & (n-r)! \frac{\prod_{i=r+1}^n f(y_i)}{[1 - F(a)]^{n-r}}.
\end{aligned}
\]

Let \(m = n - r\) and set \(y_j^* = y_{r+j}\) for \(j = 1, 2, \ldots, m\). Then:
\[
\begin{aligned}
    & f_{X^{(r+1)}, \ldots, X^{(n)} \mid X^{(r)}}(y_1^*, \ldots, y_m^* \mid a) 
    \\ = & m! \prod_{j=1}^m \frac{f(y_j^*)}{1 - F(a)}, \quad a < y_1^* < y_2^* < \cdots < y_m^*.
\end{aligned}
\]

Next, under the truncated distribution, the joint density of the order statistics \(Y^{(1)} < \cdots < Y^{(m)}\) is:
\[
\begin{aligned}
    & f_{Y^{(1)}, \ldots, Y^{(m)}}(y_1, \ldots, y_m) \\ = &  m! \prod_{i=1}^m f_Y(y_i) = m! \prod_{i=1}^m \frac{f(y_i)}{1 - F(a)}, \quad a < y_1 < \cdots < y_m.
\end{aligned}
\]

We observe that the conditional joint density \(\left(X^{(r+1)}, \ldots, X^{(n)}\right) \mid X^{(r)} = a\) is identical to the joint density of \(\left(Y^{(1)}, \ldots, Y^{(m)}\right)\). Therefore, \(\left(X^{(n-m+1)}, \ldots, X^{(n)}\right) \mid X^{(n-m)} = a\) follows the same distribution as \(\left(Y^{(1)}, \ldots, Y^{(m)}\right)\), completing the proof.
\end{proof}

\subsection{Proof of Lemma~2}
\begin{proof}
For \(\mathbb{E}_{x \sim X^{(n)} | X^{(n-m)}=y}[b(x;y)]\), we now have:
\[
\begin{aligned}
& \int_0^\infty \bigg ( \int_0^x \sum_{l=1}^{m-1}Z_l\binom{m-1}{l}(1-P(t))^{l-1} P(t)^{m-l-1}p(t)t\, dt \bigg ) dP_y^{(m)}(x) \\
= & \int_0^\infty \sum_{l=1}^{m-1}Z_l\binom{m-1}{l}(1-P(t))^{l-1}P(t)^{m-l-1} p(t)t \bigg ( \int_t^\infty dP_y^{(m)}(x) \bigg ) dt \\
= & \sum_{l=1}^{m-1}Z_l \int_0^\infty [1-P(t)^{m}]\binom{m-1}{l}(1-P(t))^{l-1}  P(t)^{m-l-1}p(t)t \, dt.
\end{aligned}
\]
    The optimal reward structure is still a Simple Contest, i.e., some \( Z_l = 1 \), corresponding to the expression:
\[ 
\int_0^\infty [1-P(t)^{m}]\binom{m-1}{l}(1-P(t))^{l-1}P(t)^{m-l-1}p(t)t \, dt.
\] 
Transforming this expression into quantile form via the substitution \( q = 1 - P(t) \). Since \( f(t) > 0 \), there is a bijective mapping between \( q \) and \( t \), denoted as \( t = x(q) = P^{-1}(1-q) \). Then, \( dt  = -\frac{1}{P'(P^{-1}(1-q))} =-\frac{1}{p(t)} \, dq\). The expression becomes:
\[
\int_0^1 \binom{m-1}{l} [1-(1-q)^m]q^{l-1}(1-q)^{m-l-1}x(q)\, dq. 
\]
Let \( g_l(q) = \binom{m-1}{l} [1-(1-q)^m]q^{l-1}(1-q)^{m-l-1} \). Then the expression becomes:
\[
\begin{aligned}
    & \int_0^1 g_l(q) v(q) \,dq \\ 
    = &  \int_0^1 g_l(q) \left ( \int_q^1 -x'(t) \,dt  \right ) \,dq \\
    = & \int_0^1 g_l(q) \left ( \int_q^1 |x'(t)| \,dt  \right ) \,dq \\
    = & \int_0^1 |x'(q)| \left ( \int_0^q g_l(t)  \,dt  \right ) \,dq.
\end{aligned}
\]
Let the inner integral be \( G_{l}(q)=\int_0^q g_l(t)\, dt \). Then the expression becomes \( \int_0^1 |x'(q)| G_l(q) \, dq \), where \( G_l(q) \) is distribution-independent.
\end{proof}

\subsection{Proof of Lemma~3}
\begin{proof}
    Let \( F(x) \) and \( G(x) \) denote the antiderivatives of \( f(x) \) and \( g(x) \), respectively. Then, \( (F(x)/G(x))' < 0 \) is equivalent to
    \[
    f(x)G(x) < g(x)F(x),
    \]
    which can be rewritten as
    \[
    \frac{f(x)}{g(x)} < \frac{F(x)}{G(x)}.
    \]
    Since
    \[
    \begin{aligned}
        F(x) & = \int_0^x f(t)\, dt = \int_0^x \frac{f(t)}{g(t)} g(t) \, dt \\ & > \frac{f(x)}{g(x)} \int_0^x g(t)\, dt = \frac{f(x)}{g(x)} G(x),
    \end{aligned}
    \]
    we obtain
    \[
    \frac{F(x)}{G(x)} > \frac{f(x)}{g(x)}.
    \]
    Consequently,
    \[
    \left ( \frac{\int_0^x f(t) \, dt}{ \int_0^x g(t) \, dt} \right )' < 0.
    \]
\end{proof}

\subsection{Proof of Lemma~4}
\begin{proof}
We can calculate that
    \[
    \begin{aligned}
        & \int_0^{x}t^{a-1}(x-t)^{b-1}\, dt \\
        = & x^{a+b-1} \int_0^x (\frac{t}{x})^{a-1}(\frac{t}{x})^{b-1}(x^{-1} \, dt ) \\
        = & x^{a+b-1} \int_{0}^{1}t^{a-1}(1-t)^{b-1}\, dt
    \end{aligned}
    \]
    Recall that for beta function $B(a,b)$, it holds that:
    \[
    B(a,b) = \int_{0}^{1}t^{a-1}(1-t)^{b-1}\, dt = \frac{\Gamma(a)\Gamma(b)}{\Gamma(a+b)}.
    \]
    This gives:
    \[
    \begin{aligned}
        & \int_0^{x}t^{a-1}(x-t)^{b-1}\, dt \\ 
        & = \frac{\Gamma(a)\Gamma(b)}{\Gamma(a+b)} x^{a+b-1} \\
        & = \frac{(a-1)!(b-1)!}{(a+b-1)!}x^{a+b-1},
    \end{aligned}
    \]
    where last equality holds since $\Gamma(m) = (m-1)!$ for any $m\in \mathbb{N}$, which completes the proof.
\end{proof}

\subsection{Proof of Proposition~3}
\begin{proof}
    Given the number of winners \( m \) and the cut-off score \( y \), the highest individual performance is expressed as \(\mathbb{E}_{x \sim X^{(n)} | X^{(n-m)}=y}[b(x;y)]\). According to Lemma~\ref{lem:Truncate}, this is equivalent to taking the expectation of the order statistic \( P_y^{(m)} \) of the truncated distribution \( P_y \), i.e., \(\mathbb{E}_{x \sim P_y^{(m)}}[b(x;y)]\). 

According to Lemma~\ref{lem:iso1}, the coefficient of $Z_l$ is:
\[
\textstyle \int_0^1 |x'(q)| \left( \int_0^q \binom{m-1}{l} [1-(1-t)^m]t^{l-1}(1-t)^{m-l-1} \,dt \right) \,dq.
\]

To prove that winner-take-all is optimal for any distribution, since \( |x'(q)| \) can take any positive real value, it is required that \(\forall q \in [0,1]\, , l \neq 1\), \( G_{l}(q) \leq G_{1}(q) \) holds.

We first prove that \( G_{l}(q)/G_{l+1}(q) \) is monotonically decreasing in \( q \). The ratio of the integrands is:
\[
\begin{aligned}
    \frac{g_l(t)}{g_{l+1}(t)} & = \frac{\binom{m-1}{l}[1-(1-t)^m]t^{l-1}(1-t)^{m-l-1}}{\binom{m-1}{l+1}[1-(1-t)^m]t^{l}(1-t)^{m-l-2}} 
    \\ & = \frac{l+1}{m-l-1} \cdot \frac{1-t}{t}.  
\end{aligned}
\] 
This ratio is monotonically decreasing in \( t \). By Lemma~\ref{lem:IntDerMon}, \( G_{l}(q)/G_{l+1}(q) \) is monotonically decreasing in \( q \). 
    
To prove \( G_{l}(q) \geq G_{l+1}(q) \) holds for all \( q \in [0,1] \), it suffices to show \( G_{l}(1) \geq G_{l+1}(1) \). Expanding \( G_l(1) \):
$$
\begin{aligned}
& \int_0^1 \binom{m-1}{l}[1-(1-t)^m]t^{l-1}(1-t)^{m-l-1} \, dt \\
= & \int_0^1 \binom{m-1}{l}t^{l-1}(1-t)^{m-l-1} \, dt \\ &- \int_0^1 \binom{m-1}{l}t^{l-1}(1-t)^{2m-l-1} \, dt .
\end{aligned}
$$

Using Lemma~\ref{lem:betaInt}, the integrals can be expressed as fractions:
$$
\begin{aligned}
= & \binom{m-1}{l} (B(l, m-l) - B(l, 2m-l)) \\
= & \frac{(m-1)!}{(m-l-1)!\,l!} \left ( \frac{(l-1)!(m-l-1)!}{(m-1)!} - \frac{(l-1)!(2m-l-1)!}{(2m-1)!} \right ) \\
= & \frac{1}{l}\left ( 1 - \frac{(m-1)!(2m-l-1)!}{(m-l-1)!(2m-1)!} \right ) .
\end{aligned}
$$

Next, we prove \( G_l(1) - G_{l+1}(1) \geq 0 \) for all \( l \in [m-1] \).
After simplifying \( G_l(1)-G_{l+1}(1) \), we obtain:
\[
\begin{aligned}
    &(l+1) \bigg [ (m-l-1)!(2m-1)! - (m-1)!(2m-l-1)! \bigg ]
    \\ - & l \bigg [ (m-l-1)!(2m-1)! -(m-l-1)(m-1)!(2m-l-2)! \bigg ].
\end{aligned}
\] 
Expanding and combining like terms gives:
\[
\begin{aligned}
    & (m-l-1)!(2m-1)!
    \\ - & (l+1)(m-1)!(2m-l-1)! 
    \\ + &l(m-l-1)(m-1)!(2m-l-2)!.
\end{aligned}
\]
Further simplifying the last two terms:
\[
\begin{aligned}
& (m-1)!(2m-l-2)! [l(m-l-1)]-(l+1)(2m-l-1)]
\\ = &-(m-1)!(2m-l-2)![ml+(2m-l-1)] 
\\ = &-l(m!)(2m-l-2)!-(m-1)!(2m-l-1)!.    
\end{aligned}
\]
Now the entire expression becomes:
\[
\begin{aligned}
& (m-l-1)!(2m-1)! -l(m!)(2m-l-2)!-(m-1)!(2m-l-1)! \\
= & (m-l-1)!(2m-l-2)! \bigg [ \prod_{k=0}^l (2m-k-1) -l\prod_{k=0}^l (m-k) - (2m-l-1) \prod_{k=1}^l (m-k)\bigg ] \\
= & (m-l-1)!(2m-l-2)! \left [ \prod_{k=0}^l (2m-k-1) - ( l + \frac{2m-l-1}{m} ) \prod_{k=0}^l (m-k) \right ] \\ 
= & (m-l-1)!(2m-l-2)!  \left [ \prod_{k=0}^l (2m-k-1) - ( l\frac{m-1}{m}+2-\frac{1}{m} ) \prod_{k=0}^l (m-k) \right ] \\
\geq & (m-l-1)!(2m-l-2)! \left [ \prod_{k=0}^l (2m-k-1) - ( l+2 ) \prod_{k=0}^l (m-k) \right ].
\end{aligned}
\]
To prove \( G_{l}(1) \geq G_{l+1}(1) \), it suffices to show \( \prod_{k=0}^l (2m-k-1) - ( l+2 ) \prod_{k=0}^l (m-k) \geq 0 \), i.e., \( P(l) = \prod_{k=0}^l \frac{2m-k-1}{m-k} = \prod_{k=0}^l \left(1+\frac{m-1}{m-k}\right) \geq l+2 \). 

We prove this by induction:

\underline{Base case:} For \( l=1 \), \( P(1) = \frac{(2m-1)(2m-2)}{m(m-1)} = \frac{2(2m-1)}{m} \geq 3 \) holds. 

\underline{Inductive step:} Assume \( P(l) \geq l+2 \) holds. We show \( P(l+1) \geq l+3 \). By definition, \( P(l+1) = P(l) \frac{2m-l-2}{m-l-1} = P(l) \left ( 2 + \frac{l}{m-l-1}\right ) \geq 2 P(l) \geq 2l+4 \geq l+3 \). 

By induction, \( \prod_{k=0}^l (2m-k-1) - ( l+2 ) \prod_{k=0}^l (m-k) \geq 0 \) holds. 

In conclusion, we have shown \( G_l(1) \geq G_{l+1}(1) \). By the monotonicity of the ratio, \( G_l(q) \geq G_{l+1}(q) \) holds for all \( q \in [0,1] \). Therefore, for any distribution \( |x'(q)| > 0 \) and \( l \neq 1 \), \( \int_0^1 |x'(q)| G_1(q) \, dq \geq \int_0^1 |x'(q)| G_l(q) \, dq \) holds. Thus, winner-take-all is always the optimal reward structure.
\end{proof}

\subsection{Proof of Lemma~5}
\begin{proof}
The raw expression is:
\[
\begin{aligned}
    \int_0^\infty \left ( \int_y^\infty x (m-1) [1-P(x)^m] P(x)^{m-2} p(x) \, dx \right ) 
    \cdot n \binom{n-1}{m} F(y)^{n-m-1}(1-F(y))^m f(y) \, dy.
\end{aligned}
\]

Let \( u = F(y) \), \( v = F(x) \), then \( dv = f(x)\, dx\). Since \( P(x) = \frac{F(x)-F(y)}{1-F(y)} = \frac{v-u}{1-u} \) and \( p(x) = \frac{f(x)}{1-u} \), the expression transforms to:
\[
\begin{aligned}
& \int_0^1 \bigg ( \int_{u}^1 x(v)(m-1) [1-(\frac{v-u}{1-u})^m] (\frac{v-u}{1-u})^{m-2} 
 (\frac{1}{1-u}) \, dv \bigg ) n \binom{n-1}{m} u^{n-m-1}(1-u)^m \, du \\
= & \int_0^1 \bigg ( \int_{u}^1 x(v)n (m-1) \binom{n-1}{m}[1-(\frac{v-u}{1-u})^m] 
 (v-u)^{m-2} \, dv \bigg ) u^{n-m-1}(1-u) \, du.
\end{aligned}
\]

Next, we swap the order of integration and separate the distribution-independent terms:
\[
\begin{aligned}
    & \int_0^1 \bigg ( \int_0^v n (m-1) \binom{n-1}{m}[1-(\frac{v-u}{1-u})^m](v-u)^{m-2}
     u^{n-m-1}(1-u) \, du \bigg ) x(v) \, dv \\
= & \int_0^1 x'(w) \bigg ( \int_w^1 \int_0^v n (m-1) \binom{n-1}{m} [1-(\frac{v-u}{1-u})^m] 
 (v-u)^{m-2} u^{n-m-1} (1-u) \, du \, dv \bigg ) \, dw.
\end{aligned}
\]
\end{proof}

\subsection{Proof of Lemma~6}
\begin{proof}
We now simplify the expression for \( H_m(w) \). For convenience, we temporarily omit the coefficient \( n (m-1) \binom{n-1}{m} \). Performing another change of variables, let \( z = \frac{v-u}{1-u} \), transforming from \( (u, v) \) to \( (u, z) \). The Jacobian matrix is:
\[
\begin{aligned}
    J & = \frac{\partial(u,v)}{\partial(u,z)} \\
    & = \begin{bmatrix}
        1 & 0 \\ \frac{\partial u}{\partial z} & 1-u
    \end{bmatrix}.
\end{aligned}
\]

Thus, \( \det J = 1 - u \). By the change of variables formula for multiple integrals \( \iint f(x, y) \, dx dy = \iint f(x(u,v), y(u,v)) |\det J|\, du \, dv \), we obtain:
\[
\int_w^1 \int_0^v [1 - z^m] z^{m-2} u^{n-m-1} (1 - u)^m \, du \, dz.
\]

Consider the transformation of the integration region. The original domain is \( u \in [0,1], v \in [\max(u,w), 1] \), which is depicted in Figure~\ref{fig:OriRegion}. Substituting \( z \), we get \( z \in [\frac{\max(u,w) - u}{1 - u}, 1] \). 
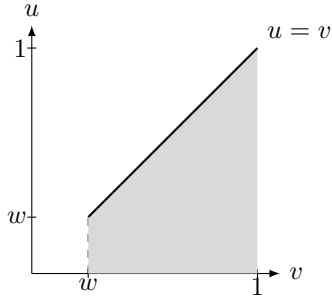
\begin{figure}[h]
    \centering
    \begin{tikzpicture}[>=latex,scale=3]

\draw[->] (0,0) -- (1.1,0) node[right] {$v$}; 
\draw[->] (0,0) -- (0,1.1) node[above] {$u$}; 

\def\w{0.25}

\draw (\w, -0.02) -- (\w, 0.02) node[below] {$w$};
\draw (1, -0.02) -- (1, 0.02) node[below] {$1$};

\draw (-0.02, \w) -- (0.02, \w) node[left] {$w$};
\draw (-0.02, 1) -- (0.02, 1) node[left] {$1$};

\draw[dashed] (\w, 0) -- (\w, \w) node[below right] {$v = w$};

\fill[gray!30] (\w, 0) -- (\w, \w) -- (1, 1) -- (1, 0) -- cycle;

\draw[thick] (\w, \w) -- (1, 1) node[above right] {$u = v$};


\end{tikzpicture}
    \caption{Original integration region of \((u,v)\).}
    \label{fig:OriRegion}
\end{figure}

After swapping the integration order, the region splits into two parts: \( u \in [0,w], z \in [\frac{w - u}{1 - u}, 1] \) (The dark gray region in Figure~\ref{fig:Region2}) and \( u \in [w,1], z \in [0,1] \) (The light gray region in Figure~\ref{fig:Region2}). 
\begin{figure}[h]
    \centering
    \begin{tikzpicture}[>=latex,scale=3]

\def\w{0.25}

\draw[->] (0,0) -- (1.1,0) node[right] {$u$}; 
\draw[->] (0,0) -- (0,1.1) node[above] {$z$}; 

\draw (\w, -0.02) -- (\w, 0.02) node[below] {$w$};
\draw (1, -0.02) -- (1, 0.02) node[below] {$1$};

\draw (-0.02, \w) -- (0.02, \w) node[left] {$w$};
\draw (-0.02, 1) -- (0.02, 1) node[left] {$1$};

\draw[thick, red] plot[domain=0:\w, samples=50] (\x, {(\w-\x)/(1-\x)});

\fill[darkgray!30, domain=0:\w, samples=50] 
     plot(\x, {(\w-\x)/(1-\x)}) -- (\w,0) -- (\w,1) -- (0, 1) -- cycle ;

\fill[lightgray!30] (\w,0) rectangle (1,1);

\draw[dashed] (\w, 0) -- (\w, 1) ; 

\end{tikzpicture}
    \caption{Integration region of \((u,z)\).}
    \label{fig:Region2}
\end{figure}
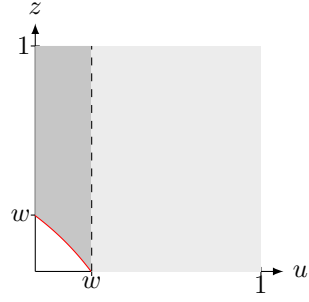

The integral over the second region is:
\[
\begin{aligned}
    & \int_w^1 \int_0^1 [1 - z^m] z^{m-2} u^{n-m-1} (1 - u)^m \, du \, dz \\
    = & \left( \int_0^1 [1 - z^m] z^{m-2} \, dz \right ) \left( \int_w^1 u^{n-m-1} (1 - u)^m \, du \right ).
\end{aligned}
\]

To simplify the integral over the first region, we again swap the integration order. The original region is \( u \in [0,w], z \in [0,1] \) with \( z \geq \frac{w - u}{1 - u} \). 

After swapping, it splits into \( z \in [0,w], u \in [\frac{w - z}{1 - z}, w] \) (The black region in Figure~\ref{fig:Region3}) and \( z \in [w,1], u \in [0,w] \) (The dark gray region in Figure~\ref{fig:Region3}).
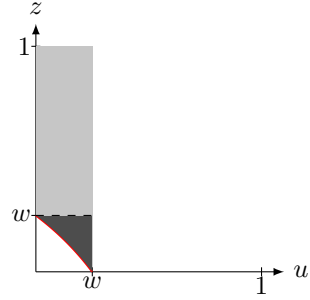
\begin{figure}[h]
    \centering
    \begin{center} 
\begin{tikzpicture}[>=latex,scale=3]

\def\w{0.25}

\draw[->] (0,0) -- (1.1,0) node[right] {$u$}; 
\draw[->] (0,0) -- (0,1.1) node[above] {$z$}; 

\draw (\w, -0.02) -- (\w, 0.02) node[below] {$w$};
\draw (1, -0.02) -- (1, 0.02) node[below] {$1$};

\draw (-0.02, \w) -- (0.02, \w) node[left] {$w$};
\draw (-0.02, 1) -- (0.02, 1) node[left] {$1$};

\draw[thick, red] plot[domain=0:\w, samples=50] (\x, {(\w-\x)/(1-\x)});

\fill[black!70, domain=0:\w, samples=50] 
     plot(\x, {(\w-\x)/(1-\x)}) -- (\w,0) -- (\w,\w) -- (0, \w) -- cycle ;

\fill[darkgray!30] (0,\w) rectangle (\w,1);

\draw[dashed] (0, \w) -- (\w, \w) ; 

\end{tikzpicture}
\end{center}
    \caption{Integration region of \((u,z)\) when \(u \in [0,w]\).}
    \label{fig:Region3}
\end{figure}

Thus:
\[
\begin{aligned}
    & \int_0^w \int_{\frac{w - u}{1 - u}}^1 [1 - z^m] z^{m-2} u^{n-m-1} (1 - u)^m \, du \, dz \\
    = & \int_w^1 \int_0^w \cdots \, dz \, du + \int_0^w \int_{\frac{w - z}{1 - z}}^w \cdots \, dz \, du \\
    = & \left( \int_w^1 [1 - z^m] z^{m-2} \, dz \right ) \left( \int_0^w u^{n-m-1} (1 - u)^m \, du \right )  \int_0^w \left ( \int_{\frac{w - z}{1 - z}}^w [1 - z^m] z^{m-2} \, dz \right ) u^{n-m-1} (1 - u)^m \, du.
\end{aligned}
\]

Note that:
\[
\begin{aligned}
    & \int_a^b [1 - z^m] z^{m-2} \, dz \\
    = & \int_a^b z^{m-2} \, dz - \int_a^b z^{2m - 2} \, dz \\
    = & \frac{1}{m - 1} z^{m - 1} \big \vert_a^b - \frac{1}{2m - 1} z^{2m - 1} \big \vert_a^b.
\end{aligned}
\]
When \( a = 0, b = 1 \), this evaluates to \( \frac{1}{m - 1} - \frac{1}{2m - 1} \), and when \( a = w, b = 1 \), it becomes \( \frac{1 - w^{m - 1}}{m - 1} - \frac{1 - w^{2m - 1}}{2m - 1} \).

Let \( B_w(a+1, b+1) \) denote the incomplete beta function \( \int_0^w t^{a-1} (1 - t)^{b-1} \, dt \), where the complete beta function is \( B(a+1, b+1) = B_1(a+1, b+1) \). Omitting coefficients, \( H_m(w) \) can be expressed as:
\[
\begin{aligned}
    & \textstyle \left( \frac{1}{m - 1} - \frac{1}{2m - 1} \right ) \left [ B(n - m, m + 1) - B_w(n - m, m + 1) \right ] \\
    & \textstyle + \left( \frac{1}{m - 1} - \frac{1}{2m - 1} \right ) B_w(n - m, m + 1) \\
    & \textstyle - \left( \frac{1 - w^{m - 1}}{m - 1} - \frac{1 - w^{2m - 1}}{2m - 1} \right ) B_w(n - m, m + 1) \\
    & \textstyle + \int_0^w \left [ B_w(n - m, m + 1) - B_{\frac{w - z}{1 - z}}(n - m, m + 1) \right ] 
    [1 - z^m] z^{m-2} \, dz.
\end{aligned}
\]

After canceling identical terms, the expression simplifies to:
\[
\begin{aligned}
    & \left( \frac{1}{m - 1} - \frac{1}{2m - 1} \right ) B(n - m, m + 1)  - \int_0^w B_{\frac{w - z}{1 - z}}(n - m, m + 1) [1 - z^m] z^{m-2} \, dz.
\end{aligned}
\]

The previously omitted coefficient is:
\[
\begin{aligned}
    n (m - 1) \binom{n - 1}{m} & = \frac{n! (m - 1)}{(n - m - 1)! m!} \\
    & = (m - 1) B^{-1}(n - m, m + 1).
\end{aligned}
\]

Substituting this back, we obtain the expression for \( H_m(w) \):
\[
\frac{m}{2m - 1} - (m - 1) \int_0^w I_{\frac{w - z}{1 - z}}(n - m, m + 1) [1 - z^m] z^{m-2} \, dz,
\]
where \( I_w(a+1, b+1) = \frac{B_w(a+1, b+1)}{B(a+1, b+1)} \) is the regularized incomplete beta function, representing the cumulative distribution function of the beta distribution.
\end{proof}

\subsection{Proof of Lemma~7}
\begin{proof}
    The cumulative distribution function of the Beta distribution is expressed as:
    \[
    \begin{aligned}
        I_x(n-m,m+1) & = \int_0^x \beta_t(n-m,m+1) \, dt,
    \end{aligned}
    \]
    where \( \beta_t(n-m,m+1) = \frac{t^{n-m-1}(1-t)^{m}}{B(n-m,m+1)} \).

    We first characterize the unimodal property of the function \( \beta_t(n-m,m+1) \). Denoting it simply as \(\beta(t)\), we have:
    \[ 
    \beta'(t) = t^{n-m-2} (1-t)^{m-1} \left[ (n - m - 1)(1 - t) - m t \right], 
    \]
    which implies three critical points. Let \( t_0 = \frac{n-m-1}{n-1} \), then the critical points are \(\beta(0) = 0\), \(\beta(1) = 0\), and \(\beta(t_0) > 0 \). Since the third term in the derivative is a linear function of \( t \), it can be easily verified that \( \beta'(t) > 0 \) for \( t \in ( 0, t_0 ) \) and \( \beta'(t) < 0 \) for \( t \in (t_0, 1) \). Therefore, the function \(\beta\) attains its unique maximum at \( t = t_0 = 1- \frac{m}{n-1} \).

    Next, we characterize the concentration property of the function \( \beta(t) \) around its peak.

    Perform a second-order Taylor expansion approximation of \( \ln \beta(t) \) at \( t_0 \):
    \[
    \begin{aligned}
        \ln \beta(t) = & \ln \beta(t_0) + \frac{d}{dt} \ln \beta(t) \bigg |_{t_0}(t-t_0)  \frac{1}{2} \frac{d^2}{dt^2} \ln \beta(t) \bigg |_{t_0} (t-t_0)^2. 
    \end{aligned}
    \]

    The second derivative of \( \beta(t) \) at \( t_0 \) is:
    \[
    \begin{aligned}
        \frac{d^2}{dt^2} \ln \beta(t) \bigg |_{t_0} & = - \left [ \frac{n-m-1}{t_0^2} + \frac{m}{(1-t_0)^2} \right ] \\
        & = - \frac{(n-1)^3}{(n-m-1)m},
    \end{aligned}
    \]
    and \( \frac{d}{dt} \ln \beta(t) \bigg |_{t_0} = 0 \). Thus, in the neighborhood of \( t_0 \), we have:
    \[
    \begin{aligned}
        \beta(t) & = \beta(t_0) \cdot \exp\left \{ -\frac{(n-1)^3}{(n-m-1)m}(t-t_0)^2 \right \} \\
        & \leq \beta(t_0) \cdot \exp\left \{ -4(n-1)(t-t_0)^2 \right \},
    \end{aligned}
    \]
    where the last inequality follows from the arithmetic-geometric mean inequality \( (n-m-1)m \leq (n-1)^2/4 \).

    Now we obtain that when \( |t-t_0| \geq \delta > n^{-\frac{1}{4}} \), i.e., there exists \( N_1 \) such that for \( n > N_1 \), \( n^{-1/4} < \delta \), we have \( \beta(t) \leq \beta(t_0) \cdot \exp \left \{ -\sqrt{n}\right \} \).

    Moreover:
    \[
    \begin{aligned}
         & \beta(t_0) 
        \\ = & (1-\frac{m}{n-1})^{n-m-1} \cdot (\frac{m}{n-1})^m \cdot B(n-m,m+1)^{-1} \\
        = & \frac{(n-m-1)^{n-m-1}m^m}{(n-1)^{n-1} m!} \cdot \frac{n!}{(n-m-1)!} \\
        \leq & \frac{m^m}{m!} \cdot \frac{(n-m-1)^{n-m-1}}{(n-1)^{n-1}}  \frac{\sqrt{2\pi n} n^n e^{-(m+1)}}{\sqrt{2\pi (n-m-1)}(n-m-1)^{n-m-1}} \\
        = & \frac{m^m}{m!} \cdot (1+\frac{1}{n-1})^{n-1} \cdot n^{3/2} \cdot \frac{e^{-(m+1)}}{\sqrt{n-m-1}} \\
        \leq & \frac{m^m}{\frac{1}{2} \sqrt{2\pi m}(\frac{m}{e})^{m}} \cdot (1+\frac{1}{n-1})^{n-1} \cdot n^{3/2}  \frac{e^{-(m+1)}}{\sqrt{n-m-1}} \\
        = & \frac{2n^{3/2}}{\sqrt{2\pi m(n-m-1)}} = O(n^{\frac{3}{2}}),
    \end{aligned}
    \]
    where the two inequalities use the original form of Stirling's approximation and its half-form to ensure the bounds hold.

    Therefore, we obtain \( \lim_{n \to \infty} \beta(t_0)\cdot \exp\{ -\sqrt{n} \} = 0 \), i.e., there exists \( N_2 \) such that for \( n > N_2 \), \(\beta(t_0)\cdot \exp\{ -\sqrt{n}\} < \epsilon \) holds.

    Let \( N = \max \{ N_1, N_2 \} \). Then when \( x < t_0 - \delta \), \( |x-t_0| \geq \delta \) holds. Thus, for \( n > N \), any \( \epsilon >0 \), \( \delta >0 \), and \( m \geq 2 \), we have:
    \[
    \begin{aligned}
        I_x(n-m, m+1) & = \int_0^x \beta(t) \, dt \\
        & \leq \int_0^1 \beta(t_0) \, dt \leq \epsilon.
    \end{aligned}
    \]

    Meanwhile, since \( I_x \) is a cumulative distribution function, \( I_x(n-m,m+1) \leq 1 \) always holds trivially for \( x \in [0,1] \). 

    In conclusion, for any \( \epsilon > 0 \), \( \delta > 0 \), we have found an \( N \in \mathbb{N} \) such that when \( n > N \), for any arbitrary \( m \geq 2 \), \( I_x(n-m,m+1) \leq \epsilon \) holds for \( x \in [0, 1-\frac{m}{n-1}-\delta) \) and \( I_x(n-m,m+1) \leq 1 \) holds for \( x \in [1-\frac{m}{n-1}-\delta, 1] \). This completes the proof.
\end{proof}

\subsection{Proof of Theorem~2}
\begin{proof}
Given that the optimal reward structure is winner-take-all, we need to determine the optimal number of qualifiers \( m \) that maximizes the expected Highest Individual Performance:
    \[
        \mathbb{E}_{y \sim X^{(n-m)}} \big [ \mathbb{E}_{x^{(n)} \sim X^{(n)} \mid y} [ b(x^{(n)} ; y) ] \big ].
    \]

According to Proposition~\ref{prop:HighestWTA}, this expression expands to:
\[
\begin{aligned}
    \int_0^\infty \left ( \int_y^\infty x (m-1) [1-P(x)^m] P(x)^{m-2} p(x) \, dx \right ) n \binom{n-1}{m} F(y)^{n-m-1}(1-F(y))^m f(y) \, dy.
\end{aligned}
\]

To simplify this expression, we first perform a change of variables, then switch the integration order. The calculation is detailed in Lemma~\ref{lem:iso2}, and we eventually have:
\begin{align*}
    \textstyle\int_0^1 x'(w) \left ( \int_w^1 \int_0^v n (m-1) \binom{n-1}{m} \left[1-\left(\frac{v-u}{1-u}\right)^m\right] \right.
     \left. (v-u)^{m-2} u^{n-m-1} (1-u) \, du \, dv \right ) \, dw.
\end{align*}


Denote the second term in the outer integral as \( H_m(w) = \int_w^1 G_m(v) \, dv\). The equality holds because:
\[
\begin{aligned}
    \int_0^1 G_m(v) x(v) \,dv & = \int_0^1 G_m(v) \left ( \int_0^v x'(w) \, dw\right ) \, dv \\
    & = \int_0^1 \int_w^1 G_m(v) x'(w) \, dv \, dw \\
    & = \int_0^1 x'(w)\left ( \int_w^1 G_m(v) \, dv\right ) \, dw.
\end{aligned}
\]

We now simplify the expression for \( H_m(w) \). After a series of transformation in Lemma~\ref{lem:betarep}, we finally obtain:
\[
\frac{m}{2m - 1} - (m - 1) \int_0^w I_{\frac{w - z}{1 - z}}(n - m, m + 1) [1 - z^m] z^{m-2} \, dz,
\]
where \( I_w(a+1, b+1) = \frac{B_w(a+1, b+1)}{B(a+1, b+1)} \) is the regularized incomplete beta function, representing the cumulative distribution function of the beta distribution.

We now analyze the asymptotic behavior of this expression, which relies on the concentration properties of the beta distribution.
Using Lemma~\ref{lem:AsmStep}, we can bound \( I_{\frac{w - z}{1 - z}}(n - m, m + 1) \) with a step function, thereby characterizing an upper bound for the integral containing it. 

Let \( \bar{x'} \) be an upper bound for \( x'(w) \) on \([0,1]\). We now analyze the integral by cases:

\underline{Case 1:} For \( w \in [0, 1 - k - \delta) \), we have \( \frac{w - z}{1 - z} < \frac{1 - k - \delta - z}{1 - z} = 1 - \frac{k + \delta}{1 - z} \leq 1 - k - \delta \). Thus, \( I_{\frac{w - z}{1 - z}} \leq \epsilon \) over this integration region. The integral becomes:
\[
\begin{aligned}
    & \int_{0}^{1 - k - \delta} x'(w) \int_0^{w} I_{\frac{w - z}{1 - z}}(n - m, m + 1) 
     (m - 1) [1 - z^m] z^{m-2} \, dz \, dw \\
    \leq & \bar{x'} \epsilon \int_{0}^{1 - k - \delta} \int_0^{w} (m - 1) [1 - z^m] z^{m-2} \, dz \, dw \\
    \leq & \bar{x'} \epsilon \int_{0}^{1 - k - \delta} \int_0^{w} (m - 1) z^{m-2} \, dz \, dw \\
    \leq & \bar{x'} \epsilon \int_{0}^{1} \int_0^{1} (m - 1) z^{m-2} \, dz \, dw = \bar{x'} \epsilon.
\end{aligned}
\]

\underline{Case 2:} For \( w \in [1 - k - \delta, 1] \), if \( \frac{w - z}{1 - z} \geq 1 - k - \delta \), this requires \( z \leq 1 - \frac{1 - w}{k + \delta} \). The integral in this case is:
\[
\begin{aligned}
    & \int_{1 - k - \delta}^{1} x'(w) \int_0^{1 - \frac{1 - w}{k + \delta}} I_{\frac{w - z}{1 - z}}(n - m, m + 1)
    (m - 1) [1 - z^m] z^{m-2} \, dz \, dw \\
    \leq & \bar{x'} \int_{1 - k - \delta}^{1} \bigg ( \int_0^{1 - \frac{1 - w}{k + \delta}} (m - 1) z^{m-2} \, dz \bigg ) \, dw \\
    = & \bar{x'} \int_{1 - k - \delta}^{1} \left( 1 - \frac{1 - w}{k + \delta} \right)^{m-1} \, dw.
\end{aligned}
\]
Let \( u = 1 - \frac{1 - w}{k + \delta} \), then \( dw = (k + \delta) du \), with \( u = 0 \) when \( w = 1 - k - \delta \) and \( u = 1 \) when \( w = 1 \). Substituting yields:
\[
\begin{aligned}
    = & \bar{x'} \int_{0}^1 (k + \delta) u^{m-1} \, du \\
    = & \bar{x'} \frac{k + \delta}{m} \leq \bar{x'} \left( \frac{1}{n - 1} + \delta \right).
\end{aligned}
\]

For \( z > 1 - \frac{1 - w}{k + \delta} \), where \( I_{\frac{w - z}{1 - z}} \leq \epsilon \), the integral becomes:
\[
\begin{aligned}
    & \int_{1 - k - \delta}^{1} x'(w) \int_{1 - \frac{1 - w}{k + \delta}}^{w} I_{\frac{w - z}{1 - z}}(n - m, m + 1)
     (m - 1) [1 - z^m] z^{m-2} \, dz \, dw \\
    \leq & \bar{x'} \epsilon \int_{1 - k - \delta}^1 \bigg ( \int_{1 - \frac{1 - w}{k + \delta}}^w (m - 1) z^{m-2} \, dz \bigg )\, dw \\
    \leq & \bar{x'} \epsilon \int_{0}^1 \bigg ( \int_{0}^1 (m - 1) z^{m-2} \, dz \bigg )\, dw \\
    = & \bar{x'} \epsilon.
\end{aligned}
\]

Combining these results, we conclude that for sufficiently large \( n \):
\[
\begin{aligned}
    & (m - 1) \int_0^w I_{\frac{w - z}{1 - z}}(n - m, m + 1) [1 - z^m] z^{m-2} \, dz \\
    \leq & \bar{x'} \left( \frac{1}{n - 1} + 2 \epsilon + \delta \right ) \to 0.
\end{aligned}
\]

Thus, for any \( m \geq 2 \), we have:
\[
\lim_{n \to \infty} S^{(n)}(n, m, 1) = \frac{m}{2m - 1} \int_0^1 x'(w) \, dw,
\]
with the convergence rate independent of \( m \).

This allows us to directly compare the highest performance for different shortlist sizes \( m \). Since this expression decreases monotonically with \( m \), the optimum occurs at \( m = 2 \), i.e., a two-contestant winner-take-all contest. The highest performance in this case is:
\[
\lim_{n \to \infty} S^{(n)}(n, 2, 1) = \frac{2}{3} \bar{x},
\]
which completes the proof.
\end{proof}
\subsection{Proof of Theorem~3}
\begin{proof}
    Theorem~\ref{thm:AsmHP} gives the asymptotic best contest with pre-selection for highest individual performance objective, i.e., a 2-contestant winner-take-all contest. 
    
    Also, previous work has shown that, without pre-selection, the optimal contest is an \( n \)-contestant winner-take-all format, yielding an asymptotic highest performance of \( \frac{1}{2} \bar{x} \) for the designer \cite{AS09}. 

    They directly give:
    \[
    \begin{aligned}
        \lim_{n \rightarrow \infty} \frac{S^{(n)}(m^*,n,1)}{S^{(n)}(n,n,1)} & = \frac{\lim_{n \rightarrow \infty}S^{(n)}(m^*,n,1)}{\lim_{n \rightarrow \infty}S^{(n)}(n,n,1)} \\
        & = \frac{\lim_{n \rightarrow \infty}S^{(n)}(2,n,1)}{\frac{1}{2}\bar{x}} \\
        & = \frac{\frac{2}{3} \bar{x}}{\frac{1}{2}\bar{x}} = \frac{4}{3},
    \end{aligned}
    \]
    as desired, which completes the proof.
\end{proof}

\subsection{Proof of Proposition~4}
\begin{proof}
    Similar to Theorem~\ref{thm:AsmHP}, we have
    \[
    \begin{aligned}
        & S^{(n)}(m,n,1) \\
        = & \int_0^1 x'(w) \bigg ( \int_w^1 \int_0^v n (m-1) \binom{n-1}{m} 
     [1-(\frac{v-u}{1-u})^m] (v-u)^{m-2} u^{n-m-1} (1-u) \, du \, dv \bigg ) \, dw,
    \end{aligned}
    \]
    where $x'(w)=b$ for the $U[0,b]$ distribution.
    We prove that $S^{(n)}(m,n,1)$ is decreasing in $m$.
    
    By some calculation, we have
\begin{align*}
    & \frac1{b}S(m,n,1)
    \\=&\int_0^1  \int_w^1 \int_0^v n (m-1) \binom{n-1}{m} [1-(\frac{v-u}{1-u})^m]  (v-u)^{m-2} u^{n-m-1} (1-u) \, du \, dv \, dw\\
    =&\int_{u=0}^1\int_{v=u}^1 \int_{w=0}^vn (m-1) \binom{n-1}{m} [1-(\frac{v-u}{1-u})^m]  (v-u)^{m-2} u^{n-m-1} (1-u) dw\, dv \, du\\
    =&\int_{u=0}^1\int_{v=u}^1 n (m-1) \binom{n-1}{m} [1-(\frac{v-u}{1-u})^m]  (v-u)^{m-2} u^{n-m-1} (1-u) v\, dv \, du\\
    =&\int_{u=0}^1\int_{\theta=0}^1 n (m-1) \binom{n-1}{m} (1-\theta^m) ((1-u)\theta)^{m-2}  u^{n-m-1} (1-u) (u+(1-u)\theta)(1-u)d\theta du\\
    =&\int_{u=0}^1\int_{\theta=0}^1 n (m-1) \binom{n-1}{m} ((1-\theta^m) \theta^{m-2} u^{n-m}  (1-u)^m+(1-\theta^m) \theta^{m-1}u^{n-m-1}(1-u)^{m+1})d\theta du,
\end{align*}
where the fourth equality is by substituting $\theta=\frac{v-u}{1-u}$, i.e., $v=u+(1-u)\theta$.

Since $\int_0^1(1-\theta^m)\theta^{m-2}d\theta=\frac1{m-1}-\frac1{2m-1}$, $\int_0^1(1-\theta^m)\theta^{m-1}d\theta=\frac1{m}-\frac1{2m}$, we have
\begin{align*}
    & \frac1{b}S(m,n,1)
    \\ = &\int_{0}^1 n (m-1) \binom{n-1}{m} ((\frac1{m-1}-\frac1{2m-1})u^{n-m}(1-u)^m+(\frac1{m}-\frac1{2m})u^{n-m-1}(1-u)^{m+1})du\\
    \\ = & \int_{0}^1 n \binom{n-1}{m} (\frac{m}{2m-1}u^{n-m}(1-u)^m +\frac{m-1}{2m}u^{n-m-1}(1-u)^{m+1})du\\
    \\ = & \int_{0}^1\frac{n!}{(n-m-1)!m!}(\frac{m}{2m-1}u^{n-m}(1-u)^m + \frac{m-1}{2m}u^{n-m-1}(1-u)^{m+1})du.
\end{align*}

By Lemma \ref{lem:betaInt} we have $\int_0^1u^{n-m}(1-u)^mdu=\frac{(n-m)!m!}{(n+1)!}$ and $\int_0^1u^{n-m-1}(1-u)^{m+1}du=\frac{(n-m-1)!(m+1)!}{(n+1)!}$, and it follows that
\begin{align*}
    & \frac1{b}S(m,n,1)
    \\=& \frac{n!}{(n-m-1)!m!}(\frac{m}{2m-1}\cdot\frac{(n-m)!m!}{(n+1)!}  +\frac{m-1}{2m}\cdot\frac{(n-m-1)!(m+1)!}{(n+1)!})\\
    =&\frac{m}{2m-1}\cdot\frac{n-m}{n+1}+\frac{m-1}{2m}\cdot\frac{m+1}{n+1}\\
    =&\frac{2m^2(n-m)+(m^2-1)(2m-1)}{2m(2m-1)(n+1)}\\
    =&\frac{2m^2n-m^2-2m+1}{2m(2m-1)(n+1)}\\
    =&\frac{2m^2(n-\frac52)}{2m(2m-1)(n+1)}+\frac{4m^2-2m+1}{2m(2m-1)(n+1)}\\
    =&\frac{m}{2m-1}\cdot\frac{n-\frac52}{n+1}+\frac1{n+1}+\frac1{2m(2m-1)(n+1)}.
\end{align*}
When $n=2$, $m=2$ is the only valid choice. When $n\geq 3$, we have $n-\frac52>0$, and it follows that both $\frac{m}{2m-1}\cdot\frac{n-\frac52}{n+1}$ and $\frac1{2m(2m-1)(n+1)}$ are strictly decreasing in $m$. That is, $S(m,n,1)$ is decreasing in $m$.
\end{proof}

\subsection{Proof of Proposition~5}
We first give two technical lemmas to aim our proof.
\begin{applemma}\label{lem:EqSum}
Let \(X_1, X_2, \ldots, X_n\) be independent and identically distributed continuous random variables with cumulative distribution function \(F\) and continuous probability density function \(f\). Denote the order statistics as \(X^{(1)} < X^{(2)} < \cdots < X^{(n)}\), where \(X^{(k)}\) represents the \(k\)-th order statistic. Then the identity \(\sum_{i=1}^{m} f(X^{(n-i+1)} = x \mid X^{(n-m)} = x^{(n-m)}) = m f(X = x \mid X > x^{(n-m)})\) holds universally.
\end{applemma}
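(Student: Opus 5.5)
The plan is to reduce the identity to Lemma~\ref{lem:Truncate}. That lemma identifies the conditional law of the top $m$ order statistics given $X^{(n-m)}=a$ with the law of the order statistics $(Y^{(1)},\ldots,Y^{(m)})$ of $m$ i.i.d.\ draws from $F$ truncated to $(a,\infty)$.

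\textbf{Reading the statement.} The left-hand side is a sum over $i$ of the conditional marginal densities of $X^{(n-i+1)}$ evaluated at $x$. By Lemma~\ref{lem:Truncate}, this equals $\sum_{j=1}^m f_{Y^{(j)}}(x)$, where the $Y$'s are i.i.d.\ with density $f_Y(y)=f(y)/(1-F(a))$ on $y>a$. The right-hand side is $m f_Y(x)$, since $f(X=x\mid X>a)=f_Y(x)$. The claim therefore reduces to a standard fact: the sum of the marginal densities of all order statistics of $m$ i.i.d.\ variables equals $m$ times the common density.

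\textbf{Proving the reduced fact.} I would prove this fact in one of two ways.

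The first is a test-function argument. For any bounded measurable $\varphi$,
\[
\sum_{j=1}^m \mathbb{E}[\varphi(Y^{(j)})]=\mathbb{E}\Big[\sum_{j=1}^m \varphi(Y^{(j)})\Big]=\mathbb{E}\Big[\sum_{j=1}^m\varphi(Y_j)\Big]=m\,\mathbb{E}[\varphi(Y_1)].
\]
The middle equality holds because the order statistics are a permutation of the $Y_j$, and a sum is permutation-invariant. Since $\varphi$ is arbitrary, the densities agree almost everywhere.

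The second is a direct computation from the explicit formula
\[
f_{Y^{(j)}}(x)=\frac{m!}{(j-1)!(m-j)!}P(x)^{j-1}(1-P(x))^{m-j}p(x).
\]
Summing over $j$ and applying the binomial theorem gives $m\,p(x)\sum_{k=0}^{m-1}\binom{m-1}{k}P^k(1-P)^{m-1-k}=m\,p(x)$.

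\textbf{Pointwise equality.} To upgrade almost-everywhere equality to equality at every $x$, I would use the continuity assumptions. Both sides are continuous in $x$ on $(a,\infty)$ because $f$ is continuous. For $x\le a$ both sides vanish.

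\textbf{Main obstacle.} The main difficulty is interpretive rather than technical. I must make sure the left-hand side is read as a sum of marginal conditional densities, not a joint density, and that the conditioning event has density zero is handled properly. Lemma~\ref{lem:Truncate} takes care of the latter, since it was established via the ratio of joint to marginal densities. With that settled, the explicit binomial computation is the cleanest route and avoids any measure-theoretic subtlety.
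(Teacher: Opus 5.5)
Your proposal is correct and matches the paper's proof: reduce via Lemma~\ref{lem:Truncate} to the order statistics of $m$ i.i.d.\ truncated draws, sum the explicit order-statistic densities, collapse the sum with the binomial theorem, and note that both sides vanish for $x \le a$. Your permutation-invariance (test-function) argument is also valid and is not used in the paper; combined with your continuity step it gives the same pointwise identity.
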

\begin{proof}
Let \(a = x^{(n-m)}\). Given the condition \(X^{(n-m)} = a\), the order statistics \(X^{(n-m+1)}, X^{(n-m+2)}, \ldots, X^{(n)}\) correspond to the order statistics of the \(m\) observations greater than \(a\). Let \(Y_1, Y_2, \ldots, Y_m\) be independent and identically distributed continuous random variables with distribution \(F\) truncated on \((a, \infty)\), i.e., with density function:
\[
f_Y(y) = \frac{f(y)}{1 - F(a)}, \quad y > a.
\]
Denote their order statistics by \(Y^{(1)} < Y^{(2)} < \cdots < Y^{(m)}\). 

By Lemma~\ref{lem:Truncate}, the conditional distribution of \(\left(X^{(n-m+1)}, X^{(n-m+2)}, \ldots, X^{(n)}\right) \mid X^{(n-m)} = a\) is identical to that of \(\left(Y^{(1)}, Y^{(2)}, \ldots, Y^{(m)}\right)\). 

Now, consider the summation on the left-hand side:
\[
\sum_{i=1}^{m} f\left(X^{(n-i+1)} = x \mid X^{(n-m)} = a\right).
\]
When \(i = 1\), \(n - i + 1 = n\), i.e., \(X^{(n)}\) is the maximum; when \(i = m\), \(n - i + 1 = n - m + 1\), i.e., \(X^{(n-m+1)}\) is the minimum among the observations greater than \(a\). Thus, the summation corresponds to \(X^{(n)}, X^{(n-1)}, \ldots, X^{(n-m+1)}\), which are equivalent to \(Y^{(m)}, Y^{(m-1)}, \ldots, Y^{(1)}\).

Equivalently, the summation becomes:
\[
\sum_{k=1}^{m} f_{Y^{(k)}}(x),
\]
where \(f_{Y^{(k)}}(x)\) is the marginal density function of \(Y^{(k)}\).

For the \(k\)-th order statistic \(Y^{(k)}\) of \(m\) i.i.d. continuous random variables, its density function is:
\[
\begin{aligned}
    f_{Y^{(k)}}(y) = &  \frac{m!}{(k-1)! (m-k)!} \left[ \frac{F(y) - F(a)}{1 - F(a)} \right]^{k-1} \left[ \frac{1 - F(y)}{1 - F(a)} \right]^{m-k} \frac{f(y)}{1 - F(a)}, \quad y > a.
\end{aligned}
\]
Let:
\[
p = \frac{F(x) - F(a)}{1 - F(a)}, \quad q = \frac{1 - F(x)}{1 - F(a)}, \quad \text{so that} \quad p + q = 1.
\]
Substituting these, we obtain:
\[
f_{Y^{(k)}}(x) = \frac{m!}{(k-1)! (m-k)!} p^{k-1} q^{m-k} \frac{f(x)}{1 - F(a)}.
\]
Note that:
\[
\frac{m!}{(k-1)! (m-k)!} = m \binom{m-1}{k-1},
\]
since
\(
\binom{m-1}{k-1} = \frac{(m-1)!}{(k-1)! (m-k)!}\), and thus \(m \binom{m-1}{k-1} = m \frac{(m-1)!}{(k-1)! (m-k)!} = \frac{m!}{(k-1)! (m-k)!}\).
Therefore:
\[
f_{Y^{(k)}}(x) = m \binom{m-1}{k-1} p^{k-1} q^{m-k} \frac{f(x)}{1 - F(a)}.
\]
Now, summing over \(k\):
\[
\begin{aligned}
    \sum_{k=1}^{m} f_{Y^{(k)}}(x) & = \sum_{k=1}^{m} m \binom{m-1}{k-1} p^{k-1} q^{m-k} \frac{f(x)}{1 - F(a)} \\ & = m \frac{f(x)}{1 - F(a)} \sum_{k=1}^{m} \binom{m-1}{k-1} p^{k-1} q^{m-k}.
\end{aligned}
\]
Let \(j = k - 1\). Then when \(k = 1\), \(j = 0\), and when \(k = m\), \(j = m-1\). The summation becomes:
\[
\sum_{j=0}^{m-1} \binom{m-1}{j} p^j q^{(m-1)-j} = (p + q)^{m-1} = 1^{m-1} = 1,
\]
since \(p + q = 1\). Thus:
\[
\sum_{k=1}^{m} f_{Y^{(k)}}(x) = m \frac{f(x)}{1 - F(a)} \cdot 1 = m \frac{f(x)}{1 - F(a)}.
\]
The right-hand side is:
\[
m f\left(X = x \mid X > x^{(n-m)}\right) = m f\left(X = x \mid X > a\right).
\]
Given \(X > a\), the conditional density of a single random variable \(X\) is:
\[
f\left(X = x \mid X > a\right) = \frac{f(x)}{1 - F(a)}, \quad x > a.
\]
Hence:
\[
m f\left(X = x \mid X > a\right) = m \frac{f(x)}{1 - F(a)}.
\]
Therefore:
\[
\begin{aligned}
    & \sum_{i=1}^{m} f\left(X^{(n-i+1)} = x \mid X^{(n-m)} = a\right) \\ = & m f\left(X = x \mid X > a\right).
\end{aligned}
\]
When \(x \leq a\), both sides equal zero (since the order statistics \(X^{(n-i+1)} > a\) and the condition \(X > a\) requires \(x > a\)), so the equality still holds. This completes the proof.
\end{proof}

\begin{applemma} \label{lem:EqExpSum}
    Let \(X_1, X_2, \ldots, X_n\) be independent and identically distributed continuous random variables with cumulative distribution function \(F\) and continuous probability density function \(f\). Let \( h(\cdot;X^{(n-m)}) \) be a function parameterized by the order statistic \( X^{(n-m)} \). Under the condition \( X^{(n-m)} = a \) and letting \( r = n-m \), the following equality holds:
    \[
    \begin{aligned}
        & \mathbb{E}_{x^{(r+1)}, \ldots,x^{(n)} \sim X^{(r+1)}, \ldots X^{(n)} \mid x^{(r)}} \left [ \sum_{l=1}^{m} h(x^{(r+l)} ; x^{(r)})  \right ] \\
        = & m\mathbb{E}_{x \sim X \mid X >x^{(r)}} \left [ h(x;x^{(r)})\right ].
    \end{aligned}
    \]
\end{applemma}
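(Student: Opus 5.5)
The identity is the integrated form of the density identity in Lemma~\ref{lem:EqSum}, so I would get it by linearity of conditional expectation. Fix $a = x^{(r)}$ with $r = n-m$. Since $h(\cdot\,;a)$ depends on the cut-off only through its second argument, and the cut-off is held fixed by the conditioning, each summand $h(x^{(r+l)};a)$ is a function of the single order statistic $X^{(r+l)}$. Its conditional expectation is therefore determined by the marginal conditional density of $X^{(r+l)}$ given $X^{(r)}=a$.

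First, apply linearity inside the conditional expectation:
\[
\mathbb{E}\Big[\sum_{l=1}^m h(X^{(r+l)};a)\,\Big|\,X^{(r)}=a\Big] = \sum_{l=1}^m \int_a^\infty h(x;a)\, f\big(X^{(r+l)}=x \mid X^{(r)}=a\big)\,dx .
\]
Second, interchange the finite sum with the integral. The re-indexing $l \leftrightarrow i$ with $n-i+1 = r+l$ matches the sum over $i=1,\dots,m$ in Lemma~\ref{lem:EqSum}, so the summed density equals $m\, f(x)/(1-F(a))$ for $x>a$ and $0$ otherwise. Third, recognize $f(x)/(1-F(a))$ on $(a,\infty)$ as the density of $X$ conditional on $X>a$. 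This gives exactly $m\,\mathbb{E}_{x\sim X\mid X>a}[h(x;a)]$.

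The only delicate point is justifying the interchange of sum and integral together with the existence of the expectations. I would handle this by assuming that $h(\cdot\,;a)$ is integrable against the truncated density, which holds in the intended use, where $h = b(\cdot\,;a)$ is nonnegative and increasing. Under nonnegativity, Tonelli's theorem makes the interchange automatic, and the identity holds even if both sides are infinite. For general integrable $h$, I would split $h = h^+ - h^-$ and apply the same argument to each part.

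An alternative route avoids marginal densities. By Lemma~\ref{lem:Truncate}, the conditional vector $(X^{(r+1)},\dots,X^{(n)})$ has the law of the order statistics of $m$ i.i.d.\ truncated draws $Y_1,\dots,Y_m$. The sum $\sum_l h(Y^{(l)};a)$ is symmetric, so it equals $\sum_j h(Y_j;a)$. Its expectation is then $m\,\mathbb{E}[h(Y_1;a)]$. I would mention this as a cross-check.
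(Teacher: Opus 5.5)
Your proposal is correct and follows essentially the same route as the paper's proof. Both use linearity of expectation to reduce to the marginal conditional densities of $X^{(r+l)}$ given $X^{(r)}=a$, interchange the finite sum with the integral, apply Lemma~\ref{lem:EqSum}, and recognize $f(x)/(1-F(a))$ as the truncated density; your integrability remark and the symmetrization cross-check via Lemma~\ref{lem:Truncate} are sound additions the paper omits.
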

\begin{proof}
    By the linearity of expectation, the original expression can be transformed into:
    \[
    \sum_{l=1}^{m}\mathbb{E}_{x^{(r+1)}, \ldots,x^{(n)} \sim X^{(r+1)}, \ldots X^{(n)} \mid x^{(r)}} \left [ h(x^{(r+l)} ; x^{(r)})  \right ].
    \]
    Expanding the expectation by definition and marginalizing yields:
    \[
    \begin{aligned}
        & \sum_{l=1}^{m} \int_{x^{(r+1)}<\ldots<x^{(n)}} h(x^{(r+l)} ; x^{(r)}) f(\ldots) \, dx_{r+1} \ldots dx_{n} \\
        = & \sum_{l=1}^{m} \int_{x^{(r)}}^{\infty} h(x ; x^{(r)}) f(X^{(r+l)}=x| X^{(r)} = x^{(r)}) \, dx \\
        = & \int_{x^{(r)}}^{\infty} h(x ; x^{(r)}) \left( \sum_{l=1}^{m}f(X^{(r+l)}=x| X^{(r)} = x^{(r)}) \right ) \, dx.
    \end{aligned}
    \]
    From Lemma~\ref{lem:EqSum}, we have \( \sum_{l=1}^{m} f(X^{(r+l)} = x | X^{(r)} = x^{(r)}) = m f( X = x | X > x^{(r)}) \), thus:
    \[
    \begin{aligned}
        = & \int_{x^{(r)}}^{\infty} h(x ; x^{(r)}) m f( X = x | X > x^{(r)}) \, dx \\ 
        = & m\mathbb{E}_{x \sim X \mid X >x^{(r)}} \left [ h(x;x^{(r)})\right ].
    \end{aligned}
    \]
    This completes the proof.
\end{proof}

    Now we are ready to prove the Proposition~5.
\begin{proof}
    Given the number of advancing contestants $m$ and the elimination cutoff $y$, the total performance can be expressed as:
    \[
    \begin{aligned}
        \mathbb{E}_{x^{(n-m+1)}, \ldots,x^{(n)} \sim X^{(n-m+1)}, \ldots X^{(n)} \mid x^{(n-m)}}\\ \left [ \sum_{l=1}^{m} b(x^{(n-l+1)} ; x^{(n-m)}) \right ].
    \end{aligned}
    \]

    By Lemma~\ref{lem:EqExpSum}, this expression is equivalent to:
    \[
    m\mathbb{E}_{x \sim X \mid X >x^{(n-m)}} \left [ b(x;x^{(n-m)})\right ],
    \]
    which can be interpreted as the sum of performances of $m$ contestants independently drawn from the truncated distribution $P$, where each contestant's expected performance is:
    \[
    \begin{aligned}
        & \mathbb{E}_{X\sim P}[b(x \mid y)] \\
        = &  \int_0^\infty \bigg ( \int_0^x \sum_{l=1}^{m-1}Z_l\binom{m-1}{l-1}\frac{m-l}{l}(1-P(t))^{l-1} P(t)^{m-l-1}p(t)t\, dt \bigg ) p(x) \, dx\\
        = &  \int_0^\infty \bigg ( \int_0^x \sum_{l=1}^{m-1}Z_l\binom{m-1}{l}(1-P(t))^{l-1} P(t)^{m-l-1}p(t)t\, dt \bigg ) p(x) \, dx\\
        = & \sum_{l=1}^{m-1}Z_l \int_0^\infty \bigg ( \int_0^x \binom{m-1}{l}(1-P(t))^{l-1}
         P(t)^{m-l-1}p(t)t\, dt \bigg ) p(x) \, dx. 
    \end{aligned}
    \]

    By changing the order of integration, we have:
    \[
    \begin{aligned}
    = & \sum_{l=1}^{m-1}Z_l \int_0^\infty \int_t^\infty \binom{m-1}{l}(1-P(t))^{l-1}
     P(t)^{m-l-1}p(t)t p(x) \, dx \, dt \\ 
    = & \sum_{l=1}^{m-1}Z_l \int_0^\infty \binom{m-1}{l}(1-P(t))^{l-1}
     P(t)^{m-l-1}p(t)t \left ( \int_t^\infty p(x) \, dx \right ) dt \\
    = & \sum_{l=1}^{m-1}Z_l \int_0^\infty \binom{m-1}{l}(1-P(t))^{l}
     P(t)^{m-l-1}p(t)t \, dt.
    \end{aligned}
    \]
    
    Using the probability density function of order statistics $f_{k,n}(x) = n \binom{n-1}{k-1}F(x)^{k-1}(1-F(x))^{n-k} f(x)$, the integral can be rewritten as:
    \[
    \begin{aligned}
    & = \frac{1}{m} \sum_{l=1}^{m-1}Z_l \int_0^\infty t f_{m-l,m}(t) dt 
    \\ & = \frac{1}{m}\sum_{l=1}^{m-1}Z_l\mathbb{E}_{x\sim P^{(m-l)}}[x].
    \end{aligned}
    \]

    We aim to maximize this expression under the constraint $\sum_{l=1}^{m-1} Z_l = 1$. By the first-order stochastic dominance property of order statistics, we have $\mathbb{E}_{x\sim P^{(m-l)}}[x] > \mathbb{E}_{x\sim P^{(m-l-1)}}[x]$. Therefore, the expression is maximized when $Z_l = 1$.

    Since $Z_l = 1$ yields the maximum total performance for any realization of $y$, it remains optimal after taking expectation over the elimination cutoff $y \sim X^{(n-m)}$.

    In conclusion, when $m$ is fixed, the optimal prize structure must be $V_1=1$, i.e., a winner-take-all contest.
\end{proof}

\subsection{Proof of Proposition~6}

We first introduce a technical lemma:
\begin{applemma}\label{lem:SimpleSame}
For any number of registered contestants \( n \) and ability distribution \( F \), when the prize structure is a simple contest with \( l \) prizes (\( l < m \)), i.e., \( V_1 = \ldots = V_l > V_{l+1} = 0 \), the total performance remains constant regardless of the shortlist size \( m \). This total performance is given by \( Z_l\mathbb{E}_{X^{(n-l)}}[X] \), where \( Z_l = lV_l \).
\end{applemma}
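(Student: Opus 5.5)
Fix the simple contest $V_1=\dots=V_l=Z_l/l$, $V_{l+1}=\dots=V_m=0$, with $l<m$. The plan is to compute the expected total performance conditional on the cut-off $y=x^{(n-m)}$, and then average over $y$ using order statistics.

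\textbf{Conditional expectation given $y$.} The computation in the proof of Proposition~\ref{prop:TotalWTA} carries over directly. By Lemma~\ref{lem:EqExpSum}, the conditional total equals $m\,\mathbb{E}_{X\sim P_y}[b(X;y)]$. Exchanging the order of integration as in that proof, only the single gap $V_l-V_{l+1}=Z_l/l$ survives, and one obtains
\[
m\,\mathbb{E}_{X\sim P_y}[b(X;y)] = Z_l\int \binom{m-1}{l}(1-P_y(t))^{l}P_y(t)^{m-l-1}p_y(t)\,t\,dt .
\]
Since $m\binom{m-1}{l}=\binom{m}{l}(m-l)$ is the density coefficient of the $(m-l)$-th order statistic out of $m$, this gives
\[
\frac{Z_l}{m}\cdot m\cdot\frac{1}{?}\ldots
\]
More carefully, the expression equals $Z_l\,\mathbb{E}\big[Y^{(m-l)}\big]$ times the ratio $\frac{m\binom{m-1}{l}}{m\binom{m-1}{m-l-1}}=1$. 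Here $Y^{(1)}<\dots<Y^{(m)}$ are the order statistics of $m$ i.i.d.\ draws from $P_y$, and $Y^{(m-l)}$ is the $(l+1)$-th largest of the $m$ truncated draws. I will double-check this constant against the line ``$=\frac1m\sum Z_l\mathbb{E}[\cdot]$'' in the previous proof. The factor $m$ from Lemma~\ref{lem:EqExpSum} cancels the $\frac1m$ there, so the conditional total is $Z_l\,\mathbb{E}[Y^{(m-l)}\mid y]$, with no dependence on $m$ beyond the conditioning.

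\textbf{Removing the conditioning.} By Lemma~\ref{lem:Truncate}, $Y^{(m-l)}$ given $y$ has the law of $X^{(n-l)}$ conditional on $X^{(n-m)}=y$. Indeed, the $(m-l)$-th smallest of the top $m$ is the overall index $n-m+m-l=n-l$. Taking expectation over $y\sim X^{(n-m)}$ and applying the tower property gives $\mathbb{E}[X^{(n-l)}]$. This is the unconditional $(l+1)$-th highest ability, which is independent of $m$. Hence the total performance is $Z_l\,\mathbb{E}_{X^{(n-l)}}[X]$ for every $m>l$.

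\textbf{Main obstacle.} The main difficulty is bookkeeping of the combinatorial constants and order-statistic indices. I must verify that $\binom{m-1}{l}(1-P)^lP^{m-l-1}p$ equals exactly $\frac1m f_{m-l,m}$, and that the order statistic is $P^{(m-l)}$ rather than $P^{(m-l+1)}$. I also need to confirm that the lower integration limit $y$ is harmless, since $p_y$ vanishes below $y$. Everything else follows from the previously established lemmas.
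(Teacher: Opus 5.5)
Your proof is correct. Its first half matches the paper: Lemma~\ref{lem:EqExpSum} plus the integration-order swap from the proof of Proposition~\ref{prop:TotalWTA} give the conditional total $Z_l\,\mathbb{E}[Y^{(m-l)}\mid y]$. Your first display is off by a factor $m$ on the right-hand side, and the garbled line should be deleted, but the conclusion you then state is right.

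The routes split when you remove the conditioning on the cut-off. The paper does it by direct computation. It writes out the densities of $P^{(m-l)}$ and $X^{(n-m)}$, swaps the integrals, substitutes $t=F(y)$, and evaluates the inner integral with Lemma~\ref{lem:betaInt}. It then checks that the constants collapse to $n\binom{n-1}{n-l-1}$, the density coefficient of $X^{(n-l)}$.

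You instead use Lemma~\ref{lem:Truncate}: given $y$, $Y^{(m-l)}$ has the law of $X^{(n-m+(m-l))}=X^{(n-l)}$ conditional on $X^{(n-m)}=y$. This index lies among the top $m$ because $l<m$, so the tower property finishes the proof. The paper's calculation is, in effect, re-deriving this marginalization identity by hand. Your version is shorter, needs no combinatorial bookkeeping, and shows why $m$ drops out: given the cut-off, total performance is $Z_l$ times the conditional mean of the $(l+1)$-th highest ability, and averaging over the cut-off returns its unconditional mean. The paper's route, in exchange, is self-contained and verifies the constants explicitly.

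For $m=n$ there is no eliminated contestant, so Lemma~\ref{lem:Truncate} (which needs $n-m\ge 1$) does not apply. The identity is then immediate with $P=F$; the paper's density formula for $X^{(n-m)}$ likewise only covers $m<n$.
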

\begin{proof}
    For any simple contest with shortlist size \( m \) and cut-off score \( y \), the total performance is $\mathbb{E}_{X \sim P^{(m-l)}}[X]$. Therefore, the ex-ante total performance expands as:
\[
\begin{aligned}
& \textstyle \int_0^\infty\left(\int_y^\infty x \, dP^{(m-l)}(x)\right) \, dF^{(n-m)} (y)\\
= & \textstyle \int_0^\infty\left(\int_y^\infty x m \binom{m-1}{m-l-1} P(x)^{m-l-1}(1-P(x))^{l}p(x) \, dx\right) \,  n \binom{n-1}{n-m-1} F(y)^{n-m-1}(1-F(y))^m f(y) \, dy \\
= & \textstyle \int_0^\infty \bigg ( \int_y^\infty x m \binom{m-1}{m-l-1} ( \frac{F(x)-F(y)}{1-F(y)} )^{m-l-1} (\frac{1-F(x)}{1-F(y)})^{l}\frac{f(x)}{1-F(y)} \, dx\bigg ) \, n \binom{n-1}{n-m-1} F(y)^{n-m-1}(1-F(y))^m f(y) \, dy \\
= & \textstyle n \binom{n-1}{n-m-1}m \binom{m-1}{m-l-1} \int_0^\infty \bigg ( \int_0^x  [F(x)-F(y)]^{m-l-1} F(y)^{n-m-1} f(y) \, dy \bigg ) \, x [1-F(x)]^l f(x) \, dx \\
= & \textstyle n \binom{n-1}{n-m-1}m \binom{m-1}{m-l-1} \int_0^\infty \bigg( \int_0^{F(x)}  [F(x)-t]^{m-l-1}  t^{n-m-1} \, dt \bigg) x [1-F(x)]^l f(x) \, dx, 
\end{aligned}
\]
where the last equality uses the substitution \( t = F(y) \).

By Lemma~\ref{lem:betaInt}, the inner integral evaluates to $\frac{(m-l-1)!(n-m-1)!}{(n-l-1)!}F(x)^{n-l-1}$. The total coefficient becomes:
\[
\begin{aligned}
    & \textstyle n \frac{(n-1)!}{(n-m-1)!m!} \frac{(m-l-1)!(n-m-1)!}{(n-l-1)!} m \frac{(m-1)!}{(m-l-1)!l!} 
    \\ \textstyle = & \textstyle n \frac{(n-1)!}{(n-l-1)!l!} = n \binom{n-1}{n-l-1}
\end{aligned}
\]
Substituting back into the original expression yields:
$$
\begin{aligned}
& \textstyle \int_0^\infty x \left( n \binom{n-1}{n-l-1} F(x)^{n-l-1} (1-F(x))^{l} f(x)\right) \, dx \\
= & \textstyle \int_0^\infty x \, dF^{(n-l)} = \mathbb{E}_{X^{(n-l)}}[X].
\end{aligned}
$$

Thus, regardless of the shortlist size, the ex-ante total performance equals the expected ability value of the \((l+1)\)-th ranked participant.
\end{proof}

    We now ready to prove Proposition~6.
\begin{proof}
    Let \( Z_l = l(V_l-V_{l+1}) \). Under the condition that the cost function \( g \) is linear and the objective function is linear in performance, the effects of different adjacent award pairs \( (V_l, V_{l+1}) \) on performance are additive. Therefore, \( Z_l \) is also additive, and we can consider the impact of each \( Z_l \) separately on performance. 
    
    A contest with only \( Z_l > 0 \) is equivalent to a simple contest where \( V'_1 = \ldots = V'_l = \frac{Z_l}{l} > V'_{l+1} = 0 \). According to Lemma~\ref{lem:SimpleSame}, the corresponding total performance is \( Z_l\mathbb{E}_{X^{(n-l)}}[X] \). Thus, the total performance for contest \( (m, \vec{V}) \) is \( \sum_{l=1}^{m-1} Z_l\mathbb{E}_{X^{(n-l)}}[X] \), which is independent of the shortlist size.
\end{proof}

\subsection{Proof of Theorem~6}
\begin{proof}
    It follows directly from Propositions~\ref{prop:TotalWTA} and Proposition~\ref{prop:Allsame} that when pre-selection is applied (i.e., \(m < n\)), any winner-take-all contest achieves the same total performance, \( \mathbb{E}_{X^{(n-1)}}[X] \). Furthermore, the optimal contest without pre-selection, the \( n \)-contestant winner-take-all contest—also attains total performance \( \mathbb{E}_{X^{(n-1)}}[X] \) \cite{MS01}, completing the proof.
\end{proof}

\section{A Non-Optimality Example for (\(m=2\))}\label{apx:non-optimal}

The numerical evidence in the main text suggests that pre-selection combined with a two-contestant winner-take-all contest is highly robust across a wide range of common ability distributions. However, this property is not universal. We construct here a counterexample showing that in finite populations the optimal shortlist size may exceed two under extreme heterogeneity.

Consider a highly skewed mixture distribution:
\[
X \sim \gamma \cdot \text{Beta}(40,1) + (1-\gamma)\cdot U[0,0.05], \quad \gamma = 0.01.
\]
With probability \(1-\gamma\), abilities are concentrated very close to zero, while a small fraction of contestants are drawn from a distribution sharply peaked near one. This creates a large gap between the extreme upper tail and the bulk of the population.

\textbf{Intuition.}
The mixture distribution places almost all contestants at very low ability while a tiny fraction are extreme outliers. For moderate \(n\), several high-types may coexist. With shortlist size \(m=2\), the cut-off is sometimes determined by another high-type, and its disclosure reveals that the remaining opponent is also strong. This sharp signal reduces uncertainty and allows the top contestant to lower effort.

In contrast, when \(m\) is slightly larger, the cut-off is almost surely near zero due to the mass of weak contestants. The disclosed score then carries little information about shortlisted rivals, increasing uncertainty and forcing the strongest contestant to exert higher effort. Hence larger \(m\) can outperform \(m=2\) in finite populations under extreme heterogeneity.

\begin{figure}[ht]
    \centering
    \includegraphics[width=\columnwidth]{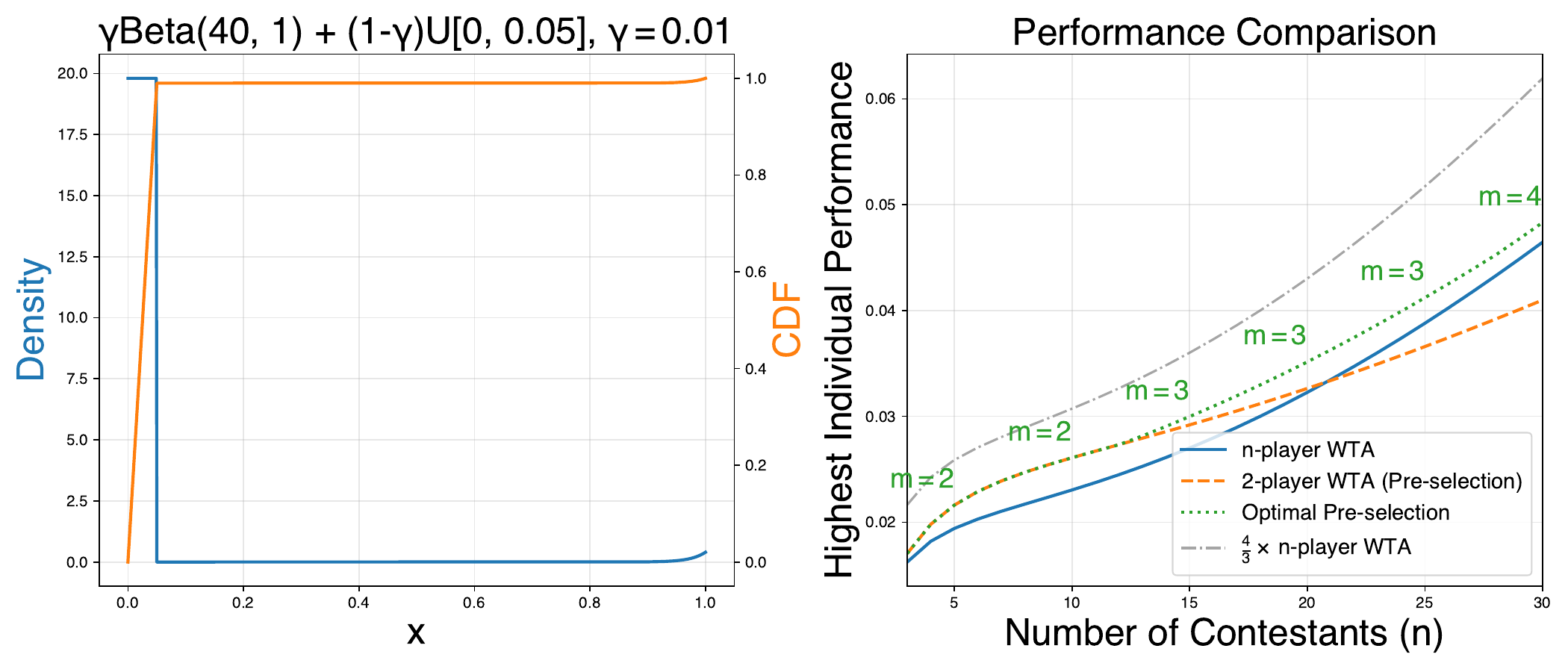}
    \caption{Highest individual performance under a skewed mixture ability distribution (\(X \sim 0.01,\text{Beta}(40,1) + 0.99,U[0,0.05]\)). In finite populations, shortlist sizes (\(m>2\)) can outperform the two-contestant WTA mechanism, illustrating distribution-dependent non-asymptotic optimality.}
    \label{fig:non-opt}
\end{figure}

\textbf{Numerical Illustration.}
Figure~\ref{fig:non-opt} plots the highest individual performance under different shortlist sizes. We observe that for moderate \(n\), the optimal mechanism shifts from \(m=2\) to \(m=3\) or \(m=4\), demonstrating that the universal optimality of the two-contestant WTA contest is distribution-dependent and may fail under special instance.

\end{document}